\definecolor{myurlcolor}{rgb}{0,0,0.7}
\definecolor{myrefcolor}{rgb}{0.8,0,0}
\DeclarePairedDelimiter\floor{\lfloor}{\rfloor}
\newcommand{\beq}[0]{\begin{equation}}
\newcommand{\eeq}[0]{\end{equation}}
\newcommand{\bw}[0]{\begin{widetext}}
\newcommand{\ew}[0]{\end{widetext}}
\newcommand{\bc}[0]{\begin{center}}
\newcommand{\ec}[0]{\end{center}}
\newcommand{\bwn}[0]{\begin{widetext}\begin{eqnarray}}
\newcommand{\ewn}[0]{\end{eqnarray}\end{widetext}}
\newcommand{\beqn}[0]{\begin{eqnarray}}
\newcommand{\eeqn}[0]{\end{eqnarray}}
\newcommand{\uroj}[0]{\mathrm{i}}
\newcommand{\eksp}[0]{\mathrm{e}}
\newcommand{\proj}[1]{|#1\rangle \langle #1|}
\newcommand{\ket}[1]{|#1\rangle}
\newcommand{\bra}[1]{\langle #1 |}
\newcommand{\outerp}[2]{\ket{#1}\! \bra{#2}}
\newcommand{\tr}[0]{\mathrm{tr}}
\newcommand{\jedynka}[0]{\mathbbm{1}}
\newcommand{\non}[0]{\nonumber\\}
\def\calA{{\cal A}}
\def\calG{{\cal G}}
\def\calH{{\cal H}}
\def\calP{{\cal P}}
\def\calQ{{\cal Q}}
\def\calR{{\cal R}}
\def\calS{{\cal S}}
\def\frakF{{\frak F}}
\def\frakS{{\frak S}}
\def\frakX{{\frak X}}
\newtheorem{thm}{Theorem}
\newtheorem{lem}[thm]{Lemma}
\newtheorem{defi}[thm]{Definition}
\newcommand{\beu}{\begin{equation}}
\newcommand{\eeu}{\end{equation}}
\newcommand{\be}{\begin{eqnarray}}
\newcommand{\ee}{\end{eqnarray}}
\newcommand{\ba}{\begin{array}}
\newcommand{\ea}{\end{array}}
\newcommand{\cee}[1]{\mathbb{C}^{#1}}
\newcommand{\tylda}{\tilde{d}}
\newcommand{\ii}[2]{\textbf{i}_{#1,#2}}
\newcommand{\iii}[2]{\textbf{i}_{#1,#2}\boxplus \textbf{1}}
\newcommand{\aaa}[2]{\textbf{A}_{#1,#2}}
\definecolor{Gray}{gray}{0.9}
\begin{document}
\title{Entanglement of genuinely entangled subspaces and states:\\ exact, approximate, and numerical results }
\author{Maciej Demianowicz}
\affiliation{{\it\small Atomic and Optical Physics Division, Department of Atomic, Molecular and Optical Physics, Faculty of Applied Physics and Mathematics,
Gda\'nsk University of Technology, Narutowicza 11/12, 80–-233 Gda\'nsk, Poland}}
\author{Remigiusz Augusiak}
\affiliation{{\it\small Center for Theoretical Physics, Polish Academy of Sciences, Aleja Lotnik\'ow 32/46, 02-668 Warsaw, Poland}}

\begin{abstract}
Genuinely entangled subspaces (GESs) are those subspaces of multipartite Hilbert spaces that consist  only of  genuinely multiparty entangled  pure states. They are natural generalizations of the well-known notion of completely entangled subspaces , which by definition are void of fully product vectors. Entangled subspaces are an important tool of quantum information theory as they directly lead to constructions of entangled states, since any state supported on such a subspace is automatically entangled. Moreover, they have also proven useful  in the area of quantum error correction.
In our recent contribution [M. Demianowicz and R. Augusiak, Phys. Rev. A \textbf{98}, 012313 (2018)], we have 
 studied  the notion of a GES qualitatively in relation to so--called nonorthogonal unextendible product bases
 and provided a few  constructions of such subspaces.
 The main aim of the present work is to perform a quantitative study of the entanglement properties of GESs. First, we show how one can attempt to compute analytically the subspace entanglement, defined as the entanglement of the least entangled vector from the subspace, of a GES and
illustrate our method by applying it to a new class of GESs. 
 Second, we show that certain semidefinite programming relaxations can be exploited to estimate the entanglement of a GES and apply this observation to a few classes of GESs revealing that in many cases the method provides the exact results. Finally, we study the entanglement of certain states supported on GESs, which is compared to the obtained values of the entanglement of the corresponding subspaces, and find the white--noise robustness of several GESs. In our study we use the (generalized) geometric measure as the quantifier of entanglement.
\end{abstract}
\maketitle
\section{Introduction} 

Genuinely multiparty entangled (GME) states, that is states not displaying any form of separability and as such representing the strongest form of entanglement in many body systems, have become an important resource in many information processing protocols over the recent years (see, e.g., \cite{toth-metro,grover-gme,Epping-qkd}).
Due to their significance, there has been a tremendous amount of  research in the literature aimed at understanding their properties (see, e.g.,  \cite{Osterloh-2014,goyeneche-karel,polacos-gme-local,BrunnerExp2017}). While there  has been a lot of progress in the area, still many facets of entanglement in systems of many particles have remained unexplored or less studied.
In an effort to contribute to this line of research, we have recently proposed to analyze in more detail subspaces that only consist of GME states; we called them genuinely entangled subspaces (GESs) \cite{upb-to-ges} (see also \cite{schmidt-rank}). They are the natural analogs of the well--studied completely entangled subspaces (CESs), which are void of fully product vectors \cite{ces-partha,ces-bhat}. Entangled subspaces comprise a particularly important tool of quantum information theory as they allow for general constructions of entangled states, since any state supported on such a subspace is necessarily entangled. Importantly, in the case of GESs such constructed states are GME. Furthermore, particular classes of  entangled subspaces --- perfectly entangled, or $k$--totally entangled, $k$--uniform, ones  \cite{ces-partha} --- have found an application in quantum error correction (QEC) \cite{GourWallach,zahra,ball,felix-arxiv,AME-alsina}. The case $k=\floor{n/2}$ corresponds to certain types of GESs and it is directly related to the notion of absolutely maximally entangled (AME) states  (see, e.g., \cite{Helwig,dardo-AME}).

The attempt at a characterization of GESs made by us in \cite{upb-to-ges} was qualitative, in the sense that we have only considered the problem of their general constructions in setups with an arbitrary number of parties holding subsystems of arbitrary local dimensions (see \cite{Wang2019-ges} for recent advances). This has been linked with the notion of the unextendible product bases \cite{upb-bennett}, another very powerful tool with diverse applications (see, e.g., \cite{DeRinaldis,Duan2010,upb-prl}). Clearly, however, the quantitative description of GESs (or, more generally, any subspaces) is also vital, as it provides a means of comparing them and potentially deciding on their usefulness in certain tasks, in particular those where the amount of entanglement is the figure of merit. So far, this problem has not been considered in the literature (albeit see \cite{ManikBanik-ges}, where the distillability across bipartite cuts has been investigated) and the present paper aims at filling this gap.

There are two main approaches to the problem of quantifying entanglement of a subspace \cite{average-ent,average-Gconcurrence,GourWallach} (see, e.g., \cite{Gour-max-ent} for other ways). In the first one, one asks about the average entanglement over all pure states in a subspace \cite{average-ent,average-Gconcurrence}, in the other, the question is how much entangled is the least--entangled vector in a subspace (see, e.g., \cite{GourWallach}). While both appear equally significant, it is the second one we pursue in the present paper as our main method. From the practical point of view, this approach is relevant for scenarios, in which entangled states drawn from a subspace are a resource and one is interested in the estimation of the worst--case performance of the protocol. In our study we use the geometric measure (GM) of entanglement \cite{Shimony-geometric,BL-geometric,WeiGoldbart}  and its variant, the generalized geometric measure (GGM) \cite{GGM-multiparty}, suitable for GME detection. This choice is motivated by their usefulness in different areas of quantum information theory (see, e.g., \cite{Markham-2007,orus-2008,hayashi-gm-application,gross-to-entangled}).
We present two general methods of computation of the entanglement of a subspace as measured by the geometric measures and show their applicability on a new class of $N$--partite GESs in a $\cee{2}\otimes (\cee{d})^{\otimes (N-1)}$ setup. In particular, we find analytically the GGM of these subspaces for arbitrary $N$ and $d$. Further, the choice of  measures allows us to lower bound the subspace entanglement using semidefinite programming (SDP) relaxations, which in many cases turn out to provide the exact results.
In addition, we also consider other approaches to the problem and analyze
the entanglement properties of states, which are normalized projections on GESs, and investigate
 the white--noise tolerance of such states. These two additional quantifiers, although not standard, are expected to convey some supplementary information about the entanglement of a subspace.
 In this part of our research, we again use the SDP relaxations but also some other tools such as entanglement witnesses, the PPT mixtures, and a direct numerical algorithm for approximating the geometric measures.

The paper is organized as follows. In the preliminary section, Sec. \ref{pre}, we introduce the notation and the terminology. Then, in Sec. \ref{splatanie-gesy}, we recall the definition of the entanglement of a subspace and present two methods of its computation. In Sec. \ref{ent-new-ges}, we apply these methods to find the entanglement of a new class of GESs. Section \ref{sdp-boundy} puts forward SDP bounds on the entanglement of a subspace and investigates their performance for a few classes of GESs. In Sec. \ref{different-methods}, we turn our attention to other methods of quantifying the subspace entanglement, namely, the entanglement of normalized projections on GESs and the white--noise tolerances of such states, and consider several methods of their computation.
We conclude in Sec. \ref{konkluzje}, where we also state some open questions and propose future research directions.

\section{Preliminaries}\label{pre}

In this section we briefly introduce the necessary terminology and the notation.

\paragraph{Notation.} In the paper we deal with finite--dimensional $N$-partite product Hilbert spaces
$\mathbbm{C}^{d_1}\otimes\ldots\otimes \mathbbm{C}^{d_N}$, 
with $d_i$ standing for the dimension of the local Hilbert space of system $A_i$;  the shorthand 
$\textbf{A}:= A_1A_2\dots A_N $ denotes all subsystems. Pure states are denoted as $\ket{\psi},\ket{\varphi},\cdots$, with subscripts corresponding to respective subspaces if necessary, e.g., $\ket{\psi}_{A_1A_2\dots}$. The same convention applies to mixed states, that is we write, e.g., $\rho_{A_1A_2}$ for a state with subsystems held by $A_1$ and $A_2$. For few subsystems the denotations $A$, $B$, \dots, will be used. The standard notation for tensor products of basis vectors is employed: $\ket{ij}=\ket{i,j}:=\ket{i}\otimes \ket{j}$.

\paragraph{Entanglement.} An $N$--partite pure state $\ket{\psi}_{A_1\dots A_N}$ is called {\it fully product} if it is possible to write it as 
\begin{equation}
\ket{\psi}_{A_1\cdots A_N}=\ket{\varphi}_{A_1}\otimes\cdots \otimes\ket{\xi}_{A_N}.
\end{equation}
Otherwise it is said to be {\it entangled}. An important class of such states is  genuinely multiparty entangled  ones. A multipartite pure state is called {\it genuinely multiparty entangled} (GME) if 
\begin{equation}
\ket{\psi}_{A_1\cdots A_N}\ne\ket{\varphi}_{K}\otimes \ket{\phi}_{\bar{K}}
\end{equation}
for any bipartite cut (bipartition) $K | \bar{K}$, where $K$ is a subset of $\textbf{A}$ and $\bar{K}:=\textbf{A}\setminus K$ denotes the rest of the parties.
A paradigmatic example of such a state is the $N$-qubit Greenberger-Horne-Zeilinger (GHZ) state:
%
$\ket{\mathrm{GHZ}_N}=(1/{\sqrt{2}})\left(\ket{0}^{\otimes N}+\ket{1}^{\otimes N}\right)$.
%

States which are not GME, i.e., do admit the form   $\ket{\psi}_{A_1\cdots A_N} = \ket{\varphi}_{K}\otimes \ket{\phi}_{\bar{K}}$, are called {\it biproduct}. It then follows that fully product states are a subclass of the biproduct ones. Let us finally stress that within this terminology a biproduct state is entangled if it is not  fully product.

Generalization of these concepts to the mixed state domain is nontrivial. A mixed state $\rho_{\textbf{A}}$ is said to be {\it fully separable} if it admits the form
\beqn
\rho_{\textbf{A}}=\sum_i p_i \varrho_{A_1}^{(i)}\otimes  \dots \otimes \gamma_{A_N}^{(i)}.
\eeqn
A state which is not fully separable is {\it entangled}.
An entangled multipartite mixed state is called {\it genuinely multiparty entangled} (GME) if
\beqn
\rho_{\textbf{A}} \ne \sum_{K|\bar{K}} p_{K|\bar{K}} \sum_i   q_{K|\bar{K}}^{(i)} \varrho^{(i)}_{K} \otimes \sigma^{(i)}_{\bar{K}},
\eeqn
where the first sum goes over all  bipartitions of $\textbf{A}$.
If a state does admit the decomposition as above, it is called
{\it biseparable}. Just as previously in the case of pure states, we emphasize that biseparable states may be entangled.

\paragraph{Genuinely entangled subspaces.} There exist subspaces composed solely of entangled pure states; they are called completely entangled subspaces (CESs) \cite{ces-partha,ces-bhat}.
 This notion is naturally generalized to the case of GME. Formally, a  subspace $\calG$ of a multipartite Hilbert space is called a genuinely entangled subspace (GES) if all $\ket{\psi}\in \calG$ are genuinely entangled \cite{upb-to-ges} (see also \cite{schmidt-rank,ces-partha}). A simple example of a two-dimensional GES is the subspace spanned by the $W$ state, $\ket{W}=1/\sqrt{N}(\ket{00\dots 001}+\ket{00\dots 010}+\dots \ket{10\dots 000})$, and its complement $\tilde{W}$,  $\ket{\bar{W}}=\sigma_x^{\otimes N} \ket{W}$ \cite{WW-GES}. 
%
%
A few general constructions of higher dimensional GESs have been recently given in \cite{upb-to-ges}, where the notion has been linked to the notion of the unextendible product bases. In fact, the subspaces constructed there are our test--ground cases in the present paper. They are introduced in further parts of the present paper. 

An important observation regarding CESs and GESs is that mixed states supported on them are, respectively, entangled and genuinely multiparty entangled. As such they provide an important tool to construct (genuinely) entangled mixed states.

\section{Entanglement of genuinely entangled subspaces: definition and methods of computation}\label{splatanie-gesy}

Following  \cite{GourWallach}, we define the {\it entanglement of a subspace} $\calS$ (or the {\it subspace entanglement} of $\calS$), as measured by $E$, through
\beqn\label{min-subspace}
E(\calS)=\min_{\ket{\psi}\in \calS} E(\ket{\psi}),
\eeqn
where $E$ is a measure of multipartite entanglement.
Importantly, if this measure is chosen to be non--zero exclusively on GME states, $E(\calS)$ will be a quantifier of genuine entanglement of a subspace.

We pick the geometric measure (GM)  and the generalized geometric measure (GGM) of entanglement as the quantifiers $E$ in our further considerations.  For pure states the GM is defined as \cite{BL-geometric,Shimony-geometric}
\begin{equation}
E_{GM}(\ket{\psi})=1-\max_{\ket{\psi_{\mathrm{prod}}}}|\langle\psi_{\mathrm{prod}}|\psi\rangle|^2
\end{equation}
with the maximization  performed over fully product vectors.
%
Only a slight modification is needed to make this measure quantify solely genuine multipartite entanglement. Namely, one defines the generalized geometric measure  (GGM) of entanglement \cite{Senowie-GGMoE}:
\begin{equation}
E_{GGM}(\ket{\psi})=1-\max_{\ket{\psi_{\mathrm{biprod}}}}|\langle\psi_{\mathrm{biprod}}|\psi\rangle|^2
\end{equation}
with the maximization this time over all pure states that are biproduct. It is obvious that it serves the purpose. The GGM has been shown to be analytically computable for pure states \cite{Senowie-GGMoE}.

For a mixed sate $\rho$, the (G)GM is defined through the standard convex roof construction, that is 
\beqn
E_{(G)GM}(\rho)=\min_{\{p_i,\ket{\psi_i}\}} \sum_i E_{(G)GM}(\ket{\psi_i}),
\eeqn
where the minimum is computed over all pure state ensembles of the state, i.e., $\rho=\sum_i p_i \proj{\psi_i}$.

Clearly, for any subspace $\calS$ it holds that
\beqn \label{state-vs-subspace}
E_{(G)GM} (\rho) \ge E_{(G)GM} (\calS), \quad \mathrm{supp} (\rho) \subseteq \calS,
\eeqn
where $\mathrm{supp} (\rho)$ is the support of the density matrix $\rho$. This property actually holds for any entanglement measure extended from pure states to the mixed--state domain by the convex roof.

Following \cite{Branciard}, we rewrite the right--hand side of (\ref{min-subspace}) for the present choice of the measure to a form useful for further treatment:

\begin{eqnarray}\label{przeksztalcenie}
E_{(G)GM}(\calS) &\equiv& 
\min_{\ket{\psi}\in \calS}E_{(G)GM}(\ket{\psi})\non &=&\min_{\ket{\psi}\in \calS}\left(1-\max_{\ket{\psi_{\mathrm{(bi)prod}}}}|\langle\psi_{\mathrm{(bi)prod}}|\psi\rangle|^2\right)\nonumber\\&=&1-\max_{\ket{\psi_{\mathrm{(bi)prod}}}}\max_{\ket{\psi}\in \calS}|\langle\psi_{\mathrm{(bi)prod}}|\psi\rangle|^2\nonumber\\
&=&1-\max_{\ket{\psi_{\mathrm{(bi)prod}}}}\langle\psi_{\mathrm{(bi)prod}}|\calP_{\calS}|\psi_{\mathrm{(bi)prod}}\rangle\nonumber\\
&=&\min_{\ket{\psi_{\mathrm{(bi)prod}}}}\langle\psi_{\mathrm{(bi)prod}}|\calP_{\calS}^{\perp}|\psi_{\mathrm{(bi)prod}}\rangle,
\end{eqnarray}
where $\calP_{\calS}$ projects onto $\calS$
and $\calP_{\calS}^{\perp}$ -- onto $\calS^{\perp}$, that is $\calP_{\calS}+\calP_{\calS}^{\perp}=\jedynka$. The crucial transition from the third line to the fourth follows from the fact that for a given $\psi_{\mathrm{prod}}$, the vector maximizing the quantity will be the (normalized) projection of $\psi_{\mathrm{prod}}$ onto $\calS$. The great value of this reformulation lies in the fact that we now  have only one optimization to perform. 

Now, if $\calS$ contains (bi)product vectors then this quantity will simply  give  zero. However, in the opposite case, it is certainly nonzero, signifying (genuine) entanglement of $\calS$. 

Notice that the GGM of a subspace can also be written as:
\beq\label{ggm-gm-S}
E_{GGM}(\calS)=\min_{K|\bar{K}} E_{GM}^{K|\bar{K}}(\calS),
\eeq
where $E_{GM}^{K|\bar{K}}(\calS)$ denotes the GM of the subspace across a particular bipartition and the minimization is over all bipartitions.
 This emphasizes the fact that although we deal here with genuine entanglement the problem reduces to a repeated analaysis of a bipartite case. This is the great feature of the GGM making it computable in many cases.
 We will use this formulation in one of the proofs. We must stress, however, that this reasoning only applies to subspaces, not states.

In passing, we note that one could also consider "intermediate'' geometric measures, where instead of considering $N$-- product (fully product) or $2$--product (biproduct) vectors, the maximization is performed over $k$-- product vectors. This has been considered, e.g., in \cite{Blasone2008}.

\subsection{Method of projecting onto a subsystem} \label{projecting-method}

Here we describe a general method of computing the entanglement of a subspace based on the observation (\ref{przeksztalcenie}). We will refer to this method as the {\it method of projecting onto a subsystem}.

Let us first consider the case of the GGM and rewrite $E_{GGM}(\calS)$ more explicitly as:
\beqn
\hspace{-0.8cm}E_{GGM}(\calS)&&\non
&&\hspace{-1.cm}=\min _{K|\bar{K}} \min_{ \ket{ \varphi_K } \otimes \ket{ \bar{ \varphi }_{ \bar{K} } } } \bra{\varphi_K}\otimes \bra{\bar{\varphi}_{\bar{K}} } \calP_{\calS}^{\perp} \ket{\varphi_K}\otimes \ket{\bar{\varphi}_{\bar{K}}} \label{min}\\
&&\hspace{-1cm}= 1- \max _{K|\bar{K}} \max_{ \ket{ \varphi_K } \otimes \ket{ \bar{ \varphi }_{ \bar{K} } } } \bra{\varphi_K}\otimes \bra{\bar{\varphi}_{\bar{K}} } \calP_{\calS} \ket{\varphi_K}\otimes \ket{\bar{\varphi}_{\bar{K}}}, \label{max}
\eeqn
where $K|\bar{K}$ denotes a bipartition of the parties and $\ket{\varphi_K}$ and  $\ket{ \bar{ \varphi }_{ \bar{K} } }$ are vectors on $K$ and $\bar{K}$, respectively. We have reversed the order of representations of $E(\calS)$ in comparison to (\ref{przeksztalcenie}) because, in fact, the second form, i.e., (\ref{max}), will be more useful to us. This is due to the fact that we will more  easily write out a basis for $\calS$ than for $\calR$.  We note, however, that the method works equally well if one uses the representation (\ref{min}).

Defining the matrix
\beqn \label{matrix-S}
\hspace{-0.7cm}
\frakS_{\bar{K}}:= \left(\jedynka_K\otimes \bra{\bar{\varphi}_{\bar{K}} }\right)\calP_{\calS}\left(\jedynka_K\otimes \ket{\bar{\varphi}_{\bar{K}}}\right) \equiv \bra{\bar{\varphi}_{\bar{K}} }\calP_{\calS}\ket{\bar{\varphi}_{\bar{K}}},
\eeqn
we can rewrite Eq. (\ref{max}) as
\beqn\label{ggm-lambda}
E_{GGM}(\calS)=1- \max_{K|\bar{K}}\lambda_{\max}(\frakS_{\bar{K}}),
\eeqn
where $\lambda_{\max}$ is the largest eigenvalue maximized over all choices of $\tilde{\varphi}_{\bar{K}}$ of the matrix $\frakS_{\bar{K}}$ for the cut $K|\bar{K}$. 
The problem of the computation of the entanglement of a subspace has thus been reduced to the problem of the computation of the maximal eigenvalues of certain matrices and then picking the largest among them.

Clearly, one could choose the subsystem $K$ to perform the projection on in (\ref{matrix-S}). Which subsystem we choose in practice is dictated by the simplicity of the resulting computations of the largest eigenvalues.

In the case of the GM in which the optimization is over fully product vectors one defines the counterpart of the matrix $\frakS_{\bar{K}}$ as:
\beqn\label{S-matrix-A}
\frakS_{A_i}:=\left(\bigotimes_{\substack{k=1\\ i\ne k}}^N\bra{\varphi_k}\right) \calP_{\calS} \left(\bigotimes_{\substack{k=1\\ i\ne k}}^N\ket{\varphi_k} \right),
\eeqn
where $\ket{\varphi_k}\in \calH_{A_k}$. In analogy to the case above, all $\frakS_{A_i}$'s must be considered and the maximum of the set of the largest eigenvalues of all such matrices must be found.

If the vectors spanning $\calS$ share some nice structural properties, the largest eigenvalues of $\frakS$'s can be found analytically. Otherwise, we need to resort to numerical calculations. We consider both situations in what follows.

\subsection{Seesaw iteration}\label{see-saw-method}

An alternative approach to the optimization problems above is a {\it seesaw} iteration, which is as follows. We start with an initial vector (the subscripts enumerate the step of the algorithm, the superscripts the number of the party): $\ket{\psi_0^{(1)}}\otimes \ket{\psi_0^{(2)}} \otimes \cdots \otimes \ket{\psi_0^{(N)}}$. This vector can be chosen at random. We then choose some small number $\varepsilon >0$ and construct the following matrix
\begin{equation}
S_{00\dots 0}^{(1)}:= \bra{\psi_0^{(2)}} \cdots\bra{\psi_0^{(N)}} \calP_{\calS} \ket{\psi_0^{(2)}} \cdots\ket{\psi_0^{(N)}}; 
\end{equation}
the eigenvector corresponding to its largest eigenvalue is set to be $\ket{\psi_1^{(1)}}$. Then the matrix  
\begin{equation}
S_{10\dots 0}^{(2)}:= \bra{\psi_1^{(1)}}\bra{\psi_0^{(3)}} \cdots\bra{\psi_0^{(N)}} \calP_{\calS}  \ket{\psi_1^{(1)}}\ket{\psi_0^{(3)}} \cdots\ket{\psi_0^{(N)}}
\end{equation}
is constructed; the eigenvector corresponding to its largest eigenvalue is set as $\ket{\psi_1^{(2)}}$. The procedure is repeated for all the parties to get the first approximation of the optimal product vector $\ket{\psi_1^{(1)}}\otimes \ket{\psi_1^{(2)}} \otimes \cdots \otimes \ket{\psi_1^{(N)}}$, which ends the first step of the algorithm. The output is accepted if 
\begin{eqnarray}
\hspace{-3cm}&&\bra{\psi_1^{(1)}} \cdots\bra{\psi_1^{(N)}} \calP_{\calS} \ket{\psi_1^{(1)}} \cdots\ket{\psi_1^{(N)}}\nonumber\\
&&-\bra{\psi_0^{(1)}} \cdots\bra{\psi_0^{(N)}} \calP_{\calS} \ket{\psi_0^{(1)}} \cdots\ket{\psi_0^{(N)}}< \varepsilon.
\end{eqnarray}
Otherwise, the next step is performed. The procedure is repeatedly used until the required precision $\varepsilon$ is reached. 
The algorithm needs to be run for a number of initial states to increase the chance of avoiding a local maximum. 

For subspaces with a nice structure one can expect to be able to perform some number of steps analytically in the iteration above.
The advantage of the seesaw approach in comparison to a direct optimization over parameters is its simplicity and the speed of execution.

Notice that the method of projecting onto a subsystem exposed in previous subsection can be seen as  a one-shot analytical seesaw method.

\section{Entanglement of a class of genuinely entangled subspaces of  $\cee{2}\otimes (\cee{d})^{\otimes (N-1)}$} \label{ent-new-ges}

In this section we apply our methods to a new class of $N$--partite genuinely entangled subspaces with one qubit subsystem and the rest $d$--level systems, i.e., a subspace of  $\cee{2}\otimes  (\cee{d})^{\otimes (N-1)}$.
For $N=2$ the subspace reduces to a completely entangled one and we begin with this basic case.

\subsection{A completely entangled subspace of $\cee{2}\otimes \cee{d}$} \label{ces-new}

The method of projecting onto a subsystem put forward above in Sec. \ref{projecting-method} is clearly applicable in this case with the simplification being that we only  deal  with the GM due the bipartite nature of the problem.

Let us introduce the relevant subspace. 

\begin{defi}\label{ces-def}
The subspace $\calS_{2\times d}^{\theta} \subset \cee{2}\otimes \cee{d}$ is given by the span of the following vectors:
\beqn\label{ces}
\ket{\phi_i}_{AB}=a \ket{0}_A\ket{\psi_i}_B+b\ket{1}_A\ket{\psi_{i+1}}_B,
\eeqn
$  i=0,1,\dots, d-2$,	with $a=\cos (\theta/2)$ and $b=\eksp^{\uroj \xi} \sin (\theta/2)$, $\theta \in (0,\pi)$, $\xi \in [ 0, 2\pi)$, and $\bra{\psi_i}\psi_j\rangle=\delta_{ij}$.
\end{defi}
 Clearly, $\dim \calS_{2\times d}^{\theta}=d-1$, which is also the maximal available dimension of a CES in this scenario \cite{ces-bhat,ces-partha}.

 The following result giving the entanglement of this subspace will serve as a basis for our further computations.

\begin{thm}\label{ces-ent}
	Let $d \ge 3$.
$\calS_{2\times d}^{\theta}$ is a CES with the subspace entanglement as measured by the GM  given by:
\beq\label{ces-gm}
\displaystyle
E_{GM}(\calS_{2\times d}^{\theta})=\frac{1}{2} \left( 1-
\sqrt{1-\sin^2\theta \sin^2 \left(\frac{ \pi}{d}\right) }   \right).
\eeq
In particular, for $a=b=1/\sqrt{2}$, i.e., $\theta=\pi/2$, the entanglement is
\beq
\displaystyle
E_{GM}\left(\calS_{2\times d}^{\pi/2}\right)=\frac{1}{2}\left(1-\cos \frac{\pi}{d}\right)=\sin^2  \frac{\pi}{2d}.
\eeq
\end{thm}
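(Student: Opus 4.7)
My plan is to apply the method of projecting onto a subsystem from Sec.~\ref{projecting-method}, projecting out the qubit party $A$ in order to work with a $d$-dimensional but well-structured matrix. Since the setup is bipartite, the GGM coincides with the GM and no bipartition minimisation is needed, so the task reduces to finding $E_{GM}(\calS_{2\times d}^\theta)=1-\max_{\ket{\varphi_A}}\lambda_{\max}(\frakS_A)$, where
\begin{equation}
\frakS_A=(\bra{\varphi_A}\otimes\jedynka_B)\calP_{\calS_{2\times d}^\theta}(\ket{\varphi_A}\otimes\jedynka_B).
\end{equation}
A quick check gives $\inner{\phi_i}{\phi_j}=\delta_{ij}$, so $\calP_{\calS_{2\times d}^\theta}=\sum_{i=0}^{d-2}\proj{\phi_i}$, and writing $\ket{\varphi_A}=\alpha\ket{0}+\beta\ket{1}$ the projection yields $\ket{\chi_i}\equiv(\bra{\varphi_A}\otimes\jedynka_B)\ket{\phi_i}=a\bar\alpha\ket{\psi_i}+b\bar\beta\ket{\psi_{i+1}}$, whence $\frakS_A=\sum_i\proj{\chi_i}=CC^\dagger$, with $C$ the $d\times(d-1)$ matrix whose columns are the $\ket{\chi_i}$'s.

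The main move is to pass from $CC^\dagger$ to the $(d-1)\times(d-1)$ Gram matrix $C^\dagger C$, which shares its nonzero spectrum with $\frakS_A$. A direct computation gives $(C^\dagger C)_{ij}=\inner{\chi_i}{\chi_j}=c\,\delta_{ij}+z\,\delta_{i,j-1}+\bar z\,\delta_{i,j+1}$ with $c=|a|^2|\alpha|^2+|b|^2|\beta|^2$ and $|z|=|a||b||\alpha||\beta|$. This is a Hermitian tridiagonal Toeplitz matrix, so after absorbing the phase of $z$ with a diagonal unitary its eigenvalues are given by the textbook formula $\mu_k=c+2|z|\cos(k\pi/d)$, $k=1,\dots,d-1$, and the largest is $\mu_1=c+2|z|\cos(\pi/d)$.

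It remains to maximise over the qubit state. Writing $p=|\alpha|^2$, $A=\cos^2(\theta/2)$, $B=\sin^2(\theta/2)$, this is the maximisation of $f(p)=Ap+B(1-p)+2\sqrt{ABp(1-p)}\cos(\pi/d)$. The substitution $p=\sin^2(\phi/2)$ turns $f$ into $\tfrac{1}{2}+\tfrac{B-A}{2}\cos\phi+\sqrt{AB}\cos(\pi/d)\sin\phi$, whose maximum is $\tfrac{1}{2}+\tfrac{1}{2}\sqrt{(A-B)^2+4AB\cos^2(\pi/d)}$; the identities $(A-B)^2=1-4AB$ and $4AB=\sin^2\theta$ collapse the square root to $\sqrt{1-\sin^2\theta\sin^2(\pi/d)}$, and $E_{GM}=1-\max_p f(p)$ delivers the stated formula (the $\theta=\pi/2$ corollary being immediate since $d\ge 3$ ensures $\cos(\pi/d)>0$). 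The main obstacle I anticipate is spotting the $CC^\dagger\leftrightarrow C^\dagger C$ reduction: $\frakS_A$ itself is tridiagonal but with its two corner diagonal entries modified and hence does not fit the Toeplitz eigenvalue formula directly, whereas the Gram matrix is cleanly Toeplitz of size $d-1$.
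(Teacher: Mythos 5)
Your proposal is correct, and it reaches the result by a genuinely different route from the paper's. The paper works directly with the $d\times d$ operator $\frakS_A=CC^\dagger$, which is tridiagonal but with its two corner diagonal entries perturbed ($\alpha$ and $\beta$ instead of $\alpha+\beta$); it then invokes the result of Losonczi, which applies precisely because of the ``crucial'' identity $\alpha\beta=|g|^2$, and carries out the final maximisation over $x_0$ by differentiating and solving a biquadratic equation. Your two departures are (i) passing to the $(d-1)\times(d-1)$ Gram matrix $C^\dagger C$, which is a clean Hermitian tridiagonal Toeplitz matrix with textbook spectrum $c+2|z|\cos(k\pi/d)$, and (ii) performing the scalar optimisation by the substitution $p=\sin^2(\phi/2)$ and the elementary bound on $u\cos\phi+v\sin\phi$. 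The first move is the more interesting one: it makes the proof self-contained (no appeal to the specialised eigenvalue formula for corner-perturbed tridiagonal matrices) and it explains structurally why that formula applies here at all --- the condition $\alpha\beta=|g|^2$ is exactly the statement that $\frakS_A$ factors as $CC^\dagger$ with $C$ bidiagonal, so the nonzero spectrum is that of a pure Toeplitz matrix of one lower dimension plus the single zero eigenvalue $\lambda_d=0$ that the paper lists separately. What the paper's formulation buys in exchange is direct reusability: the same corner-perturbed $d\times d$ form reappears verbatim as the blocks $R_0$ in the proof of Theorem~\ref{ges-ggm}, so establishing its spectrum once serves both arguments (though your Gram-matrix reduction would transfer there as well). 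Two small points you handle implicitly but correctly: the maximiser of the trigonometric expression has $\sin\phi^\ast>0$ because $\sqrt{AB}\cos(\pi/d)>0$ for $\theta\in(0,\pi)$ and $d\ge 3$, so it lies in the admissible range $\phi\in(0,\pi)$ and the boundary cases $p=0,1$ need no separate treatment; and the CES claim follows since the resulting value is strictly positive for $\theta\in(0,\pi)$.
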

\noindent
\begin{proof}
We omit the detailed proof here, which has been moved to Appendix \ref{app-proof-1} and only present the main ingredient.
The relevant matrix (\ref{S-matrix-A}) has the tridiagonal form:
\beqn \label{tridiagonal-x}
 \left( \begin{array}{cccccc}
	\alpha&g &0 & \cdots &0 &0\\
	g^* & 	\alpha+\beta&g& \cdots &0&0\\
	0& g^* &\alpha+\beta & \cdots &0&0\\
	\vdots & \vdots & \ddots & \ddots &\vdots & \vdots\\
	0& 0  &\ddots &\alpha+\beta&g &0\\
	0& 0 &\cdots & g^*  &\alpha+\beta &g\\
	0&0 &\cdots & 0&g^*  &\beta
\end{array}     \right),
\eeqn
with $\alpha= |a x_0|^2$ , $\beta= |b x_1|^2$, $g=  ax_1 (bx_0)^* $.
One finds its eigenvalues to be \cite{losonczi}:
\beqn \label{w-wlasne}
\lambda_k \left(x_0\right)&=& \alpha+\beta+2|g| \cos \frac{k \pi}{d}\\
&=&  |a x_0|^2 + |b x_1|^2 + 2 |x_0x_1ab| \cos \frac{k \pi}{d} \nonumber
\eeqn
for $k=1,2,\dots, d-1$ and $\lambda_d=0$. 	
The task is to find 
\beqn
\lambda_{\mathrm{max}}:=\max_{x_0,k} \lambda_k ( x_0)
\eeqn
and the optimization results in  (\ref{ces-gm}). \quad \quad\quad\quad\quad\quad\quad\quad\quad 
\end{proof}

Not surprisingly, the subspace entanglement is a decreasing function of $d$ with the maximum at $a=1/\sqrt{2}$ ($\theta=\pi/2$) for any $d$.  When $d=2$ (not covered by the theorem) we only have one vector, whose entanglement is immediately found to be $\min \{\sin^2 (\theta/2),\cos^2 (\theta/2)\}$. This is also the GM of any of the spanning vectors in (\ref{ces}). In Fig. \ref{ent-subspace-a} we plot $E_{GM}(\calS_{2\times d}^{\theta})$ as a function of $\theta$ for several values of the dimension $d$. 

\begin{figure}[h!]
	\includegraphics[height=5cm,width=8cm]{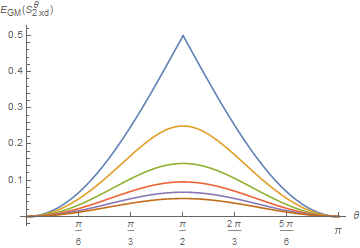}
	\caption{ $E_{GM}(\calS_{2\times d}^{\theta})$ as a function of $\theta$ for $d=3$ (second from the top curve), $4,5,6,7$ (lower most curve). For reference we also put the entanglement of the vector in the case of $d=2$, or equivalently, any vector in (\ref{ces}) (upper most curve).
	}\label{ent-subspace-a}
\end{figure}

\subsection{$N$ party GES $\calS_{2\times d^{N-1}}^{\theta}$: the case of the GGM}\label{th1-generalized}

Let us now move to the multiparty case. In this section we introduce a multipartite subspace, which is a generalization of the CES from Definition \ref{ces-def} to $N\ge 3$ parties, and compute its GGM.

\begin{defi}\label{ges-def}
The subspace $\calS_{2\times d^{N-1}}^{\theta} \subset \cee{2}\otimes (\cee{d})^{\otimes (N-1)}$ is given by the span of the following vectors:
\beqn\label{ges}
&&\ket{\Phi_{i_2 \dots i_N}} _{\textbf{A}}=\\
&&a \ket{0}_{A_1}\left( \bigotimes_{m=2}^N \ket{\psi_{i_m}^{(m)}}_{A_m}\right)+b\ket{1}_{A_1}\left( \bigotimes_{m=2}^N \ket{\psi_{i_m+1}^{(m)}}_{A_m}\right),\nonumber
\eeqn
$i_m=0, 1,\dots, d-2$, with $a=\cos (\theta/2)$, $b=\eksp^{\uroj \xi} \sin (\theta/2)$, $\theta \in (0,\pi)$, $\xi \in [0, 2\pi)$, and $\bra{\psi_{i_m}}\psi_{j_m}\rangle=\delta_{i_mj_m}$ for each $m$.
\end{defi}
One easily sees that $\dim \calS_{2\times d^{N-1}}^{\theta} = (d-1)^{N-1}$.  Note that the maximal dimension of a GES in this setup is $d^{N-1}-1$ \cite{upb-to-ges} and it would thus be interesting to see how these subspaces could be completed into  maximal GESs (but not with random vectors, which can always be done).

Using the results of the previous subsection, we now prove the main result of this part of the paper stating that the subspace entanglement of $\calS_{2\times d^{N-1}}^{\theta}$ measured by the GGM is the same as that of the CES $\calS_{2\times d}^{\theta}$ considered above, and the subspace is equally entangled across any cut with respect to this measure. Generally speaking, this equality is due to the fact that the relevant matrices whose eigenvalues need to be maximized bear the same structures as a result of the generalization.

 Clearly, the choice of $\psi$'s does not matter for the entanglement properties of the subspace as different choices for them are related through local unitary operations, which do not change entanglement measures.
  For the clarity of exposition, without loss of generality, we then set $\ket{\psi_{i_m}^{(m)}}_{A_m}=\ket{i_m}_{A_m}$ in the proof. The basis vectors (\ref{ges}) are in this case:
\beqn\label{baza-standardowa-ges}
\ket{\Phi_{i_2\dots i_N}}_{\textbf{A}}&= &a\ket{0}_{A_1}\ket{i_2, \dots, i_N}_{A_2 \dots A_N}\\ && \hspace{+0.3cm}+b\ket{1}_{A_1}\ket{i_2+1, \dots, i_N+1}_{A_2 \dots A_N}. \nonumber
\eeqn

We will then use the following denotations allowing us to keep the formulas cleaner:
\beqn
&& \aaa{k}{l}:=A_k, A_{k+1},\dots, A_l,\\
&&\ii{k}{l}:= i_k,i_{k+1},\dots, i_l \\
&&\iii{k}{l}:=i_k+1,i_{k+1}+1,\dots, i_l+1,\\
&&\sum_{\ii{k}{l}}:= \sum_{i_k,\dots,i_l=0}^{d-2}
\eeqn
for any $k<l$. In this notation the vectors (\ref{baza-standardowa-ges}) can be compactly written as:
\beqn\label{compact-basis}
\hspace{-0.4cm}\ket{\Phi_{\ii{2}{N}}}_{\textbf{A}}= a\ket{0}_{A_1}\ket{\ii{2}{N}}_{\aaa{2}{N}}+b\ket{1}_{A_1}\ket{\iii{2}{N}}_{\aaa{2}{N}}. 
\eeqn

Before we state the main result, let us give a simple lemma which will be crucial for its proof.

\begin{lem}\label{lema-orto}
	Given are operators $R_k$ with orthogonal supports. Let $R:=\sum_k R_k$. The largest eigenvalue of $R$ is given by $\lambda_{\mathrm{max}}(R)=\max_k \lambda_{\mathrm{max}}(R_k) $.
\end{lem}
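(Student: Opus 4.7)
My plan is to exploit the orthogonal-support hypothesis in order to bring $R$ into block-diagonal form, after which the claim is immediate from the fact that the spectrum of a direct sum equals the union of the component spectra. Set $V_k := \mathrm{supp}(R_k)$; by hypothesis the $V_k$ are pairwise orthogonal, so $\calH = \bigl(\bigoplus_k V_k\bigr) \oplus W$ for some residual subspace $W$. For the finite-dimensional Hermitian operators relevant here one has $\ker R_k = V_k^{\perp}$, so each $R_k$ annihilates $V_j$ for $j\ne k$ and also $W$. Consequently $R = \sum_k R_k$ is block-diagonal in this decomposition, with blocks $R_k|_{V_k}$ on $V_k$ and zero on $W$.

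Next I would read off the largest eigenvalue via the variational principle $\lambda_{\max}(R) = \max_{\|\psi\|=1} \langle\psi|R|\psi\rangle$. Decomposing a unit vector as $\ket{\psi} = \sum_k \ket{\psi_k} + \ket{\psi_W}$ with $\ket{\psi_k} \in V_k$ and $\ket{\psi_W} \in W$, the annihilation property above together with the pairwise orthogonality of the $V_k$'s eliminates all cross terms and leaves $\langle\psi|R|\psi\rangle = \sum_k \langle\psi_k|R_k|\psi_k\rangle$. In the intended applications of the lemma each $R_k$ is positive semidefinite, arising as a block of the restricted projector in (\ref{matrix-S}), so $\langle\psi_k|R_k|\psi_k\rangle \le \|\psi_k\|^2\,\lambda_{\max}(R_k)$. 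Combined with $\sum_k \|\psi_k\|^2 \le 1$ this yields the upper bound $\lambda_{\max}(R) \le \max_k \lambda_{\max}(R_k)$. The matching lower bound follows by picking $k^{*}$ that attains the outer maximum and taking $\ket{\psi}$ to be the top eigenvector of $R_{k^{*}}|_{V_{k^{*}}}$, viewed as a unit vector in $\calH$; by construction it is annihilated by every $R_j$ with $j\ne k^{*}$ and realizes $\langle\psi|R|\psi\rangle = \lambda_{\max}(R_{k^{*}})$.

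I do not anticipate any serious obstacle: the lemma is really just the statement that the spectrum of a direct sum of operators is the union of the summand spectra, repackaged in the language of orthogonal supports. The only point worth being explicit about is what is meant by 'support'; for the Hermitian positive semidefinite matrices we actually encounter in Sec.~\ref{projecting-method} the identification $\mathrm{supp}(R_k) = (\ker R_k)^{\perp}$ is automatic, so the block-diagonal reduction is unambiguous and the argument closes in essentially one line.
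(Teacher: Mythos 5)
Your proof is correct and simply supplies the standard block-diagonalization argument that the paper dismisses with ``This is obvious.'' Your parenthetical care about positivity is well placed: the stated equality can fail for Hermitian $R_k$ whose largest eigenvalues are all negative (a residual subspace $W$ would then contribute the eigenvalue $0$ to $R$), but since every $R_k$ in the paper is a sum of projectors and hence positive semidefinite, your argument closes exactly as written.
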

\begin{proof}
	This is obvious.
\end{proof}

\begin{thm}\label{ges-ggm}
$\calS_{2\times d^{N-1}}^{\theta}$ is a GES with the  subspace entanglement as measured by the GGM  given  by:
\beqn\label{ggm-of-ges}
\displaystyle
\hspace{-0.7cm} 
E_{GGM}(\calS_{2\times d^{N-1}}^{\theta})=\frac{1}{2} \left( 1-
\sqrt{1-\sin^2\theta \sin^2 \left(\frac{ \pi}{d}\right) }   \right).
\eeqn
Moreover, the entanglement of $\calS_{2\times d^{N-1}}^{\theta}$ is the same across any bipartite cut.
\end{thm}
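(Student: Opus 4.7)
The plan is to invoke the bipartite-reduction identity (\ref{ggm-gm-S}), $E_{GGM}(\calS_{2\times d^{N-1}}^{\theta})=\min_{K|\bar K}E_{GM}^{K|\bar K}(\calS_{2\times d^{N-1}}^{\theta})$, and to show that every term in this minimum already equals the right-hand side of (\ref{ggm-of-ges}). Since the basis (\ref{baza-standardowa-ges}) is symmetric under permutations of the subsystems $A_2,\ldots,A_N$ and since $E_{GM}^{K|\bar K}=E_{GM}^{\bar K|K}$, the GM across a cut depends on $K|\bar K$ only through $r:=|K\cap\{A_2,\ldots,A_N\}|$ (taking $A_1\in K$), so it suffices to treat one representative for each $r\in\{0,1,\ldots,N-2\}$, writing $s:=N-1-r=|\bar K|$; my target is to reduce every such case to Theorem~\ref{ces-ent}.

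I would next apply the method of Sec.~\ref{projecting-method} in the form (\ref{max}) by projecting onto the $K$-side. Expanding $\ket{\varphi_K}=\sum_{\mu\in\{0,1\},\vec i}c_{\mu,\vec i}\ket{\mu,\vec i}_{A_1 K'}$ and substituting (\ref{compact-basis}) into $\frakS_K=\bra{\varphi_K}\calP_{\calS}\ket{\varphi_K}$, a direct expansion shows that on $\bar K$ the only nonvanishing matrix elements of $\frakS_K$ are: a diagonal part taking the values $A_0:=|a|^{2}\sum_{\vec i\in\{0,\ldots,d-2\}^{r}}|c_{0,\vec i}|^{2}$ on $\vec j\in\{0,\ldots,d-2\}^{s}$ and $A_1:=|b|^{2}\sum_{\vec i\in\{1,\ldots,d-1\}^{r}}|c_{1,\vec i}|^{2}$ on $\vec j\in\{1,\ldots,d-1\}^{s}$, plus couplings $\ket{\vec j}\leftrightarrow\ket{\vec j+\vec 1}$ of weight $C:=ab^{*}\sum_{\vec i\in\{0,\ldots,d-2\}^{r}}c_{0,\vec i}^{*}\,c_{1,\vec i+\vec 1}$.

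I would then exploit this sparsity: $\frakS_K$ is block-diagonal with respect to the partition of the computational basis of $\bar K$ into ``$\vec 1$-chains'' $\{\vec c,\vec c+\vec 1,\ldots,\vec c+L\vec 1\}$ with $\min_{m}c_m=0$ and $L=d-1-\max_{m}c_m$, and Lemma~\ref{lema-orto} reduces $\lambda_{\max}(\frakS_K)$ to a maximum over these blocks. A block of length $L\ge 2$ is precisely the tridiagonal matrix (\ref{tridiagonal-x}) under the substitutions $(\alpha,\beta,g)\mapsto(A_0,A_1,C)$ and $d\mapsto L$, so by (\ref{w-wlasne}) its largest eigenvalue is $A_0+A_1+2|C|\cos(\pi/L)$; singleton chains carry a zero diagonal and contribute nothing. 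Since this value is increasing in $L$, the maximum sits on the unique longest chain $\{\vec 0,\vec 1,\ldots,(d-1)\vec 1\}$, giving $\lambda_{\max}(\frakS_K)=A_0+A_1+2|C|\cos(\pi/d)$.

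The last step is the maximisation over $\ket{\varphi_K}$. Setting $P_0:=A_0/|a|^{2}$ and $P_1:=A_1/|b|^{2}$, the Cauchy--Schwarz inequality gives $|C|\le|ab|\sqrt{P_0P_1}$, with equality attainable by aligning the phases of $\{c_{1,\vec i+\vec 1}\}$ with those of $\{c_{0,\vec i}\}$; the coefficients $c_{0,\vec i}$ with $\vec i\notin\{0,\ldots,d-2\}^{r}$ and $c_{1,\vec i}$ with $\vec i\notin\{1,\ldots,d-1\}^{r}$ enter none of $A_0,A_1,C$ and can be set to zero, so normalisation becomes $P_0+P_1=1$. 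The remaining one-parameter optimisation of $|a|^{2}P_0+|b|^{2}P_1+2|ab|\sqrt{P_0P_1}\cos(\pi/d)$ under $P_0+P_1=1$ is formally identical to the one solved in the proof of Theorem~\ref{ces-ent} after the identification $(P_0,P_1)\leftrightarrow(|x_0|^{2},|x_1|^{2})$, and returns $\tfrac12\bigl(1+\sqrt{1-\sin^{2}\theta\sin^{2}(\pi/d)}\bigr)$. Hence $E_{GM}^{K|\bar K}(\calS_{2\times d^{N-1}}^{\theta})=E_{GM}(\calS_{2\times d}^{\theta})$ for every $r$, which simultaneously proves (\ref{ggm-of-ges}) and the cut-independence of the GM. The main obstacle I expect is the structural bookkeeping needed to check that the boundary diagonal entries $A_0$ and $A_1$ on each maximal chain, together with the bulk value $A_0+A_1$, reproduce exactly the $(\alpha,\alpha+\beta,\ldots,\alpha+\beta,\beta)$ pattern of (\ref{tridiagonal-x}); once that correspondence is in place the reduction to Theorem~\ref{ces-ent} is essentially mechanical.
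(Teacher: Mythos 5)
Your strategy is sound and reaches the correct answer, and it is organized genuinely differently from the paper's proof. You treat every bipartition uniformly: project onto the side containing $A_1$, observe that the residual operator on $\bar{K}$ is block-diagonal over ``$\vec{1}$-chains'' (so Lemma \ref{lema-orto} applies), identify the unique longest chain $\{\vec{0},\dots,(d-1)\vec{1}\}$ as dominant, and reduce to the one-parameter optimization of Theorem \ref{ces-ent}. The paper instead splits into two cases: for $A_1|\aaa{2}{N}$ it performs essentially your chain decomposition (the operators $R_k$), but for $k|N-k$ cuts with $k>1$ it projects onto the $A_1$-\emph{free} side, obtaining $2\times 2$ blocks, and invokes Lemma \ref{maksymalizacja-ogolna} --- which is itself established indirectly, by computing the $A_1|\aaa{2}{N}$ cut with the complementary projection and matching answers. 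Your route is more explicit and self-contained (the indirect lemma is replaced by a Cauchy--Schwarz computation and proves cut-independence in one sweep); the paper's complementarity trick buys exactly the thing your write-up glosses over, namely never having to re-derive the tridiagonal spectrum in the multi-index setting.

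That is where the one genuine gap sits. The closed form (\ref{w-wlasne}) for the spectrum of (\ref{tridiagonal-x}) holds only under the condition $\alpha\beta=|g|^2$, which Appendix \ref{app-proof-1} flags as crucial. For $r\geq 1$ your chain blocks have $\alpha=A_0=|a|^2P_0$, $\beta=A_1=|b|^2P_1$ and $g=C$ with only $|C|\leq|ab|\sqrt{P_0P_1}$, i.e.\ $\alpha\beta\geq|g|^2$ with equality only for aligned coefficients; for a generic $\ket{\varphi_K}$ the formula you quote is simply not the spectrum of the block, so the upper-bound direction ($\lambda_{\max}(\frakS_K)\leq$ the claimed value for \emph{all} $\ket{\varphi_K}$) is incomplete as written. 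The fix is short: after a diagonal phase gauge the block is entrywise nonnegative, so its largest eigenvalue is nondecreasing in $|C|$ at fixed diagonal, whence $\lambda_{\max}(\mathrm{block})\leq A_0+A_1+2|ab|\sqrt{P_0P_1}\cos(\pi/d)$, and your remaining optimization over $P_0+P_1=1$ closes the argument. Two cosmetic points: a chain $\{\vec{c},\dots,\vec{c}+L\vec{1}\}$ has $L+1$ elements, so the substitution should read $d\mapsto L+1$ and the block eigenvalue involves $\cos(\pi/(L+1))$ --- your final $\cos(\pi/d)$ for the $d$-element longest chain is nonetheless correct --- and the two-element ($L=1$) chains you leave unmentioned are $2\times2$ blocks dominated by the same bound. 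Note also that for $r=0$ (the cut $A_1|\aaa{2}{N}$) the Cauchy--Schwarz bound is automatically saturated, which is why neither the paper's case (i) nor your $r=0$ instance suffers from this issue.
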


We will use the observation (\ref{ggm-gm-S}) in the proof, which will be split into two parts regarding different types of cuts: 

(i) $A_1|\aaa{2}{N}$,

 (ii) $k|N-k$ parties, $k>1$.

\noindent\textit{Proof of Theorem \ref{ges-ggm}. Case (i): Entanglement across $A_1|\aaa{2}{N}$}.
We construct the matrix (\ref{matrix-S}), by choosing $\bar{K}=A_1$, i.e., projecting onto the $A_1$ subsystem:
\beqn
\frakS_{A_1}(x_0)=\sum_{\ii{2}{N}}  \proj{\varphi_{\ii{2}{N}}}_{\aaa{2}{N}}
\eeqn
where we have defined:
\beqn\label{zrzutowane-A1}
\ket{\varphi_{\ii{2}{N}}}_{\aaa{2}{N}}
&\equiv&\bra{x}_{A_1}\ket{\Phi_{\ii{2}{N}}}_{\textbf{A}} \\
&=&ax_0^*\ket{\ii{2}{N}}_{\aaa{2}{N}}+bx_1^*\ket{\iii{2}{N}}_{\aaa{2}{N}} \nonumber
\eeqn
with $\ket{x}_{A_1}=x_0 \ket{0}_{A_1}+x_1\ket{1}_{A_1}$, $|x_0|^2+|x_1|^2=1$.

We now distinguish two cases: 

(a) $x_0=0$ or $1$, 

(b) $x_0 \ne 0,1$.

In case (a), the vectors (\ref{zrzutowane-A1}) are orthogonal and the non--zero eigenvalues of $\frakS_{A_1}$ are all equal to $|a|^2=\cos^2(\theta/2)$ (for $x_0=1$) or $|b|^2=\sin^2(\theta/2)$ (for $x_0=0$).

In case (b), these vectors  no longer form an orthogonal set. 
Our strategy now will  be to write the matrix $\frakS_{A_1} (x_0)$ as a sum of operators, call them $R_k$'s, with orthogonal supports and then use Lemma \ref{lema-orto} to find its largest eigenvalue.

With this aim, consider a grouping of the vectors according to the index $i_2$. For each value  $i_2=\tilde{i}_2$, we call such a group $F_{\tilde{i}_2}$. Vectors within each group are orthogonal, while the overlapping vectors necessarily come from neighboring groups, which is easily seen directly from Eq. (\ref{zrzutowane-A1}). Moreover, for a given vector there may  exist only one vector with a non-zero overlap with it and some vectors are orthogonal to all the remaining ones, e.g., this happens for the vector $\ket{\phi_{00\dots 0 \: d-2}}$.
These observations will be exploited in 
the construction of  $R_k$'s, which goes as follows.
To construct $R_0$: (i) choose the first vector from $F_0$, i.e., $\ket{\varphi_{0\dots00}}$, (ii) find the vector from $F_1$ which has a non--zero overlap with $\ket{\varphi_{0\dots00}}$, this will be $\ket{\varphi_{1\dots11}}$, (iii) find the vector from $F_2$ having nonzero overlap with $\ket{\varphi_{1\dots11}}$, this vector is $\ket{\varphi_{2 \dots 22}}$, (iv) repeat the procedure until the group $F_{d-2}$ is reached and the vector $\ket{\varphi_{d-2,\dots, d-2,d-2}}$ is picked from this group. Build 
\beq \label{r0}
R_0 =\sum _{p=0}^{d-2} \proj{\varphi_{p\dots pp}}.
\eeq
 To construct $R_1$: (i) take the second vector of $F_0$, i.e., $\ket{\varphi_{0\dots01}}$, (ii) find the vector from $F_1$ which has a nonzero overlap with $\ket{\varphi_{0\dots01}}$, this will be $\ket{\varphi_{1\dots12}}$, (iii) repeat the procedure until the group is reached with no vector with a non--zero overlap with the one from the previous group. It is easy to see that in this case the last vector to be drawn is $\ket{\varphi_{d-3,\dots, d-3,d-2}}$ from $F_{d-3}$, i.e., the procedure terminates faster as there is no vector from $F_{d-2}$ which is suitable (this vector would have been $\ket{\varphi_{d-2,\dots, d-2,d-1}}$). Build $R_1$ similarly to (\ref{r0}). We repeat the whole procedure for all so far unused vectors (from $F_0$, but also from the following groups) to construct the remaining $R_k$'s.  As noted above, some $R_k$'s will simply be equal  to $\proj{\varphi_{i_2 \dots i_N}}$ for some values of indices $i_m$'s.
 
  This procedure decomposes the matrix $\frakS_{A_1}(x_0)$ as the sum $\frakS_{A_1}(x_0)=\sum_k R_k$,
 where all $R_k$'s have mutually orthogonal supports as desired.

Importantly, the eigenvalues of the constituent operators can be now easily found and the use of Lemma \ref{lema-orto} is straightforward.
This is because either (i) they are $\proj{\varphi}$ or 
(ii) have the same structure of the tridiagonal matrix from (\ref{tridiagonal-x}) but the corresponding matrices are of different sizes (and ranks); this is easily seen if we look at the form of the vectors (\ref{zrzutowane-A1}). 
Among the latter operators, $R_0$ has the largest rank (equal to $d-1$) due to the fact that it has been  constructed from the vectors from all the groups $F_{i_2}$'s and its matrix is $d\times d$. It is the unique such operator.
Since the largest eigenvalue of (\ref{tridiagonal-x}) is increasing with $d$,
we need to maximize over $x_0$ the largest eigenvalue of $R_0$. This has already been done in the proof of Theorem \ref{ces-ent}. In  turn, we have:
\beqn\label{lambda-max-theta}
\lambda_{\mathrm{max}}(\frakS_{A_1})=\frac{1}{2} \left( 1+
\sqrt{1-\sin^2\theta \sin^2 \left(\frac{ \pi}{d}\right) }   \right).
\eeqn
Since this is  larger than both $\cos^2(\theta/2)$ and $\sin^2(\theta/2)$ corresponding to the cases of $x_0=0$ or $1$, respectively, we have that in the cut $A_1|\aaa{2}{N}$ the GM of the subspace equals:
\beqn\label{ggm-cutA1}
E_{GM}^{A_1|\aaa{2}{N}}(\calS_{2\times d^{N-1}}^{\theta})=\frac{1}{2} \left( 1-
\sqrt{1-\sin^2\theta \sin^2 \left(\frac{ \pi}{d}\right) }   \right),\non
\eeqn
which is the same as $E_{GM}(\calS_{2\times d}^{\theta})$. This ends the  part of the proof for the current bipartition.
\hspace{+3.7cm}
	$\blacksquare$

Before we move to the case of other cuts, we notice that the above analysis lets us prove indirectly an interesting result which will be essential in the remainder of the proof of Theorem \ref{ges-ggm}. 

\begin{lem}\label{maksymalizacja-ogolna}
	Let $d \ge 3$ and $M \ge 1$.
Consider the matrix:
%
\beqn\label{X-lematowa}
\frakX=
\sum_{\ii{1}{M}} \left( \begin{array}{cc}
	|a|^2|x_{\ii{1}{M}}|^2 & ab^*x_{\ii{1}{M}}^*x_{\ii{1}{M}\boxplus \textbf{1}}  \\
	a^*bx_{\ii{1}{M}}x_{\ii{1}{M}\boxplus \textbf{1}}^*  & 	|b|^2|x_{\ii{1}{M}\boxplus \textbf{1}}|^2
\end{array}     \right), 
\eeqn
 where $a=\cos (\theta/2)$, $b=\eksp^{\uroj \xi} \sin (\theta/2)$, $\theta \in (0,\pi)$, $\xi \in [ 0, 2\pi)$, and $\sum_{i_1,\dots, i_M =0}^{d-1} |x_{\ii{1}{M}}|^2=1$.

The largest eigenvalue of $\frakX$ maximized over the coefficients $x_{\ii{1}{M}}$'s is given by the formula (\ref{lambda-max-theta}), regardless of the value of $M$, i.e., of the number of the indices $i_p$.
\end{lem}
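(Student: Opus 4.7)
The plan is to recognise $\frakX$ as the $2\times 2$ Gram matrix of two specific vectors, convert the maximisation over the $x$'s into a spectral problem for an orthogonal direct sum of tridiagonal matrices, and observe that each such block is precisely the one whose largest eigenvalue was already computed in Theorem~\ref{ces-ent}.

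First I would introduce, in $(\cee{d})^{\otimes M}$, the two vectors
\beu
\ket{v_{0}}=a^{*}\sum_{\ii{1}{M}}x_{\ii{1}{M}}\ket{\ii{1}{M}},\quad \ket{v_{1}}=b^{*}\sum_{\ii{1}{M}}x_{\iii{1}{M}}\ket{\ii{1}{M}},
\eeu
whose Gram matrix coincides exactly with $\frakX$. Consequently $\lambda_{\max}(\frakX)$ is the squared largest singular value of the operator with columns $\ket{v_{0}},\ket{v_{1}}$, and therefore
\beu
\lambda_{\max}(\frakX)=\max_{|\alpha|^{2}+|\beta|^{2}=1}\sum_{\ii{1}{M}}\bigl|a^{*}\alpha\,x_{\ii{1}{M}}+b^{*}\beta\,x_{\iii{1}{M}}\bigr|^{2}.
\eeu
Swapping the order of the two outer maxima, the inner problem (for fixed $\alpha,\beta$) reduces to $\max_{\|x\|=1}\bra{x}T_{\alpha,\beta}\ket{x}=\lambda_{\max}(T_{\alpha,\beta})$, where $T_{\alpha,\beta}=\sum_{\ii{1}{M}}\proj{w_{\ii{1}{M}}}$ with $\ket{w_{\ii{1}{M}}}=a\alpha^{*}\ket{\ii{1}{M}}+b\beta^{*}\ket{\iii{1}{M}}$.

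The key structural step will be a chain decomposition of $T_{\alpha,\beta}$. The orbits of the ``diagonal shift'' $\textbf{v}\mapsto \textbf{v}\boxplus\textbf{1}$ on $\{0,\ldots,d-1\}^{M}$ partition the support of $T_{\alpha,\beta}$ into mutually orthogonal chains $\{\textbf{j},\textbf{j}\boxplus\textbf{1},\ldots,\textbf{j}\boxplus(\ell-1)\cdot\textbf{1}\}$, labelled by starting tuples $\textbf{j}$ with $\min_{k}j_{k}=0$ and of length $\ell=d-\max_{k}j_{k}\in\{2,\ldots,d\}$. Written in the natural orthonormal basis of each chain, $T_{\alpha,\beta}$ restricts \emph{exactly} to the $\ell\times\ell$ tridiagonal matrix~(\ref{tridiagonal-x}) appearing in the proof of Theorem~\ref{ces-ent}, with the replacements $d\to\ell$ and $(x_{0},x_{1})\to(\alpha,\beta)$. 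Lemma~\ref{lema-orto} together with the eigenvalue formula~(\ref{w-wlasne}) then gives, for a chain of length $\ell$, $\lambda_{\max}(T_{\alpha,\beta}|_{\mathrm{chain}})=|a\alpha|^{2}+|b\beta|^{2}+2|ab\alpha\beta|\cos(\pi/\ell)$; since $\cos(\pi/\ell)$ is monotonically increasing in $\ell$, the supremum is always attained on the unique longest chain, which originates at $\textbf{0}$ and has $\ell=d$ irrespective of $M$ or of $(\alpha,\beta)$. Thus $\lambda_{\max}(T_{\alpha,\beta})=|a\alpha|^{2}+|b\beta|^{2}+2|ab\alpha\beta|\cos(\pi/d)$, and the residual outer maximisation over $|\alpha|^{2}+|\beta|^{2}=1$ is identical to the one performed at the very end of the proof of Theorem~\ref{ces-ent}, delivering (\ref{lambda-max-theta}).

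The main obstacle I anticipate is purely combinatorial: one has to verify carefully that the orbits of the diagonal shift indeed exhaust the support of $T_{\alpha,\beta}$ -- the excluded tuples being precisely those simultaneously containing a $0$ and a $d-1$, which are annihilated by every $\proj{w_{\ii{1}{M}}}$ -- and that each restricted block carries the endpoint-corrected tridiagonal form~(\ref{tridiagonal-x}) with boundary diagonals $|a\alpha|^{2}$ and $|b\beta|^{2}$ at the two ends. Once this bookkeeping is in place, every subsequent step is a direct reuse of Theorem~\ref{ces-ent}.
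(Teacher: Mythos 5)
Your argument is correct, but it takes a genuinely different route from the paper's. The paper proves Lemma \ref{maksymalizacja-ogolna} indirectly: it observes that the GM of $\calS^{\theta}_{2\times d^{N-1}}$ across the cut $A_1|\aaa{2}{N}$ can be computed either from $\frakS_{A_1}(x_0)$ (projection onto the qubit) or from the complementary $2\times 2$ matrix $\frakS_{\aaa{2}{N}}(\textbf{x})$, and since case (i) of the proof of Theorem \ref{ges-ggm} already yields (\ref{lambda-max-theta}) for the former, the maximized largest eigenvalue of the latter --- which has exactly the structure of $\frakX$ --- must coincide with it. You instead make this duality explicit and algebraic: identifying $\frakX$ with the Gram matrix $V^{\dagger}V$ of the two vectors $\ket{v_0},\ket{v_1}$ and passing to $VV^{\dagger}=\sum\proj{w_{\ii{1}{M}}}$ is precisely the statement that the two complementary projections share their nonzero spectrum, and your $T_{\alpha,\beta}$ is (up to relabeling $x_0\to\alpha$, $x_1\to\beta$) the paper's $\frakS_{A_1}(x_0)$ with $M=N-1$. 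Your chain decomposition along orbits of the diagonal shift is then the same device as the paper's $R_k$ construction in case (i) (including the observation that tuples containing both a $0$ and a $d-1$ are annihilated, that each block is the tridiagonal matrix (\ref{tridiagonal-x}) of size $\ell\le d$ satisfying the condition $\alpha\beta=|g|^2$ needed for (\ref{w-wlasne}), and that the unique longest chain starting at $\mathbf{0}$ dominates by monotonicity of $\cos(\pi/\ell)$), followed by the same outer optimization over $(\alpha,\beta)$ as in Theorem \ref{ces-ent}. What your version buys is self-containedness --- the lemma no longer leans on the already-proved case (i) of Theorem \ref{ges-ggm} and covers $M\ge 1$ uniformly without invoking the physical interpretation of the two matrices as computing the same subspace entanglement; the cost is that you essentially redo the combinatorial bookkeeping that the paper's shortcut was designed to avoid. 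Both proofs are valid.
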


\noindent {\it Proof of Lemma \ref{maksymalizacja-ogolna}}. 	
To compute the entanglement of $\calS_{2\times d^{N-1}}^{\theta}$ across the bipartition $A_1|\aaa{2}{N}$, instead of the matrix $\frakS_{A_1}(x_0)$ consider the complementary one $\frakS_{\aaa{2}{N}} (\textbf{x})$, obtained by projecting onto 
\beq\ket{x}_{\aaa{2}{N}}=\sum_{j_2,\dots,j_N=0}^{d-1} x_{j_{2}\dots j_N} \ket{j_{2}\dots j_N}_{\aaa{2}{N}};
\eeq%
 $\textbf{x}$ denotes the set of the coefficients of $\ket{x}_{\aaa{2}{N}}$. It is easy to realize that it is a two by two matrix of the form:
\beqn\label{frakS-inna}
\hspace{-0.7cm}
\frakS_{\aaa{2}{N}} (\textbf{x})= \sum_{\ii{2}{N}}\left( \begin{array}{cc}
	|a|^2|x_{\ii{2}{N}}|^2 & ab^*x_{\ii{2}{N}}^*x_{\ii{2}{N}\boxplus \textbf{1}}  \\
	a^*bx_{\ii{2}{N}}x_{\ii{2}{N}\boxplus \textbf{1}}^*  & 	|b|^2|x_{\ii{2}{N}\boxplus \textbf{1}}|^2
\end{array}     \right). 
\eeqn

Since the results obtained for the value of the GM with both $\frakS$'s must be the same, we conclude that the largest eigenvalue of $\frakS_{\aaa{2}{N}}$ maximized over $\textbf{x}$ is given by (\ref{lambda-max-theta}), regardless of the number of parties $N$. This proves the claim as $\frakS_{\aaa{2}{N}} (\textbf{x})$ is of the same structure as $\frakX$ from (\ref{X-lematowa}) of Lemma \ref{maksymalizacja-ogolna}.
\hspace{+1.3cm}
$\square$

With this preparation in hands let us then go back to the proof of Theorem \ref{ges-ggm} and consider the cuts with $k$ vs. $N-k$ parties for $k>1$.
\newline\newline
\noindent{\it Proof of Theorem \ref{ges-ggm} (cont'd). Case (ii): Cuts $k|N-k$ parties, $k>1$}.
Clearly, for any cut  $k|N-k$  with $k>1$ we may consider, without loss of generality, the bipartition  $\aaa{1}{k}|\aaa{k+1}{N}$.
We construct the matrix from (\ref{matrix-S}) by projecting onto the subsystem $\aaa{k+1}{N}$:
\beqn\label{zrzutowane-Ak}
\frakS_{\aaa{k+1}{N}}(\textbf{x})=\sum_{\ii{2}{N}} \proj{\xi_{\ii{2}{k}}^{\ii{k+1}{N}}}_{\aaa{1}{k}},
\eeqn
where
\beqn
\ket{\xi_{\ii{2}{k}}^{\ii{k+1}{N}}}_{\aaa{1}{k}}&\equiv&\bra{x}_{\aaa{k+1}{N}}\ket{\Phi_{\ii{2}{N}}}_{\textbf{A}} \\
&=&a x_{\ii{k+1}{N}}^*\ket{0}_{A_1}\ket{\ii{2}{k}}_{\aaa{2}{k}}\non &&\hspace{+0.4cm}+ bx_{\iii{k+1}{N}}^*\ket{1}_{A_1}\ket{\iii{2}{k}}_{\aaa{2}{k}} \nonumber
\eeqn
with the normalized vector on the $\aaa{k+1}{N}$ subsystem:
\beq
 \ket{x}_{\aaa{k+1}{N}}=\sum_{j_{k+1},\dots,j_N=0}^{d-1} x_{\textbf{j}_{k+1,N}} \ket{\textbf{j}_{k+1,N}}_{\aaa{k+1}{N}},
 \eeq
  and $\textbf{x}$ denoting the set of the coefficients $ x_{\textbf{j}_{k+1,N}}$. 

Rewrite now (\ref{zrzutowane-Ak}) as
\beqn
\frakS_{\aaa{k+1}{N}}(\textbf{x})=\sum_{\ii{2}{k}}  R_{\ii{2}{k}}(\textbf{x})
\eeqn
where
\beqn
R_{\ii{2}{k}}(\textbf{x})= \sum_{\ii{k+1}{N}} \proj{\xi_{\ii{2}{k}}^{\ii{k+1}{N}}}_{\aaa{1}{k}}.
\eeqn
Since $\langle \xi_{\textbf{j}_{2,k }}^{\ii{k+1}{N}}\ket{\xi_{\ii{2}{k}}^{\ii{k+1}{N}}}=0$ whenever $\textbf{j}_{2,k }\ne \ii{2}{k}$, the operators $R$'s have orthogonal supports,
which, by Lemma \ref{lema-orto}, means that
\beqn
\lambda_{\mathrm{max}}\left( \frakS_{\aaa{k+1}{N}}\right)
= \max_{\textbf{x},\ii{2}{k}} \lambda_{\max} \left( R_{\ii{2}{k}}(\textbf{x}) \right).
\eeqn
This maximization can be easily done if we realize that all $R$'s have in fact the same structure of the two-by-two matrix considered in Lemma \ref{maksymalizacja-ogolna}. Since the largest eigenvalue does not depend on the number of indices we conclude that $\lambda_{\mathrm{max}}\left( \frakS_{\aaa{k+1},N}\right)$ is again given by (\ref{lambda-max-theta}), and in turn, by (\ref{ggm-lambda}) applied to the particular bipartition, we obtain:
\beqn
E_{GM}^{\aaa{1}{k}|\aaa{k+1}{N}}(\calS_{2\times d^{N-1}}^{\theta})=\frac{1}{2} \left( 1-
\sqrt{1-\sin^2\theta \sin^2 \left(\frac{ \pi}{d}\right) }   \right).\non
\eeqn
In conjunction with (\ref{ggm-cutA1}) this shows that all cuts are equally entangled and we arrive at the claimed result (\ref{ggm-of-ges}). \hspace{+1.5cm}
$\square$

In Appendix \ref{app-th1} we show that in the case of $N=3$ the subspace $\calS_{2\times d^2}^{\pi/2}$  corresponds to the one given in Theorem 1 of Ref. \cite{upb-to-ges}. In fact, this correspondence was our primary motivation for considering such subspaces.

\subsection{$N$ party GES $\calS_{2\times d^{N-1}}^{\theta}$: the case of the GM}\label{GM-ges}

We now move to the computation of the GM of  $\calS_{2\times d^{N-1}}^{\theta}$. Again, for simplicity we set $\ket{\psi_{i_m}^{(m)}}_{A_m}=\ket{i_m}_{A_m}$.

Take the fully product vectors in the  problem (\ref{przeksztalcenie}) to be:
\beqn
\ket{\psi_{\mathrm{prod}}}=\ket{x^{(1)}}_{A_1}\otimes \dots \otimes \ket{x^{(N)}}_{A_N}
\eeqn
with the normalized local vectors:
\beqn\label{iksy}
&&\ket{x^{(1)}}_{A_1}=x_0^{(1)} \ket{0}_{A_1}+x_1^{(1)} \ket{1}_{A_1},\non
&&\ket{x^{(m)}}_{A_m}=\sum_{n=0}^{d-1} x_{n}^{(m)} \ket{n}_{A_m}, \quad m=2,3,\dots, N.
\eeqn
Inserting this into (\ref{przeksztalcenie}) with $\calP_{\calS}$ taken to be the projection onto $\calS^{\theta}_{2\times d^{N-1}}$ we obtain:
\beqn\label{key-problem}
E_{GM} (\calS^{\theta}_{2\times d^{N-1}})=1-\max_{\textbf{x}} F_N(\textbf{x}),
\eeqn
where $\textbf{x}$ denotes the set of all coefficients of $\psi_{prod}$, and
\beqn
&&F_N(\textbf{x})=\\
&&\hspace{+0.5cm}= \sum_{i_2,\dots,i_N=0}^{d-2} \Big| a x_0^{(1)}x_{i_2}^{(2)}\cdots x_{i_N}^{(N)} + b x_1^{(1)}x_{i_2+1}^{(2)}\cdots x_{i_N+1}^{(N)}    \Big| ^2 .\nonumber
\eeqn

Maximization of this quantity can be approached with the seesaw algorithm (section \ref{see-saw-method}).
Unfortunately, it is not possible to obtain an exact formula through this approach, nevertheless, an easily computable bound can be given. The details are as follows.
For simplicity we consider the case $a=b$ but the essential arguments remain unchanged outside this specialized case.

At the beginning, we set all the coefficient on parties $A_2,\dots, A_N$ equal, i.e., $x_{i_k}^{(k)}=1/\sqrt{d}$, which results in the following quantity to be maximized  :
\beqn\displaystyle
F_1(\textbf{x}_1)= c_1(d,N)  \Big| x_{0}^{(1)} + x_{1}^{(1)}    \Big| ^2, 
\eeqn
$c_1(d,N)=\frac{1}{2}[(d-1)/d]^{N-1}$, $\textbf{x}_1=(x_0^{(1)},x_1^{(1)})$.
Clearly,  the factor in front is not important and the optimal values (up to an irrelevant phase) are $x_0^{(1)}=x_1^{(1)}=1/\sqrt{2}$.

 We then set the obtained values on $A_1$, keeping the coefficients equal on $A_3 \dots A_N$, and the resulting quantity to be maximized in the second step of the first iteration is given by :
\beqn\displaystyle
F_2(\textbf{x}_2)&=& c_2(d,N)  \sum_{i_2=0}^{d-2} \Big|  x_{i_2}^{(2)} +  x_{i_2+1}^{(2)}   \Big| ^2 \nonumber\\
&=& c_2(d,N) \bra{x^{(2)}} \frakF_2 \ket{x^{(2)}},
\eeqn
where $\textbf{x}_2=(x_0^{(2)},x_1^{(2)},\dots,x_{d-1}^{(2)})$,
$c_2(d,N)=\frac{1}{4}[(d-1)/d]^{N-2}$,  and $\frakF_2$ is a tridiagonal matrix given by (\ref{tridiagonal-x}) with $\alpha=1$, $\beta=1$, $g=1$. By the results of \cite{losonczi} we then conclude that the  coefficients of  $\ket{x^{(2)}}_{A_2}$, which are optimal at this step of the algorithm  are given by (again, disregarding possible phases)  
\beqn \label{optimal-coordinates}
\tilde{x}_{i_2}^{(2)}= \sqrt{\frac{2}{d}} \sin \frac{(2 i_2+1)\pi}{2d},\quad i_2=0,1,\dots,d-1.
\eeqn

In the next step, we substitute the found values for parties $A_1$ and $A_2$ leaving the parties $A_3,\dots, A_N$ untouched and obtain the following quantity to be maximized: 
\beqn
F_3(\textbf{x}_3)&=&c_3(d,N) \sum_{i_2,i_3=0}^{d-2}  \Big|  \tilde{x}_{i_2}^{(2)}x_{i_3}^{(3)} +  \tilde{x}_{i_2+1}^{(2)} x_{i_3+1}^{(3)}    \Big| ^2 \non
&=&  c_3(d,N)  \sum_{i_3=0}^{d-2}  \left\{ w_1|x_{i_3}^{(3)}|^2 + w_2 |x_{i_3+1}^{(3)}|^2 \right.\non 
&&\hspace{+1.5cm}+ \left.  w_3\left[ (x_{i_3}^{(3)})^*  x_{i_3+1}^{(3)} + x_{i_3}^{(3)}  (x_{i_3+1}^{(3)})^* \right]\right\}\non
&=& c_3(d,N)  \bra{x^{(3)}} \frakF_3 \ket{x^{(3)}},
\eeqn
where $c_3(d,N)=\frac{1}{4}[(d-1)/d]^{N-3}$, $\textbf{x}_3=(x_0^{(3)},x_1^{(3)},\dots,x_{d-1}^{(3)})$, and
\beqn
w_1 &=&  \sum_{i_2=0}^{d-2} \left(\tilde{x}_{i_2}^{(2)}\right)^2 =1-\frac{2}{d} \sin^2 \frac{\pi}{2d}=\frac{1}{d}\left(d-1 +\cos\frac{\pi}{d}    \right), \non  
w_2 &=&  \sum_{i_2=0}^{d-2} \left(\tilde{x}_{i_2+1}^{(2)}\right)^2 =1-\frac{2}{d} \sin^2 \frac{\pi}{2d}=\frac{1}{d}\left(d-1 +\cos\frac{\pi}{d}    \right), \non 
w_3 &=&  \sum_{i_2=0}^{d-2} \tilde{x}_{i_2}^{(2)} \tilde{x}_{i_2+1}^{(2)}  =\frac{1}{d} \left((d-1) \cos \left(\frac{\pi }{d}\right)+1\right),
\eeqn
and $\frakF_3$ is again a matrix of the form (\ref{tridiagonal-x}), in this case with $\alpha=w_1$, $\beta=w_2$, $g=w_3$.

This time, however,  we are not able to compute the eigenvalues of the corresponding matrix and in turn find the exact value of the GM.

The idea thus is to find an easily computable bound instead. With this aim take all $x$'s having coordinates as in (\ref{optimal-coordinates}). This clearly results in an upper bound on the GM and it is easy to see that it is of the  following form:
\beqn \label{bound-GM}
\frac{1}{4} \left( w_1^{N-1} +w_2^{N-1}+ 2w_3^{N-1} \right)=&&  \\
&&\hspace{-4cm}\frac{1}{2d^{N-1}} \left[ \left(d-1 +\cos\frac{\pi}{d}    \right)^{N-1} + \right. \non &&\hspace{-2cm}+ \left. \left((d-1) \cos \left(\frac{\pi }{d}\right)+1\right)^{N-1}   \right]. \nonumber
\eeqn
\begin{center}
		\begin{table}
			\begin{tabular}{rcccc} 	\hline\hline
				&&	$N\rightarrow$&&\\
				&                     3&        4   &         5&         6\\ \hline
				3   &                     $3/7$\; ($31/72$) & 0.490 (0.563) &     0.498 (0.660)  & 0.499 (0.733)     \\
				$d \downarrow$\quad	4 &  0.265 (0.266)  & 0.360 (0.364) &     0.432 (0.446)    & 0.475 (0.514)  \\
				5 &                  0.178 (0.179) &   0.250 (0.251)  &      0.311 (0.315)      & \; --- \;\;  (0.370) \\
					\hline\hline
			\end{tabular}	
			\caption{$E_{GM}(\calS_{2\times d^{N-1}}^{\pi/2})$ for several values of $d$ and $N$ obtained through a numerical optimization (for $N=3$ and $d=3,4$ the results are analytic). In the parentheses, we give the analytical upper bounds (\ref{bound-GM}).}
			\label{ges-general-gm}
		\end{table}
	\end{center}

In Table \ref{ges-general-gm}, we compare the obtained bound with the results of a numerical optimization of (\ref{key-problem}) in the case of $a=b$ for some values of $d$ and $N$.
 We see that there is a clear trend in the values: for a given $N$ the GM drops with the dimension $d$; on the other hand, for a given $d$ it grows with $N$ (probably tending to $0.5$). While the bound (\ref{bound-GM}) is very tight for $N=3$, we observe that it gets, as expected, much weaker when the number of parties increases. However, for a given $N$ it gets tighter with the increasing dimension $d$. We conclude by noting that for $d=3,4$ and $N=3$ it is possible to obtain an analytical value of the GM, we omit the detail here, though, as this is just a simple algebra.

\section{Lower bounds on the entanglement of a subspace in terms of SDP}\label{sdp-boundy}

Analytical computation of the subspace entanglement will usually be a very difficult problem (cf. \cite{Huang-NP}). Our result and the one of Ref. \cite{Branciard} seem to be notable exceptions. In particular, this will not be accessible for large systems or subspaces with complicated basis vectors. It is thus desirable to have at one's disposal easily computable bounds. We consider this problem in the present section.

The form of the minimization problem (\ref{przeksztalcenie}) directly allows us to bound the entanglement of a subspace from below using a relaxation involving instances of SDPs. Namely, the GM can be bounded as follows
\begin{eqnarray}\label{with-fully-product}
E_{GM}(\calS)&=&
\min_{\ket{\psi_{\mathrm{prod}}}}\langle\psi_{\mathrm{prod}}|\calP_{\calS}^{\perp}|\psi_{\mathrm{prod}}\rangle\non&=&\min_{\ket{\psi_{\mathrm{prod}}}}\tr[\calP_{\calS}^{\perp}\proj{\psi_{\mathrm{prod}}}]\nonumber\\
&\geq & \min_{\substack{\rho\geq 0\\\forall_K \rho^{T_K}\geq 0}}\tr[\calP_{\calS}^{\perp}\rho],
\end{eqnarray}
while for the GGM one has
\begin{eqnarray}\label{with-biproduct}
E_{GGM}(\calS)&=&\min_{\ket{\psi_{\mathrm{biprod}}}}\langle\psi_{\mathrm{biprod}}|\calP_{\calS}^{\perp}|\psi_{\mathrm{biprod}}\rangle\non&=&\min_{\ket{\psi_{\mathrm{biprod}}}}
\tr[\calP_{\calS}^{\perp}\proj{\psi_{\mathrm{biprod}}}]\nonumber\\
&\ge&\min_{K|\bar{K}}\left\{ \min_{\substack{\rho\geq 0\\\rho^{T_{K}}\geq 0}}\tr[\calP_{\calS}^{\perp}\rho]\right\}.
\end{eqnarray}
In the above, $\rho^{T_K}$ denotes the partial transpose of $\rho$ with respect to the bipartition $K|\bar{K}$. The idea behind this relaxation is that product states necessarily have a positive partial transpose (PPT) across the cuts with respect to which they are product, but these are not the only states having the PPT property, as there exist PPT entangled states,
and thus we enlarge the class of states over which the optimization should be performed.  This clearly results in lower bounds.

We should mention that the idea of using the set of PPT states in such optimization problems is not new and it has been very recently applied in \cite{GM-PPT-new} to find the GM of the Werner and the isotropic states (see also \cite{ent-depth}). 
We discuss the use of such approach for  computing bounds on the (G)GM of a class of multipartite states in further parts. As another remark, we also note that in this kind of problem, where minimization over (bi)product vectors is required, it is obviously possible to introduce intermediate relaxations in Eqs. (\ref{with-fully-product}) and (\ref{with-biproduct}) requiring $\rho$'s, over which the minima are taken, to be $k$--symmetric PPT extendible \cite{DPS,power-of-PPT} (see also \cite{EntCriteria-hierarchy} ).
Finally, let us note that in the relaxation one could use optimization over states, which remain  positive under a different positive map, not necessarily the transposition; this could  for example be the Breuer--Hall map \cite{Breuer,hall}. To make a better approximation of the set of (bi)separable states, this could also be performed over states which stay positive under a set of positive maps \cite{Lancien_2015}. All these relaxations are SDPs.

Note that if the bound for the GGM was zero this would signify that there
exists a PPT GME state supported on the GES under scrutiny. This, however, will not be the case for the examples we consider.

The bounds, from now on referred to as the {\it SDP bounds} and denoted $E_{(G)GM}^{SDP}$, have been computed in a few relevant cases and compared with the results obtained via direct minimization or the ones found in the literature to check their performance. We discuss the results below.

\subsection{Case 1: subspace $\calS_{2\times d^{N-1}}^{\pi/2}$ (Definition \ref{ges-def})}

Table \ref{Th1-porownanie} presents a comparison of the SDP bounds and the results obtained numerically (some of them analytically) for $\calS_{2\times d^{N-1}}^{\pi/2}$ introduced in Definition \ref{ges-def}.
There is a curious case of $d=3$, where we observe discrepancy between the analytic result and the SDP bound. In Appendix \ref{exemplary-state}, we give an exemplary class of states beating the analytical value for this case. This class may serve as a starting point of a future analysis toward the resolution of the problem of why a gap is observed here and, more generally, in some other setups as well.  In the remaining cases the difference between the results is $10^{-8}$ --  $10^{-12}$ depending on the scenario.

	\begin{center}
		\begin{table}[h!]
			\begin{tabular}{ccccc} \hline \hline
				{$d$} & $E_{GM}(\calS_{2\times d^2}^{\pi/2})$  & $E_{GM}^{SDP} (\calS_{2\times d^2}^{\pi/2})$ &  $E_{GGM}(\calS_{2\times d^2}^{\pi/2})$ &$E_{GGM}^{SDP}(\calS_{2\times d^2}^{\pi/2})$\\ \hline
				$3$   & 0.42857 & 0.41416 & 0.25000 & 0.25000   \\
				$4$  & 0.26543 & 0.26543& 0.14645 & 0.14645  \\
				$5$  & 0.17837  & 0.17837 & 0.09549 & 0.09549 \\
				$6$  & 0.12742  & 0.12742  & 0.06699 & 0.06699\\
				$7$ &  0.09530 & 0.09530  & 0.04952 & 0.04952 \\ 
				$8$  & 0.07384 & 0.07384 & 0.03806& 0.03806\\ \hline \hline
			\end{tabular}
			\caption{Comparison of the SDP bounds and the values of the (G)GM for $\calS_{2\times d^2}^{\pi/2}$ from Definition \ref{ges-def}. The values of $E_{GM}$ for $d=3,4$ have been obtained analytically, the remaining ones numerically. The values of the GGM are given by the formula $E_{GGM}(\calS_{2\times d^2}^{\pi/2})=\sin^2(\pi/(2d))$ (Theorem \ref{ges-ggm}). Except for a single case $d=3$ in the GM case the results match with a precision ranging from $10^{-8}-10^{-12}$.}  \label{Th1-porownanie}
		\end{table}
	\end{center}

\subsection{Case 2: subspace $\calQ_1^{N,d}$ (Theorem 2 of Ref. \cite{upb-to-ges})} \label{q1}

As the second example, we consider the subspace introduced in Theorem 2 of Ref. \cite{upb-to-ges}. Originally, it has only been  defined as a subspace of $(\cee{d})^{\otimes N}$ orthogonal to a continuous set of product vectors (a non--orthogonal unextendible product basis) with no explicit basis  given. In Appendix \ref{baza-th2}, we fill this gap and find that this GES --- in the present paper called $\calQ_1^{N,d}$ --- is spanned by the following (non--orthogonal) unnormalized vectors:
\beqn
&&\ket{i}_{A_1}\ket{p_{m-i}+k}_{\aaa{2}{k}}-\ket{i+1}_{A_1}\ket{p_{m-i-1}+k}_{\aaa{2}{k}}, \non
&& \quad m=1,2,\dots, d-2, i=0,1,\dots,m-1, \non
&&\ket{i}_{A_1}\ket{p_{d-i-1}}_{\aaa{2}{k}}-\ket{i+1}_{A_1}\ket{p_{d-i-2}}_{\aaa{2}{k}}, \non && \quad i=0,1,\dots,d-2,\\
&&\ket{i}_{A_1}\ket{p_{m-i}+k+1}_{\aaa{2}{k}}-\ket{i+1}_{A_1}\ket{p_{m-i-1}+k+1}_{\aaa{2}{k}}, \non
&& \quad m=d-1,d,\dots, 2(d-2), i=m-(d-2),\dots, d-2,\nonumber
\eeqn
where $p_i=i p_1$, $i=0,1,\dots, d-1$, with
$ p_1= \sum_{m=2}^Nd^{N-m}$.
The dimension of the GES is $\dim \calQ_1^{N,d}=d^N-(2d^{N-1}-1)$.

We have tested the SDP bounds on $\calQ_1^{3,d}$ for $d=3,4,5$  and the results are given in  Table \ref{sdp-th2} below. We observe that in some cases the SDP bounds most probably provide exact values; however, in the case of the GGM for $d=3$ the gap between the results is quite large.

\begin{table}[h!]
	\begin{tabular}{cccll} \hline \hline
		$d$ & $N$ & $\dim \calQ_1^{3,d}$ & $E_{GM}^{SDP}(\calQ_1^{3,d})$ & $E_{GGM}^{SDP}(\calQ_1^{3,d})$ \\ \hline
		$3$ & 3 & $10$ &$0.19022$ $(0.19036) $ &   $0.025078$  $(0.030844)$ \\
		$4$ & 3 & $33$ &$0.03696$   &   $0.000976$ $(0.001144)$   \\
	  $5$ & 3 & 76 & 0.00629 &  $0.000016$ ($0.000024$)   \\  \hline \hline
	\end{tabular}
	\caption{The SDP bounds on GM and GGM  of  $\calQ_1^{3,d}$. The parentheses give the results of a numerical minimization of $E_{GM}$ and $E_{GGM}$ correspondingly. No parentheses means that the values we have found are the same.}
	\label{sdp-th2}
\end{table}

\subsection{Case 3: subspace $\calQ_2^{N,d}$ (Theorem 3 of Ref. \cite{upb-to-ges})} \label{q2}

Our third test case is the GES --- here called $\calQ_2^{N,d}$ --- from Theorem 3 of Ref. \cite{upb-to-ges}. 
A basis for this GES has been given in \cite{upb-to-ges} in the case of qubits. We find a basis in the general case in Appendix \ref{baza-th3} to be:
\beqn
&&\ket{0}_{A_1}\left( \sum_{f=2}^{N} \ket{kd^{N-f}+m}_{\aaa{2}{k}}   \right)  - \ket{k}_{A_1}\ket{m}_{\aaa{2}{k}}, \\
&& \hspace{+0.5cm} k=1,\dots, d-1, \quad m=0,1,\dots, d^{N-1}-d^{N-2}-1.\nonumber
\eeqn
The dimension is $\dim \calQ_2^{N,d}=d^{N-2}(d-1)^2$.

\subsubsection{Qubits: $\calQ_2^{N,2}$ }

In the qubit case ($d=2$) the subspaces $\calS_{2\times 2^{N-1}}$ and $\calQ_1^{N,2}$ are equivalent and consist of a single state, which is locally unitarily equivalent to the GHZ state.  In contrast, $\calQ_2^{N,2}$ is nontrivial and its dimension is $2^{N-2}$.

We present the results in Table \ref{miary-kubity}.
In the case of $E_{GM}$ an agreement up to $12$ significant digits between the SDP bounds and the direct minimization has been observed.
For $E_{GGM}$, an  agreement to $6-8$ significant digits, depending on the case, is observed. We thus feel inclined to believe that the SDP bounds are exact values of the entanglement of the GESs in this case. Unfortunately, we have not been able to find an analytical proof of this fact, but it seems plausible that such a proof should rely on particular properties of systems with qubit subsystems.

\begin{table}[h!]
	\begin{tabular}{cccc} \hline \hline
		$N$  &$\dim \calQ_2^{N,2}$&  $E_{GM}(\calQ_2^{N,2})$ & {$E_{GGM}(\calQ_2^{N,2})$ } \\ \hline
		$3$  & 2 &$0.2640$  &  $0.07810$   \\
		$4$ & 4 & $0.1794$ &  $0.01637$     \\ 
		$5$  & 8 & $0.1213$   &   $0.00436$   \\
		$6$  & 16 &$0.0821$  &   $0.00099$   \\ \hline \hline
	\end{tabular}
	\caption{ The GM and GGM of the qubit subspaces $\calQ_2^{N,2}$. The SDP bounds and numerical results match with very high precision. }
	\label{miary-kubity}
\end{table}

 We observe that the entanglement in both cases drops with the dimension of a subspace. This clearly happens more rapidly for $E_{GGM}$ than for $E_{GM}$.

\subsubsection{Tripartite qudit case: $\calQ_2^{3,d}$}

The results for higher dimensions in the tripartite case ($N=3$) are collected in Table \ref{sdp-th3}.

We have found agreement for $E_{GM}$ to  $8-12$ significant digits indicating that the SDP bounds represent the true values in these cases. On the other hand, the results for $E_{GGM}$ suggest that this is not  true in this case.

\begin{table}[h!]
	\begin{tabular}{ccccl} \hline \hline
		{$d$} &$N$ & $\dim \calQ_2^{3,d}$ & $E_{GM}^{SDP}(\calQ_2^{3,d})$  &$E_{GGM}^{SDP}(\calQ_2^{3,d})$\\ \hline
		$3$  & $3$  & $12$ &$0.05856$   &   $ 4.8023 \cdot 10^{-3}$  $(4.8184 \cdot 10^{-3})$  \\
		$4$ & $3$ & $36$  &  $0.00753$ &  $1.2579 \cdot 10^{-4}$ $(1.2649 \cdot 10^{-4})$\\
			$5$ & 3 & 80 &  0.00124  & $2.2147 \cdot 10^{-6}$ $(2.2727 \cdot 10^{-6})$    \\ 
			 \hline \hline
	\end{tabular}
	\caption{The SDP bounds on $\calQ_2^{3,d}$ in the cases of  qutrits ($d=3$), ququarts ($d=4$), and ququints ($d=5$). In the parentheses we give the values obtained with a direct minimization over biproduct vectors. }
	\label{sdp-th3}
\end{table}

\subsection{Other subspaces from the literature}

We have also checked how the SDP bounds behave for some other subspaces
not considered either here or in \cite{upb-to-ges}.

First, we have verified that they reproduce the correct value of the entanglement of  any superpositions of either two of the following states: $\ket{W}$,  $\ket{\tilde{W}}=\sigma_x ^{\otimes 3} \ket{W}$, and $\ket{GHZ}$  (cf. \cite{WeiGoldbart,Blasone2008}). In particular, this means that they recover the results concerning the measure for each of the states above and give the correct values for two--dimensional GESs spanned by either two states from above.

Second, we have considered the antisymmetric subspace. The antisymmetric subspace, $\calA_{d,N}$, of an $N$--partite Hilbert space $(\mathbb{C}^d)^{\otimes N}$, $d\ge N$, is the ${d \choose N}$ dimensional subspace spanned by the states acquiring the minus sign when any odd permutation of the parties is done.
It is easy to realize that  $\calA_{d,N}$ is indeed a GES.
By the result of \cite{geometric-hayashi} the geometric measure of the antisymmetric subspace equals
\beqn\label{gm-anti}
E_{GM}(\calA_{d,N})=1-\frac{1}{N!},
\eeqn
that is, it is independent of the local dimension $d$.
Numerically, we found that in the case of the GGM this property also holds, and the value is:
\beqn\label{ggm-anti}
E_{GGM}(\calA_{d,N})=1-\frac{1}{N}.
\eeqn
Unfortunately, we have only managed to consider the SDP bounds for small systems but found they agree in these cases with (\ref{gm-anti}) and (\ref{ggm-anti}).

\section{Entanglement properties of states constructed from a GES} \label{different-methods}

We now approach the problem of quantifying entanglement of subspaces from a different perspective. Namely, we consider the entanglement of states:
\beqn \label{noisy-ges}
\varrho_{\calG}(p)=(1-p) \frac{\calP_{\calG}}{d_{\calG}}+p \frac{\jedynka_{D}}{D}, \quad D=\Pi_i d_i,
\eeqn
 where $\calG$ is a GES, $\calP_{\calG}$ is the projection onto $\calG$, and $d_{\calG}=\mathrm{rank}\;\calP_{\calG}$; we will call them {\it noisy GES states}.
 
We will consider two extreme cases to characterize the entanglement of a subspace $\calG$:

(i) the (G)GM of a noiseless state, $E_{(G)GM}(\varrho_{\calG}(0))$ (cf. \cite{Bracken,geometric-hayashi});

(ii) the white--noise tolerances $p^*_{gme}$ and $p^*_{ent.}$, that is, the threshold probabilities below which the state (\ref{noisy-ges}) is certainly GME or entangled, respectively.   

The parameters above, although not strictly entanglement measures, seem natural  alternative quantifiers of entanglement of $\calG$ apart from  $E(\calG)$ itself. Moreover, $E(\varrho_{\calG}(0))$, by (\ref{state-vs-subspace}),  provides an upper bound on the entanglement of $\calG$, while  the critical probability $p^*$ is of particular importance from an experimental point of view.

We have considered the problems above using several different methods. In Secs. \ref{witnes}-\ref{algo} we briefly describe them and in Sec. \ref{results} we gather and analyze the obtained results. 
In our study we concentrate on the tripartite ($N=3$) case, which is dictated by the computational power available to us. Our test GESs are those considered extensively so far in the paper: $\calS$, $\calQ_1$, and $\calQ_2$.

\subsection{Connection with entanglement witnesses}\label{witnes}

Here we establish a connection between the entanglement of a subspace and entanglement witnesses. This will allow, in particular, for a simple analytical estimation of the white--noise tolerance of $\rho_{\calG}(0)=\calP_{\calG}/d_{\calG}$.

In \cite{upb-to-ges} we noted that the genuine entanglement of $\rho_{\calG}(0)$ from (\ref{noisy-ges})
can be witnessed by the following entanglement witness \cite{LewensteinWitnesses2001}:
\beqn \label{general-witness}
W_{\mathrm{GES}}=\frac{1}{(1-\epsilon_{gme}) D-d_{\calG}} \Big[(1-\epsilon_{gme}) \jedynka_{D}-\calP_{\calG}\Big]
\eeqn
with 
\beq \label{epsilon}
\epsilon_{gme}\equiv\min _{\ket{\psi_{\mathrm{biprod}}}\;}\bra{\psi_{\mathrm{biprod}}}P_{\calG}^{\perp}\ket{\psi_{\mathrm{biprod}}}.
\eeq

Now, comparing (\ref{epsilon}) with (\ref{przeksztalcenie}), we immediately infer that:
\beq
\epsilon_{gme}= E_{GGM}(\calG),
\eeq
providing a direct link between these two notions. This can be viewed as a generalization of the observation made already  in Ref. \cite{WeiGoldbart} for  pure states.

In fact, the witness $W_{GES}$ detects {\it all} states supported on $\calG$, not only of $\rho_{\calG}(0)$. However, it comes with a price: it gives a constant value on all such states, and as such cannot be used to compare their entanglement. This behavior can be attributed to the non-optimality of the witnesses.

Using the witnesses we can bound the white--noise tolerance $p^*$ of {\it any} states with the support in a GES.
	With this aim consider more general states than in (\ref{noisy-ges}):
\beqn \label{gamma-p}
\gamma_{\calG}(p)=(1-p) \sigma_{\calG}+p \frac{\jedynka_{D}}{D}, \quad \mathrm{supp}(\sigma_{\calG}) \subseteq \calG.
\eeqn
Evaluating their mean value with $W_{\mathrm{GES}}$ we get:
\beqn 
\tr  W_{\mathrm{GES}}\gamma_{\calG}(p) =\frac{1}{(1-\epsilon_{gme}) D-d_{\calG}} \left(\frac{p (D-d_{\calG})}{D}-\epsilon_{gme}\right), \non
\eeqn
from which we obtain that $\gamma_{\calG}(p)$, with $\varrho_{\calG}(p)$ as a special case, is certainly GME at least in the region:
\beq \displaystyle
\label{zakres-GME}
p < p^{*witn.}_{gme}\equiv\frac{D\epsilon_{gme}}{D-d_{\calG}}.
\eeq
For example, in the case  of $\calS_{2\times d^2}^{\pi/2}$ (Section \ref{th1-generalized}) we obtain an analytical  noise tolerance threshold:
\beq
p_{gme}^{*witn.}(\calS_{2\times d^2}^{\pi/2})= \frac{2d^2}{d^2+2d-1} \sin^2\frac{\pi}{2d}.
\eeq
It can be seen that the value of the white--noise tolerance predicted by this approach drops with an increasing local dimension $d$; as we will see later this happens faster than for the actual value of the white--noise tolerance.

The same reasoning can be applied to find an estimate on the value of $p^{*witn.}_{ent.}$, below which a state is certainly entangled.
To achieve this, instead of taking biproduct vectors in (\ref{epsilon}) we  take fully product ones to define $\epsilon_{ent}=E_{GM}(\calG)$ and use this quantity to construct the  witness just as in (\ref{general-witness}). The white--noise tolerance is simply given by:
\beq \displaystyle
\label{zakres-ent}
p^{*witn.}_{ent.}\equiv\frac{D\epsilon_{ent.}}{D-d_{\calG}}.
\eeq

\subsection{PPT mixtures}\label{mixtures}

The next method of detecting and quantifying genuine entanglement of states is the PPT mixtures technique from \cite{taming}. The idea is to check with an SDP whether a given state is a convex combination of PPT states. If it is not, the state is necessarily GME.
We  briefly recall for completeness the method below. 

Given a multipartite state $\rho$ one solves the following optimization problem:
\beqn
&&\min \tr (W\rho) \\
&&\mathrm{s.t.}\; 
W=P_K+Q_K^{T_K}, \; 0 \le P_K \le \jedynka, \;  0 \le Q_K \le \jedynka, \forall_{K|\bar{K}}, \non\label{warunki-swiadek}
\eeqn
where the condition for $W$ holds for all bipartitions $K|\bar{K}$.
The found witness $W$ is called fully decomposable and the function 
\beqn
E_{ppt}(\rho)=-\tr (W\rho)
\eeqn%
 is an entanglement monotone. Its non--zero value signifies genuine entanglement of the state.

A version of this problem, instead of the conditions (\ref{warunki-swiadek}), assumes that for all $K$ it only holds that $W^{T_K}\ge 0$, i.e., $P_K=0$. We then talk about fully PPT witnesses and they clearly are weaker. The monotone in this case will be denoted $E_{ppt}^{fully}$.

\subsection{Bounds on the (G)GM of states using PPT relaxations}\label{ppt-fidelity-relax}

SDP is useful in yet another way. Namely, exploiting the relation between the (G)GM of a state and the fidelity one can put forward simple SDP bounds on the measure and in turn have numerically computable estimates on the entanglement of the states under scrutiny.

The geometric measures have been shown to be directly related to the fidelity $F(\rho,\sigma)=\tr\sqrt{\sqrt{\rho}\sigma\sqrt{\rho}} $ through the following formulas \cite{Streltsov-gm-fidelity,Regula2018}:
\beqn
&&E_{GM}(\rho)=1-\max_{\sigma\; \mathrm{fully}\; \mathrm{sep.} }F^2(\rho,\sigma),\\
&&E_{GGM}(\rho)=1-\max_{\sigma\;  \mathrm{biseparable} }F^2(\rho,\sigma), 
\eeqn
where the maximization is over fully separable and biseparable states for the GM and GGM, respectively.

Such representations allow us  again to use certain relaxations. Precisely, we approximate the set of  separable states with the set of PPT states (across relevant cuts) and the set of biseparable states with the set of states which are mixtures of PPT states (as already discussed different forms of relaxations could be used). We then obtain the  bounds (cf. \cite{GM-PPT-new}):
 \beqn
 &&\hspace{-0.9cm}E_{GM}(\rho) \ge  1-\max_{\substack{\sigma\geq 0\\  \forall_K \sigma^{T_K}\geq 0}} F^2(\rho,\sigma)=:E_{GM}^{F}(\rho), \\
 &&\hspace{-0.9cm}E_{GGM}(\rho) \ge 1- \max_{\sigma\;\mathrm{PPT}\; {mixture}}  F^2(\rho,\sigma)  =:E_{GGM}^{F}(\rho).\label{GGM-fidelity-relaxation}
 \eeqn
We will later refer to these bounds as the {\it fidelity relaxation bounds}.
The value of the relaxations lies again in the fact that the fidelity can be efficiently computed with an instance of an SDP. Precisely, the fidelity $F(\rho,\sigma)$ is computed as follows \cite{Watrous}:
\beqn \label{sdp-fidelity}
&&\max\; \frac{1}{2} \tr X +\frac{1}{2} \tr  X^{\dagger}, \\
&&\mathrm{s.\; t.} 
\left( \begin{array}{cc}
  \rho	&  X\\
X^{\dagger}	& \sigma\\
\end{array}     \right) \ge 0 .
\eeqn
\subsection{Algorithmic approximation of the (G)GM}\label{algo}

Finally, we have applied the algorithm found in \cite{algorithm-gm} to approximate numerically the (G)GM of the relevant states and find their white--noise tolerances (we note that originally the algorithm is designed for the GM but a simple modification allows us to use it to approximate the GGM). In principle, since the algorithm only requires solving eigenproblems and finding the singular value decomposition of certain matrices it is easy to implement. 
However, it requires decompositions of density matrices into ensembles with $(d_1d_2 \cdots \dots \cdot d_N)^2$ terms, which quickly becomes intractable by a desktop computer. To get around this problem one needs to use smaller, i.e., not optimal in this respect, ensembles.  Our experience gained for smaller problems shows that if the number of terms in an ensemble is not unreasonably small and the precision parameter set in the algorithm is very small the results appear to be accurate. Nevertheless, one needs to keep in mind the limitations of the approach when comparing the numbers and treat them with care. Interestingly, this was not an issue for the computation of the  (G)GM of the normalized identity on $\calS^{\pi/2}_{2\times d^2}$, as the algorithm quickly converges even for small ensembles. This could be attributed to the nice structure of the basis vectors for this subspace which translates into less demanding computations.

\subsection{ Results obtained with methods from  sections \ref{witnes}-\ref{algo}}\label{results}

The results obtained with the aid of the methods described above are collected in Tables \ref{ent-projection}-\ref{noise-tolerance}.

Before we move to a detailed discussion of the results, we must emphasize, that  our aim here is not to compare the subspaces of different types; this would make little sense since they are of different dimensionalities for the same parameters. Our goal is rather to compare different methods for a given subspace type and see how the properties change with an increasing local dimension within a subspace class. 

\subsubsection{Entanglement of $\calP_{\calG}/d_{\calG}$}

Let us start with the entanglement of the noiseless GES states $\varrho_{\calG}(0)=\calP_{\calG}/d_{\calG}$. The results are presented in  Table \ref{ent-projection}.  For reference, the entanglement of the corresponding subspace, $E_{(G)GM}(\calG)$, has also been given. We have performed our calculations on a desktop computer and some problems turned out to be too big; these are marked with a long dash in the table.

\begin{widetext}
	\begin{center}
		\begin{table}
			\begin{tabular}{ccccc} \hline \hline				Entanglement $E$	 & $d$   & $\calS^{\pi/2}_{2\times d^2}$ (Sec. \ref{th1-generalized}) &$\calQ_1^{3,d}$ (Sec. \ref{q1} ) & $\calQ_2^{3,d}$ (Sec. \ref{q2})\\ \hline
				&2 & 0.5000 & 0.5000 & \:\:\:\:  0.0781 / 0.2640   \\
				& 3  & 0.2500 / 0.42857 &  0.030844 / 0.19036     &    0.0048184 / 0.05856\\
				$E_{GGM}(\calG)$/$E_{GM}(\calG)$   & 4  &  0.1465 / 0.26543  & 0.001144 / 0.03696      & 0.0001265 /  0.00753 \\
				(Sections \ref{th1-generalized}, \ref{q1},\ref{q2},      & 5 & 0.0955  / 0.17837  & 0.000024 / 0.00629   & 0.0000023 / 0.00124 \\
				cf. Tables \ref{Th1-porownanie}, \ref{sdp-th2}, \ref{sdp-th3})	& 6  &   0.0670 / 0.1274 & --- & --- \\
				& 7  &  0.0495 / 0.0953 & --- & --- \\
				& 8  & 0.0381  / 0.0738 & --- & --- \\
				\hline
				&2& 0.5000  &  0.5000  &    0.1832 / 0.0942 \\
				& 3  & 0.3008 / 0.2253  & 0.0951 / 0.0774 & 0.0600 / 0.0253 \\
				$E_{ppt}$/$E_{ppt}^{fully}$   & 4  & 0.1905 / 0.1361   &  0.0375 /  0.0219 & 0.0246 / 0.0087 \\
				(Section \ref{mixtures})      & 5 & 0.1347 / 0.0902  & --- & --- \\
				& 6  & 0.1012 / 0.0641 & --- & --- \\
				& 7  &\;\; --- \;\; / 0.0479  & --- & --- \\
				& 8  & \;\; ---\;\; /  0.0372  & --- & --- \\
				\hline
			   &2& 0.5000  &  0.5000  &  0.0875 /  0.2801   \\
				& 3  & 0.2286 / 0.4150  & 0.0663 / 0.2297 & 0.0293 / 0.1380 \\
				$E_{GGM}^{F}$/$E_{GM}^{F}$   & 4  & 0.1316 / 0.3056  &  0.0231 / 0.1410 & 0.0128 / 0.0859  \\
				(Section \ref{ppt-fidelity-relax})      & 5 &   0.0872 / 0.2344  &--- & --- \\
				& 6  & 0.0625  / 0.1900  & --- & --- \\
				& 7  & \;\; ---\;\;\; / 0.1597  & --- & --- \\
				\hline
					&2& 0.5000  &  0.5000  &  0.0962 /  0.2801  \\
				& 3  & 0.2500 / 0.4375 & 0.0816 / 0.2297  &   0.0378 / 0.1524 \\
				$E_{GGM}^{algor.}$/$E_{GM}^{algor.}$   & 4  & 0.1667  / 0.3056  &  0.0345 / 0.1491& 0.0183 / 0.0980 \\
				(Section \ref{algo})      & 5 &  0.1250 / 0.2344 &  0.0172 / 0.1065& 0.0111 / 0.0712 \\
				& 6  &   0.1000 / 0.1900  & --- & --- \\
				& 7  &  0.0833 / 0.1597 & --- & --- \\
				\hline \hline
			\end{tabular}
			\caption{Entanglement of the normalized identity on a GES as given by different methods. For reference, the entanglement of subspaces is given. A long dash indicates problems too large for a desktop. See the main text for a discussion. }
			\label{ent-projection}
		\end{table}
	\end{center}
\end{widetext}

\noindent\textit{PPT mixtures (Section \ref{mixtures})}.
We see an interesting behavior for the subspace $\calS_{2\times d^{2}}^{\pi/2}$, namely, $E_{ppt}^{fully} < E_{GGM}(\calS)< E_{ppt}$,  with $E_{ppt}^{fully}$ approaching the value of $E_{GGM}(\calS)$ for an increasing local dimension. This is, however, not observed for subspaces $\calQ_i^{3,d}$, in which cases both $E_{ppt}$ and $E_{ppt}^{fully}$ exceed quite largely the entanglement of the subspace.

\noindent\textit{Fidelity relaxation bounds on the (G)GM (Section \ref{ppt-fidelity-relax})}.
 Let us start with $\calS_{2\times d^{N-1}}^{\pi/2}$, $d \ge 3$, and the case of the GGM. We observe that the bounds are trivial as the better ones are simply provided by  $E_{GGM}(\calS)$ [cf. Eq. (\ref{state-vs-subspace})]; nevertheless, $E_{GGM}^F$ approaches the latter from below for the increasing local dimension.
in the case of the GM,  we see that for $d=3$ the bound is useless for the same reason as above, i.e., it is below the value of $E_{GM}(\calS)$. The remaining values, however, are nontrivial bounds as they are above the entanglement of the subspace. 
In this case, we also see that the gap between $E_{GM}^{F}(\varrho_{\calS}(0))$ and $E_{GM}(\calS)$ gets larger with the increasing $d$, meaning that the former bound becomes more useful.
More importantly, however, it appears that in the case $d>3$ the values of the bound are in fact the values of $E_{GM}(\varrho_{\calS}(0))$. We comment on this issue again below.

\noindent For subspaces  $\calQ_i^{3,d}$ with  $d\ge 5$ the problem is too large for the desktop we have used and we have only managed to treat the cases $d=2,3,4$. These scarce data show that the bounds are non trivial as they exceed the entanglement of the subspaces.

\noindent\textit{Algorithmic approximation of the (G)GM (Section \ref{algo})}.
Let us again begin with $\calS_{2\times d^{2}}^{\pi/2}$. 
We believe that the values obtained with the algorithm are enough to infer that the PPT bounds on the GM  discussed above are the exact values (we have observed agreement to 8-10 significant figures) except for the already identified peculiar case $d=3$. in the case of the GGM on the other hand, we observe that it equals the entanglement of the subspace for $d=3$, but for larger $d$ the gap between these values grows.
The results also provide strong evidence that the value of the GGM equals
${1}/(2d-2)$. 
For the remaining two subspaces the data are again limited but  show that $E_{(G)GM}^F<E_{(G)GM}^{algor.}$ except for two nontrivial cases: the GGM in the case of  $\calQ_1^{3,3}$ and the GM in the case of $\calQ_2^{3,2}$. We have not been able to identify either necessary or sufficient conditions on subspaces for the equality to hold and this is left as an open problem.

\subsubsection{The white--noise tolerance of  $\calP_{\calG}/d_{\calG}$}

Let us now move to the case of the white--noise tolerance of the noiseless states from (\ref{noisy-ges}) $\varrho_{\calG}(0)=\calP_{\calG}/d_{\calG}$.

\noindent \textit{Witnesses (Section \ref{witnes})}.
Not surprisingly, the white--noise tolerance predicted by this method is not high, with particularly low values for subspaces $\calQ_1^{3,d}$ and $\calQ_2^{3,d}$ --- this is the price one pays for the generality of the bound, i.e., the fact that it works for any state of the form (\ref{gamma-p}).  Specifying the states $\rho$ to be the identities on $\calG$ we are able to increase these values significantly using different methods.

\noindent\textit{PPT mixtures (Section \ref{mixtures}), fidelity relaxation bounds on the GGM (Section \ref{ppt-fidelity-relax}), and the algorithmic computation of the GGM (Section \ref{algo})}. Contrary to other methods, the PPT mixtures approach only deals with genuine entanglement of states and, clearly, the values obtained with its aid are the same as the ones obtained with the relaxation (\ref{GGM-fidelity-relaxation}).
The PPT mixtures have proved useful in  improving estimations of the white--noise tolerance for several classes of genuinely entangled states \cite{taming} and we thus expected to improve significantly using them on the values obtained with the witnesses. This was indeed the case, but the improvement is more siginificant for $\calQ_i$'s than for $\calS_{2\times d^2}$. In the case of $\calS_{2\times d^2}$, it may be argued that the threshold tolerances predicted by the approach tend to a limit value, which probably lies between $0.18-0.20$. It is probably also the case for $Q_i$'s but the data are too limited to estimate the limits. An analogous behavior is observed for the values found with the algorithm for the GGM. In this case, however, the limit value for $\calS_{2\times d^2}$ lies significantly above the one predicted by the PPT mixtures and is most likely above $0.44$. We note that apart from the trivial qubit cases  $\calS_{2\times 2^2}$ and $\calQ_1^{3,2}$, the methods give the same threshold value only for $\calQ_2^{3,2}$. We conclude by observing that genuine entanglement of subspaces $Q_1$ and $\calQ_2$ displays very low white--noise tolerance.

\noindent\textit{Fidelity relaxation bounds on the GM (Section \ref{ppt-fidelity-relax}) and the algorithmic computation of the GM (Section \ref{algo})}. 
It turns out that the threshold values obtained using the PPT relaxations on the fidelity are the same as from the algorithm  for all three subspaces considered in the paper. As previously, it appears there are some limit values for the thresholds, which are significantly higher than the ones for GME, this is particularly visible for $Q_i$'s.

 The meaning of the threshold values obtained with the algorithm  is that in the region $0\le p \le p^{*algor.}_{gme}$ a state is GME, whenever $p^{*algor.}_{gme} < p \le p^{*algor.}_{ent}$ a state is entangled but the entanglement is not genuine, i.e., a state is biseparable but not fully separable, and, finally, above $p^*_{ent}$ a state is fully separable.
Concluding, let us observe that for all subspaces it holds that $p^*_{gme} < p^*_{ent.}$ with gaps being quite large. This probably is a generic behavior and it would thus be interesting to find an example of a subspace for which both values of the white--noise tolerance are equal. 

\begin{widetext}
	\begin{center}
\begin{table}
	\begin{tabular}{ccccc} \hline \hline
	Noise tolerance $p^*$	 & $d$   & $\calS^{\pi/2}_{2\times d^2}$ (Sec. \ref{th1-generalized}) &$\calQ_1^{3,d}$ (Sec. \ref{q1} ) & $\calQ_2^{3,d}$ (Sec. \ref{q2})\\ \hline
		&2&  0.571 / 0.571 & \: 0.571 / 0.571   & \;\;\;0.1041 / 0.352   \\
		   &  3 & 0.321 / 0.551  & 0.0490 / 0.302& \;\;0.0087 / 0.105  \\
	$p_{gme}^{*witn.}$/	$p_{ent.}^{*witn.}$ &  4 & 0.204 / 0.369  & 0.0024 / 0.076 & \hspace{0.1cm} 0.0003 / 0.017 \\
(Section \ref{witnes})  &  5 & 0.140 / 0.262  & \hspace{-0.2cm}$6\cdot 10^{-5}$ / 0.016 &  \hspace{-0.2cm}$6.3\cdot 10^{-6}$ / 0.0034 \\
    & 6  & 0.103 / 0.195  & --- & --- \\
    \hline
  	&2& 0.571 & 0.571   &  0.265  \\
  & 3  & 0.409 & 0.224 & 0.128  \\
	$p_{gme}^{*ppt}$   & 4  &  0.300 & 0.126 & 0.076 \\
	(Section \ref{mixtures})      & 5 & 0.243  & --- & --- \\
	       & 6  & 0.212 &--- & --- \\
	       \hline
	       	&2& 0.571 / 0.799  & 0.571 / 0.799  &  0.265 / 0.630  \\
	      & 3  &  0.409 / 0.692  & 0.224 / 0.653  &  0.128 / 0.582 \\
	     $p_{gme}^{*F}$/$p_{ent.}^{*F}$   & 4  & 0.300  / 0.639  & 0.126 / 0.612 & 0.076 / 0.577   \\
	    
	    (Section \ref{ppt-fidelity-relax})      & 5 &  0.243 / 0.609  &--- & --- \\
	      & 6  & 0.212  /  0.590   & --- & --- \\
	      \hline
	      	&2& 0.571 / 0.799 & 0.571 / 0.799 &  0.265 / 0.630 \\
	      & 3  & 0.506 / 0.692   &  0.254 / 0.653   &   0.145 / 0.582 \\
	     $p_{gme}^{*algor.}$/$p_{ent.}^{*algor.}$   & 4  & 0.473  / 0.639  &  0.152 / 0.612 &  0.133 / 0.577 \\
	     (Section \ref{algo})      & 5 &  0.457  / 0.609 &--- & --- \\
	     & 6  &  0.449 /  0.590 & --- & --- \\
		\hline \hline
	\end{tabular}
	\caption{The white--noise tolerance estimates for different subspaces as given by various methods. A long dash indicates problems too large for a desktop. See the main text for a discussion. }
	\label{noise-tolerance}
\end{table}
\end{center}
\end{widetext}

\section{Conclusions and outlook}\label{konkluzje}

We have considered the problem of quantification of entanglement of genuinely entangled subspaces (GESs), that is subspaces  composed only of genuinely multiparty entangled (GME) states,  mainly using the (generalized) geometric measure [(G)GM] of entanglement. This has been done from three qualitatively different perspectives exploiting both analytical and numerical methods. The main one has used the definition of the subspace entanglement in terms of the least entangled pure state from the subspace. We have proposed an analytical method to compute it and provided an easily implementable semidefinite program (SDP) to lower--bound it. We have observed that in many cases these two methods agreed. In particular, they reproduced, except for the tripartite case with $d=3$, the same results for a new class of a GES, $\calS_{2\times d^{N-1}}^{\theta}$, introduced in the paper.
In the second approach, we have asked about the entanglement of a state being a normalized projection onto a GES. Here, we have exploited the method of the PPT mixtures, but also used certain SDP relaxations and a direct numerical algorithm for approximating the (G)GM. Comparison of the latter two  have  revealed agreement of the methods for  $\calS_{2\times d^{N-1}}^{\theta}$ in the case of the GM except the curious case of $N=3$ and $d=3$. Finally, in the third approach, we have considered how much of the white--noise such a normalized projection tolerates before the state gets fully-- or bi--separable. In addition to the methods mentioned above we have also used the established here connection of the problem with entanglement witnesses.  We have  observed that in the case of the GGM the latter method predicted the lowest values of the white--noise tolerance, while the PPT mixtures intermediate ones in relation to the ''exact'' values predicted by the algorithm. in the case of the GM, the values obtained from the witnesses are again the lowest, but the ones from PPT relaxation on the fidelity and the algorithm match.

The results of the present paper provoke several questions and suggest future research directions.

From a specialized point of view, it would be interesting to identify conditions under which the SDP bounds on the subspace entanglement reproduce the true values of the latter. In particular, one should look carefully into the case of the subspace $\calS_{2\times d^{N-1}}^{\theta}$, where only a single case of $N=3$ and $d=3$ gives differing results,  the qubit  subspace $\calQ_2^{N,2}$ (or, possibly, in general the qubit case), and the antisymmetric subspace.
It would also be interesting to consider other ways to quantify entanglement of GESs, such as the average entanglement or the maximal entanglement of a vector drawn from a subspace, as well as using other entanglement measures. A natural direction regarding the noise tolerance of GESs is to consider their entanglement robustness to local noise, which is relevant for scenarios with the distribution of particles in networks. The latter is the subject of ongoing research \cite{miniatura}.

From a more general perspective, as entangled subspaces play important roles in different areas of quantum information theory, it is desirable to construct more examples of GESs with analytically computed properties. In particular, it would be of interest to have examples of highly, but not maximally, entangled  subspaces and investigate the possibility of their application in (approximate) QEC.

\section{Acknowledgements }

M.D. is partially supported by National Science Centre (Poland) through the MINIATURA 2 program (2018/02/X/ST2/01448). M.D. acknowledges the hospitality of ICFO -- The Institute of Photonic Sciences, where part of this work has been done. R.A. acknowledges the support
from the Foundation for Polish Science through the First TEAM project (First TEAM/2017-4/31) cofinanced
by the European Union under the European Regional Development Fund. We thank M. Studzi\'nski for pointing out to us Ref. \cite{losonczi}. 
To solve SDPs we used CVX, a package for specifying and solving convex programs \cite{cvx,GrantBoyd08}, accompanied by QETLAB:  A {MATLAB} toolbox for quantum entanglement \cite{qetlab}.

\bibliography{cytacje4}

\begin{thebibliography}{61}%
\makeatletter
\providecommand \@ifxundefined [1]{%
 \@ifx{#1\undefined}
}%
\providecommand \@ifnum [1]{%
 \ifnum #1\expandafter \@firstoftwo
 \else \expandafter \@secondoftwo
 \fi
}%
\providecommand \@ifx [1]{%
 \ifx #1\expandafter \@firstoftwo
 \else \expandafter \@secondoftwo
 \fi
}%
\providecommand \natexlab [1]{#1}%
\providecommand \enquote  [1]{``#1''}%
\providecommand \bibnamefont  [1]{#1}%
\providecommand \bibfnamefont [1]{#1}%
\providecommand \citenamefont [1]{#1}%
\providecommand \href@noop [0]{\@secondoftwo}%
\providecommand \href [0]{\begingroup \@sanitize@url \@href}%
\providecommand \@href[1]{\@@startlink{#1}\@@href}%
\providecommand \@@href[1]{\endgroup#1\@@endlink}%
\providecommand \@sanitize@url [0]{\catcode `\\12\catcode `\$12\catcode
  `\&12\catcode `\#12\catcode `\^12\catcode `\_12\catcode `\%12\relax}%
\providecommand \@@startlink[1]{}%
\providecommand \@@endlink[0]{}%
\providecommand \url  [0]{\begingroup\@sanitize@url \@url }%
\providecommand \@url [1]{\endgroup\@href {#1}{\urlprefix }}%
\providecommand \urlprefix  [0]{URL }%
\providecommand \Eprint [0]{\href }%
\providecommand \doibase [0]{http://dx.doi.org/}%
\providecommand \selectlanguage [0]{\@gobble}%
\providecommand \bibinfo  [0]{\@secondoftwo}%
\providecommand \bibfield  [0]{\@secondoftwo}%
\providecommand \translation [1]{[#1]}%
\providecommand \BibitemOpen [0]{}%
\providecommand \bibitemStop [0]{}%
\providecommand \bibitemNoStop [0]{.\EOS\space}%
\providecommand \EOS [0]{\spacefactor3000\relax}%
\providecommand \BibitemShut  [1]{\csname bibitem#1\endcsname}%
\let\auto@bib@innerbib\@empty
\bibitem [{\citenamefont {T\'oth}(2012)}]{toth-metro}%
  \BibitemOpen
  \bibfield  {author} {\bibinfo {author} {\bibfnamefont {G.}~\bibnamefont
  {T\'oth}},\ }\href {\doibase 10.1103/PhysRevA.85.022322} {\bibfield
  {journal} {\bibinfo  {journal} {Phys. Rev. A}\ }\textbf {\bibinfo {volume}
  {85}},\ \bibinfo {pages} {022322} (\bibinfo {year} {2012})}\BibitemShut
  {NoStop}%
\bibitem [{\citenamefont {Rossi}\ \emph {et~al.}(2013)\citenamefont {Rossi},
  \citenamefont {Bru\ss{}},\ and\ \citenamefont {Macchiavello}}]{grover-gme}%
  \BibitemOpen
  \bibfield  {author} {\bibinfo {author} {\bibfnamefont {M.}~\bibnamefont
  {Rossi}}, \bibinfo {author} {\bibfnamefont {D.}~\bibnamefont {Bru\ss{}}}, \
  and\ \bibinfo {author} {\bibfnamefont {C.}~\bibnamefont {Macchiavello}},\
  }\href {\doibase 10.1103/PhysRevA.87.022331} {\bibfield  {journal} {\bibinfo
  {journal} {Phys. Rev. A}\ }\textbf {\bibinfo {volume} {87}},\ \bibinfo
  {pages} {022331} (\bibinfo {year} {2013})}\BibitemShut {NoStop}%
\bibitem [{\citenamefont {Epping}\ \emph {et~al.}(2017)\citenamefont {Epping},
  \citenamefont {Kampermann}, \citenamefont {macchiavello},\ and\ \citenamefont
  {Bru{\ss}}}]{Epping-qkd}%
  \BibitemOpen
  \bibfield  {author} {\bibinfo {author} {\bibfnamefont {M.}~\bibnamefont
  {Epping}}, \bibinfo {author} {\bibfnamefont {H.}~\bibnamefont {Kampermann}},
  \bibinfo {author} {\bibfnamefont {C.}~\bibnamefont {macchiavello}}, \ and\
  \bibinfo {author} {\bibfnamefont {D.}~\bibnamefont {Bru{\ss}}},\ }\href
  {\doibase 10.1088/1367-2630/aa8487} {\bibfield  {journal} {\bibinfo
  {journal} {New Journal of Physics}\ }\textbf {\bibinfo {volume} {19}},\
  \bibinfo {pages} {093012} (\bibinfo {year} {2017})}\BibitemShut {NoStop}%
\bibitem [{\citenamefont {Osterloh}(2014)}]{Osterloh-2014}%
  \BibitemOpen
  \bibfield  {author} {\bibinfo {author} {\bibfnamefont {A.}~\bibnamefont
  {Osterloh}},\ }\href {\doibase 10.1088/1751-8113/47/49/495301} {\bibfield
  {journal} {\bibinfo  {journal} {Journal of Physics A: Mathematical and
  Theoretical}\ }\textbf {\bibinfo {volume} {47}},\ \bibinfo {pages} {495301}
  (\bibinfo {year} {2014})}\BibitemShut {NoStop}%
\bibitem [{\citenamefont {Goyeneche}\ and\ \citenamefont {\ifmmode~\dot{Z}\else
  \.{Z}\fi{}yczkowski}(2014)}]{goyeneche-karel}%
  \BibitemOpen
  \bibfield  {author} {\bibinfo {author} {\bibfnamefont {D.}~\bibnamefont
  {Goyeneche}}\ and\ \bibinfo {author} {\bibfnamefont {K.}~\bibnamefont
  {\ifmmode~\dot{Z}\else \.{Z}\fi{}yczkowski}},\ }\href {\doibase
  10.1103/PhysRevA.90.022316} {\bibfield  {journal} {\bibinfo  {journal} {Phys.
  Rev. A}\ }\textbf {\bibinfo {volume} {90}},\ \bibinfo {pages} {022316}
  (\bibinfo {year} {2014})}\BibitemShut {NoStop}%
\bibitem [{\citenamefont {Augusiak}\ \emph {et~al.}(2018)\citenamefont
  {Augusiak}, \citenamefont {Demianowicz},\ and\ \citenamefont
  {Tura}}]{polacos-gme-local}%
  \BibitemOpen
  \bibfield  {author} {\bibinfo {author} {\bibfnamefont {R.}~\bibnamefont
  {Augusiak}}, \bibinfo {author} {\bibfnamefont {M.}~\bibnamefont
  {Demianowicz}}, \ and\ \bibinfo {author} {\bibfnamefont {J.}~\bibnamefont
  {Tura}},\ }\href {\doibase 10.1103/PhysRevA.98.012321} {\bibfield  {journal}
  {\bibinfo  {journal} {Phys. Rev. A}\ }\textbf {\bibinfo {volume} {98}},\
  \bibinfo {pages} {012321} (\bibinfo {year} {2018})}\BibitemShut {NoStop}%
\bibitem [{\citenamefont {Fr\"owis}\ \emph {et~al.}(2017)\citenamefont
  {Fr\"owis}, \citenamefont {Strassman}, \citenamefont {Tiranov}, \citenamefont
  {C.}, \citenamefont {Lavoie}, \citenamefont {Busseries}, \citenamefont
  {Afzelius},\ and\ \citenamefont {Gisin}}]{BrunnerExp2017}%
  \BibitemOpen
  \bibfield  {author} {\bibinfo {author} {\bibfnamefont {F.}~\bibnamefont
  {Fr\"owis}}, \bibinfo {author} {\bibfnamefont {P.~C.}\ \bibnamefont
  {Strassman}}, \bibinfo {author} {\bibfnamefont {A.}~\bibnamefont {Tiranov}},
  \bibinfo {author} {\bibfnamefont {G.}~\bibnamefont {C.}}, \bibinfo {author}
  {\bibfnamefont {N.}~\bibnamefont {Lavoie}, \bibfnamefont {J.~Brunner}},
  \bibinfo {author} {\bibfnamefont {F.}~\bibnamefont {Busseries}}, \bibinfo
  {author} {\bibfnamefont {M.}~\bibnamefont {Afzelius}}, \ and\ \bibinfo
  {author} {\bibfnamefont {N.}~\bibnamefont {Gisin}},\ }\href@noop {}
  {\bibfield  {journal} {\bibinfo  {journal} {Nature Comm.}\ }\textbf {\bibinfo
  {volume} {8}},\ \bibinfo {pages} {907} (\bibinfo {year} {2017})}\BibitemShut
  {NoStop}%
\bibitem [{\citenamefont {Demianowicz}\ and\ \citenamefont
  {Augusiak}(2018)}]{upb-to-ges}%
  \BibitemOpen
  \bibfield  {author} {\bibinfo {author} {\bibfnamefont {M.}~\bibnamefont
  {Demianowicz}}\ and\ \bibinfo {author} {\bibfnamefont {R.}~\bibnamefont
  {Augusiak}},\ }\href {\doibase 10.1103/PhysRevA.98.012313} {\bibfield
  {journal} {\bibinfo  {journal} {Phys. Rev. A}\ }\textbf {\bibinfo {volume}
  {98}},\ \bibinfo {pages} {012313} (\bibinfo {year} {2018})}\BibitemShut
  {NoStop}%
\bibitem [{\citenamefont {Cubitt}\ \emph {et~al.}(2008)\citenamefont {Cubitt},
  \citenamefont {Montanaro},\ and\ \citenamefont {Winter}}]{schmidt-rank}%
  \BibitemOpen
  \bibfield  {author} {\bibinfo {author} {\bibfnamefont {T.}~\bibnamefont
  {Cubitt}}, \bibinfo {author} {\bibfnamefont {A.}~\bibnamefont {Montanaro}}, \
  and\ \bibinfo {author} {\bibfnamefont {A.}~\bibnamefont {Winter}},\ }\href
  {\doibase 10.1063/1.2862998} {\bibfield  {journal} {\bibinfo  {journal} {J.
  Math. Phys.}\ }\textbf {\bibinfo {volume} {49}},\ \bibinfo {pages} {022107}
  (\bibinfo {year} {2008})}\BibitemShut {NoStop}%
\bibitem [{\citenamefont {Parthasarathy}(2004)}]{ces-partha}%
  \BibitemOpen
  \bibfield  {author} {\bibinfo {author} {\bibfnamefont {K.}~\bibnamefont
  {Parthasarathy}},\ }\href {\doibase 10.1007/BF02829441} {\bibfield  {journal}
  {\bibinfo  {journal} {Int. J. Quantum Inform.}\ }\textbf {\bibinfo {volume}
  {114}},\ \bibinfo {pages} {365} (\bibinfo {year} {2004})}\BibitemShut
  {NoStop}%
\bibitem [{\citenamefont {Bhat}(2006)}]{ces-bhat}%
  \BibitemOpen
  \bibfield  {author} {\bibinfo {author} {\bibfnamefont {B.~V.~R.}\
  \bibnamefont {Bhat}},\ }\href {\doibase 10.1142/S0219749906001797} {\bibfield
   {journal} {\bibinfo  {journal} {Int. J. Quantum Inform.}\ }\textbf {\bibinfo
  {volume} {04}},\ \bibinfo {pages} {325} (\bibinfo {year} {2006})}\BibitemShut
  {NoStop}%
\bibitem [{\citenamefont {Gour}\ and\ \citenamefont
  {Wallach}(2007)}]{GourWallach}%
  \BibitemOpen
  \bibfield  {author} {\bibinfo {author} {\bibfnamefont {G.}~\bibnamefont
  {Gour}}\ and\ \bibinfo {author} {\bibfnamefont {N.~R.}\ \bibnamefont
  {Wallach}},\ }\href {\doibase 10.1103/PhysRevA.76.042309} {\bibfield
  {journal} {\bibinfo  {journal} {Phys. Rev. A}\ }\textbf {\bibinfo {volume}
  {76}},\ \bibinfo {pages} {042309} (\bibinfo {year} {2007})}\BibitemShut
  {NoStop}%
\bibitem [{\citenamefont {Raissi}\ \emph {et~al.}(2018)\citenamefont {Raissi},
  \citenamefont {Gogolin}, \citenamefont {Riera},\ and\ \citenamefont
  {Ac{\'{\i}}n}}]{zahra}%
  \BibitemOpen
  \bibfield  {author} {\bibinfo {author} {\bibfnamefont {Z.}~\bibnamefont
  {Raissi}}, \bibinfo {author} {\bibfnamefont {C.}~\bibnamefont {Gogolin}},
  \bibinfo {author} {\bibfnamefont {A.}~\bibnamefont {Riera}}, \ and\ \bibinfo
  {author} {\bibfnamefont {A.}~\bibnamefont {Ac{\'{\i}}n}},\ }\href {\doibase
  10.1088/1751-8121/aaa151} {\bibfield  {journal} {\bibinfo  {journal} {Journal
  of Physics A: Mathematical and Theoretical}\ }\textbf {\bibinfo {volume}
  {51}},\ \bibinfo {pages} {075301} (\bibinfo {year} {2018})}\BibitemShut
  {NoStop}%
\bibitem [{\citenamefont {Ball}(2019)}]{ball}%
  \BibitemOpen
  \bibfield  {author} {\bibinfo {author} {\bibfnamefont {S.}~\bibnamefont
  {Ball}},\ }\href {https://arxiv.org/pdf/1907.04391.pdf} {\bibfield  {journal}
  {\bibinfo  {journal} {arXiv:1907.04391v2 [quant-ph]}\ } (\bibinfo {year}
  {2019})}\BibitemShut {NoStop}%
\bibitem [{\citenamefont {Huber}\ and\ \citenamefont
  {Grassl}(2019)}]{felix-arxiv}%
  \BibitemOpen
  \bibfield  {author} {\bibinfo {author} {\bibfnamefont {F.}~\bibnamefont
  {Huber}}\ and\ \bibinfo {author} {\bibfnamefont {M.}~\bibnamefont {Grassl}},\
  }\href {https://arxiv.org/pdf/1907.07733.pdf} {\bibfield  {journal} {\bibinfo
   {journal} {arXiv:1907.07733 [quant-ph]}\ } (\bibinfo {year}
  {2019})}\BibitemShut {NoStop}%
\bibitem [{\citenamefont {Alsina}\ and\ \citenamefont
  {Razavi}(2019)}]{AME-alsina}%
  \BibitemOpen
  \bibfield  {author} {\bibinfo {author} {\bibfnamefont {D.}~\bibnamefont
  {Alsina}}\ and\ \bibinfo {author} {\bibfnamefont {M.}~\bibnamefont
  {Razavi}},\ }\href {https://arxiv.org/pdf/1907.11253.pdf} {\bibfield
  {journal} {\bibinfo  {journal} {arXiv:1907.11253 [quant-ph]}\ } (\bibinfo
  {year} {2019})}\BibitemShut {NoStop}%
\bibitem [{\citenamefont {Helwig}\ \emph {et~al.}(2012)\citenamefont {Helwig},
  \citenamefont {Cui}, \citenamefont {Latorre}, \citenamefont {Riera},\ and\
  \citenamefont {Lo}}]{Helwig}%
  \BibitemOpen
  \bibfield  {author} {\bibinfo {author} {\bibfnamefont {W.}~\bibnamefont
  {Helwig}}, \bibinfo {author} {\bibfnamefont {W.}~\bibnamefont {Cui}},
  \bibinfo {author} {\bibfnamefont {J.~I.}\ \bibnamefont {Latorre}}, \bibinfo
  {author} {\bibfnamefont {A.}~\bibnamefont {Riera}}, \ and\ \bibinfo {author}
  {\bibfnamefont {H.-K.}\ \bibnamefont {Lo}},\ }\href {\doibase
  10.1103/PhysRevA.86.052335} {\bibfield  {journal} {\bibinfo  {journal} {Phys.
  Rev. A}\ }\textbf {\bibinfo {volume} {86}},\ \bibinfo {pages} {052335}
  (\bibinfo {year} {2012})}\BibitemShut {NoStop}%
\bibitem [{\citenamefont {Goyeneche}\ \emph {et~al.}(2018)\citenamefont
  {Goyeneche}, \citenamefont {Raissi}, \citenamefont {Di~Martino},\ and\
  \citenamefont {\ifmmode~\dot{Z}\else \.{Z}\fi{}yczkowski}}]{dardo-AME}%
  \BibitemOpen
  \bibfield  {author} {\bibinfo {author} {\bibfnamefont {D.}~\bibnamefont
  {Goyeneche}}, \bibinfo {author} {\bibfnamefont {Z.}~\bibnamefont {Raissi}},
  \bibinfo {author} {\bibfnamefont {S.}~\bibnamefont {Di~Martino}}, \ and\
  \bibinfo {author} {\bibfnamefont {K.}~\bibnamefont {\ifmmode~\dot{Z}\else
  \.{Z}\fi{}yczkowski}},\ }\href {\doibase 10.1103/PhysRevA.97.062326}
  {\bibfield  {journal} {\bibinfo  {journal} {Phys. Rev. A}\ }\textbf {\bibinfo
  {volume} {97}},\ \bibinfo {pages} {062326} (\bibinfo {year}
  {2018})}\BibitemShut {NoStop}%
\bibitem [{\citenamefont {Wang}\ \emph {et~al.}(2019)\citenamefont {Wang},
  \citenamefont {Chen}, \citenamefont {Zhao},\ and\ \citenamefont
  {Guo}}]{Wang2019-ges}%
  \BibitemOpen
  \bibfield  {author} {\bibinfo {author} {\bibfnamefont {K.}~\bibnamefont
  {Wang}}, \bibinfo {author} {\bibfnamefont {L.}~\bibnamefont {Chen}}, \bibinfo
  {author} {\bibfnamefont {L.}~\bibnamefont {Zhao}}, \ and\ \bibinfo {author}
  {\bibfnamefont {Y.}~\bibnamefont {Guo}},\ }\href {\doibase
  10.1007/s11128-019-2324-4} {\bibfield  {journal} {\bibinfo  {journal}
  {Quantum Information Processing}\ }\textbf {\bibinfo {volume} {18}},\
  \bibinfo {pages} {202} (\bibinfo {year} {2019})}\BibitemShut {NoStop}%
\bibitem [{\citenamefont {Bennett}\ \emph {et~al.}(1999)\citenamefont
  {Bennett}, \citenamefont {DiVincenzo}, \citenamefont {Mor}, \citenamefont
  {Shor}, \citenamefont {Smolin},\ and\ \citenamefont {Terhal}}]{upb-bennett}%
  \BibitemOpen
  \bibfield  {author} {\bibinfo {author} {\bibfnamefont {C.~H.}\ \bibnamefont
  {Bennett}}, \bibinfo {author} {\bibfnamefont {D.~P.}\ \bibnamefont
  {DiVincenzo}}, \bibinfo {author} {\bibfnamefont {T.}~\bibnamefont {Mor}},
  \bibinfo {author} {\bibfnamefont {P.~W.}\ \bibnamefont {Shor}}, \bibinfo
  {author} {\bibfnamefont {J.~A.}\ \bibnamefont {Smolin}}, \ and\ \bibinfo
  {author} {\bibfnamefont {B.~M.}\ \bibnamefont {Terhal}},\ }\href {\doibase
  10.1103/PhysRevLett.82.5385} {\bibfield  {journal} {\bibinfo  {journal}
  {Phys. Rev. Lett.}\ }\textbf {\bibinfo {volume} {82}},\ \bibinfo {pages}
  {5385} (\bibinfo {year} {1999})}\BibitemShut {NoStop}%
\bibitem [{\citenamefont {De~Rinaldis}(2004)}]{DeRinaldis}%
  \BibitemOpen
  \bibfield  {author} {\bibinfo {author} {\bibfnamefont {S.}~\bibnamefont
  {De~Rinaldis}},\ }\href {\doibase 10.1103/PhysRevA.70.022309} {\bibfield
  {journal} {\bibinfo  {journal} {Phys. Rev. A}\ }\textbf {\bibinfo {volume}
  {70}},\ \bibinfo {pages} {022309} (\bibinfo {year} {2004})}\BibitemShut
  {NoStop}%
\bibitem [{\citenamefont {Duan}\ \emph {et~al.}(2010)\citenamefont {Duan},
  \citenamefont {Xin},\ and\ \citenamefont {Ying}}]{Duan2010}%
  \BibitemOpen
  \bibfield  {author} {\bibinfo {author} {\bibfnamefont {R.}~\bibnamefont
  {Duan}}, \bibinfo {author} {\bibfnamefont {Y.}~\bibnamefont {Xin}}, \ and\
  \bibinfo {author} {\bibfnamefont {M.}~\bibnamefont {Ying}},\ }\href {\doibase
  10.1103/PhysRevA.81.032329} {\bibfield  {journal} {\bibinfo  {journal} {Phys.
  Rev. A}\ }\textbf {\bibinfo {volume} {81}},\ \bibinfo {pages} {032329}
  (\bibinfo {year} {2010})}\BibitemShut {NoStop}%
\bibitem [{\citenamefont {Augusiak}\ \emph {et~al.}(2011)\citenamefont
  {Augusiak}, \citenamefont {Stasi\ifmmode~\acute{n}\else \'{n}\fi{}ska},
  \citenamefont {Hadley}, \citenamefont {Korbicz}, \citenamefont {Lewenstein},\
  and\ \citenamefont {Ac\'{\i}n}}]{upb-prl}%
  \BibitemOpen
  \bibfield  {author} {\bibinfo {author} {\bibfnamefont {R.}~\bibnamefont
  {Augusiak}}, \bibinfo {author} {\bibfnamefont {J.}~\bibnamefont
  {Stasi\ifmmode~\acute{n}\else \'{n}\fi{}ska}}, \bibinfo {author}
  {\bibfnamefont {C.}~\bibnamefont {Hadley}}, \bibinfo {author} {\bibfnamefont
  {J.~K.}\ \bibnamefont {Korbicz}}, \bibinfo {author} {\bibfnamefont
  {M.}~\bibnamefont {Lewenstein}}, \ and\ \bibinfo {author} {\bibfnamefont
  {A.}~\bibnamefont {Ac\'{\i}n}},\ }\href {\doibase
  10.1103/PhysRevLett.107.070401} {\bibfield  {journal} {\bibinfo  {journal}
  {Phys. Rev. Lett.}\ }\textbf {\bibinfo {volume} {107}},\ \bibinfo {pages}
  {070401} (\bibinfo {year} {2011})}\BibitemShut {NoStop}%
\bibitem [{\citenamefont {Agrawal}\ \emph {et~al.}(2019)\citenamefont
  {Agrawal}, \citenamefont {Halder},\ and\ \citenamefont
  {Banik}}]{ManikBanik-ges}%
  \BibitemOpen
  \bibfield  {author} {\bibinfo {author} {\bibfnamefont {S.}~\bibnamefont
  {Agrawal}}, \bibinfo {author} {\bibfnamefont {S.}~\bibnamefont {Halder}}, \
  and\ \bibinfo {author} {\bibfnamefont {M.}~\bibnamefont {Banik}},\ }\href
  {\doibase 10.1103/PhysRevA.99.032335} {\bibfield  {journal} {\bibinfo
  {journal} {Phys. Rev. A}\ }\textbf {\bibinfo {volume} {99}},\ \bibinfo
  {pages} {032335} (\bibinfo {year} {2019})}\BibitemShut {NoStop}%
\bibitem [{\citenamefont {Zhu}(2009)}]{average-ent}%
  \BibitemOpen
  \bibfield  {author} {\bibinfo {author} {\bibfnamefont {G.-Q.}\ \bibnamefont
  {Zhu}},\ }\href {\doibase 10.2478/s11534-008-0129-7} {\bibfield  {journal}
  {\bibinfo  {journal} {Central European Journal of Physics}\ }\textbf
  {\bibinfo {volume} {7}},\ \bibinfo {pages} {135} (\bibinfo {year}
  {2009})}\BibitemShut {NoStop}%
\bibitem [{\citenamefont {Cappellini}\ \emph {et~al.}(2006)\citenamefont
  {Cappellini}, \citenamefont {Sommers},\ and\ \citenamefont
  {\ifmmode~\dot{Z}\else \.{Z}\fi{}yczkowski}}]{average-Gconcurrence}%
  \BibitemOpen
  \bibfield  {author} {\bibinfo {author} {\bibfnamefont {V.}~\bibnamefont
  {Cappellini}}, \bibinfo {author} {\bibfnamefont {H.-J.}\ \bibnamefont
  {Sommers}}, \ and\ \bibinfo {author} {\bibfnamefont {K.}~\bibnamefont
  {\ifmmode~\dot{Z}\else \.{Z}\fi{}yczkowski}},\ }\href {\doibase
  10.1103/PhysRevA.74.062322} {\bibfield  {journal} {\bibinfo  {journal} {Phys.
  Rev. A}\ }\textbf {\bibinfo {volume} {74}},\ \bibinfo {pages} {062322}
  (\bibinfo {year} {2006})}\BibitemShut {NoStop}%
\bibitem [{\citenamefont {Gour}\ and\ \citenamefont
  {Roy}(2008)}]{Gour-max-ent}%
  \BibitemOpen
  \bibfield  {author} {\bibinfo {author} {\bibfnamefont {G.}~\bibnamefont
  {Gour}}\ and\ \bibinfo {author} {\bibfnamefont {A.}~\bibnamefont {Roy}},\
  }\href {\doibase 10.1103/PhysRevA.77.012336} {\bibfield  {journal} {\bibinfo
  {journal} {Phys. Rev. A}\ }\textbf {\bibinfo {volume} {77}},\ \bibinfo
  {pages} {012336} (\bibinfo {year} {2008})}\BibitemShut {NoStop}%
\bibitem [{\citenamefont {Shimony}(1995)}]{Shimony-geometric}%
  \BibitemOpen
  \bibfield  {author} {\bibinfo {author} {\bibfnamefont {A.}~\bibnamefont
  {Shimony}},\ }\href {\doibase 10.1111/j.1749-6632.1995.tb39008.x} {\bibfield
  {journal} {\bibinfo  {journal} {Annals of the New York Academy of Sciences}\
  }\textbf {\bibinfo {volume} {755}},\ \bibinfo {pages} {675} (\bibinfo {year}
  {1995})}\BibitemShut {NoStop}%
\bibitem [{\citenamefont {Barnum}\ and\ \citenamefont
  {Linden}(2001)}]{BL-geometric}%
  \BibitemOpen
  \bibfield  {author} {\bibinfo {author} {\bibfnamefont {H.}~\bibnamefont
  {Barnum}}\ and\ \bibinfo {author} {\bibfnamefont {N.}~\bibnamefont
  {Linden}},\ }\href {http://stacks.iop.org/0305-4470/34/i=35/a=305} {\bibfield
   {journal} {\bibinfo  {journal} {Journal of Physics A: Mathematical and
  General}\ }\textbf {\bibinfo {volume} {34}},\ \bibinfo {pages} {6787}
  (\bibinfo {year} {2001})}\BibitemShut {NoStop}%
\bibitem [{\citenamefont {Wei}\ and\ \citenamefont
  {Goldbart}(2003)}]{WeiGoldbart}%
  \BibitemOpen
  \bibfield  {author} {\bibinfo {author} {\bibfnamefont {T.-C.}\ \bibnamefont
  {Wei}}\ and\ \bibinfo {author} {\bibfnamefont {P.~M.}\ \bibnamefont
  {Goldbart}},\ }\href {\doibase 10.1103/PhysRevA.68.042307} {\bibfield
  {journal} {\bibinfo  {journal} {Phys. Rev. A}\ }\textbf {\bibinfo {volume}
  {68}},\ \bibinfo {pages} {042307} (\bibinfo {year} {2003})}\BibitemShut
  {NoStop}%
\bibitem [{\citenamefont {Das}\ \emph {et~al.}(2016)\citenamefont {Das},
  \citenamefont {Roy}, \citenamefont {Bagchi}, \citenamefont {Misra},
  \citenamefont {Sen(De)},\ and\ \citenamefont {Sen}}]{GGM-multiparty}%
  \BibitemOpen
  \bibfield  {author} {\bibinfo {author} {\bibfnamefont {T.}~\bibnamefont
  {Das}}, \bibinfo {author} {\bibfnamefont {S.~S.}\ \bibnamefont {Roy}},
  \bibinfo {author} {\bibfnamefont {S.}~\bibnamefont {Bagchi}}, \bibinfo
  {author} {\bibfnamefont {A.}~\bibnamefont {Misra}}, \bibinfo {author}
  {\bibfnamefont {A.}~\bibnamefont {Sen(De)}}, \ and\ \bibinfo {author}
  {\bibfnamefont {U.}~\bibnamefont {Sen}},\ }\href {\doibase
  10.1103/PhysRevA.94.022336} {\bibfield  {journal} {\bibinfo  {journal} {Phys.
  Rev. A}\ }\textbf {\bibinfo {volume} {94}},\ \bibinfo {pages} {022336}
  (\bibinfo {year} {2016})}\BibitemShut {NoStop}%
\bibitem [{\citenamefont {Markham}\ \emph {et~al.}(2007)\citenamefont
  {Markham}, \citenamefont {Miyake},\ and\ \citenamefont
  {Virmani}}]{Markham-2007}%
  \BibitemOpen
  \bibfield  {author} {\bibinfo {author} {\bibfnamefont {D.}~\bibnamefont
  {Markham}}, \bibinfo {author} {\bibfnamefont {A.}~\bibnamefont {Miyake}}, \
  and\ \bibinfo {author} {\bibfnamefont {S.}~\bibnamefont {Virmani}},\ }\href
  {\doibase 10.1088/1367-2630/9/6/194} {\bibfield  {journal} {\bibinfo
  {journal} {New Journal of Physics}\ }\textbf {\bibinfo {volume} {9}},\
  \bibinfo {pages} {194} (\bibinfo {year} {2007})}\BibitemShut {NoStop}%
\bibitem [{\citenamefont {Or\'us}(2008)}]{orus-2008}%
  \BibitemOpen
  \bibfield  {author} {\bibinfo {author} {\bibfnamefont {R.}~\bibnamefont
  {Or\'us}},\ }\href {\doibase 10.1103/PhysRevLett.100.130502} {\bibfield
  {journal} {\bibinfo  {journal} {Phys. Rev. Lett.}\ }\textbf {\bibinfo
  {volume} {100}},\ \bibinfo {pages} {130502} (\bibinfo {year}
  {2008})}\BibitemShut {NoStop}%
\bibitem [{\citenamefont {Hayashi}\ \emph {et~al.}(2008)\citenamefont
  {Hayashi}, \citenamefont {Markham}, \citenamefont {Murao}, \citenamefont
  {Owari},\ and\ \citenamefont {Virmani}}]{hayashi-gm-application}%
  \BibitemOpen
  \bibfield  {author} {\bibinfo {author} {\bibfnamefont {M.}~\bibnamefont
  {Hayashi}}, \bibinfo {author} {\bibfnamefont {D.}~\bibnamefont {Markham}},
  \bibinfo {author} {\bibfnamefont {M.}~\bibnamefont {Murao}}, \bibinfo
  {author} {\bibfnamefont {M.}~\bibnamefont {Owari}}, \ and\ \bibinfo {author}
  {\bibfnamefont {S.}~\bibnamefont {Virmani}},\ }\href {\doibase
  10.1103/PhysRevA.77.012104} {\bibfield  {journal} {\bibinfo  {journal} {Phys.
  Rev. A}\ }\textbf {\bibinfo {volume} {77}},\ \bibinfo {pages} {012104}
  (\bibinfo {year} {2008})}\BibitemShut {NoStop}%
\bibitem [{\citenamefont {Gross}\ \emph {et~al.}(2009)\citenamefont {Gross},
  \citenamefont {Flammia},\ and\ \citenamefont {Eisert}}]{gross-to-entangled}%
  \BibitemOpen
  \bibfield  {author} {\bibinfo {author} {\bibfnamefont {D.}~\bibnamefont
  {Gross}}, \bibinfo {author} {\bibfnamefont {S.~T.}\ \bibnamefont {Flammia}},
  \ and\ \bibinfo {author} {\bibfnamefont {J.}~\bibnamefont {Eisert}},\ }\href
  {\doibase 10.1103/PhysRevLett.102.190501} {\bibfield  {journal} {\bibinfo
  {journal} {Phys. Rev. Lett.}\ }\textbf {\bibinfo {volume} {102}},\ \bibinfo
  {pages} {190501} (\bibinfo {year} {2009})}\BibitemShut {NoStop}%
\bibitem [{\citenamefont {Kaszlikowski}\ \emph {et~al.}(2008)\citenamefont
  {Kaszlikowski}, \citenamefont {Sen(De)}, \citenamefont {Sen}, \citenamefont
  {Vedral},\ and\ \citenamefont {Winter}}]{WW-GES}%
  \BibitemOpen
  \bibfield  {author} {\bibinfo {author} {\bibfnamefont {D.}~\bibnamefont
  {Kaszlikowski}}, \bibinfo {author} {\bibfnamefont {A.}~\bibnamefont
  {Sen(De)}}, \bibinfo {author} {\bibfnamefont {U.}~\bibnamefont {Sen}},
  \bibinfo {author} {\bibfnamefont {V.}~\bibnamefont {Vedral}}, \ and\ \bibinfo
  {author} {\bibfnamefont {A.}~\bibnamefont {Winter}},\ }\href {\doibase
  10.1103/PhysRevLett.101.070502} {\bibfield  {journal} {\bibinfo  {journal}
  {Phys. Rev. Lett.}\ }\textbf {\bibinfo {volume} {101}},\ \bibinfo {pages}
  {070502} (\bibinfo {year} {2008})}\BibitemShut {NoStop}%
\bibitem [{\citenamefont {Sen(De)}\ and\ \citenamefont
  {Sen}(2010)}]{Senowie-GGMoE}%
  \BibitemOpen
  \bibfield  {author} {\bibinfo {author} {\bibfnamefont {A.}~\bibnamefont
  {Sen(De)}}\ and\ \bibinfo {author} {\bibfnamefont {U.}~\bibnamefont {Sen}},\
  }\href {\doibase 10.1103/PhysRevA.81.012308} {\bibfield  {journal} {\bibinfo
  {journal} {Phys. Rev. A}\ }\textbf {\bibinfo {volume} {81}},\ \bibinfo
  {pages} {012308} (\bibinfo {year} {2010})}\BibitemShut {NoStop}%
\bibitem [{\citenamefont {Branciard}\ \emph {et~al.}(2010)\citenamefont
  {Branciard}, \citenamefont {Zhu}, \citenamefont {Chen},\ and\ \citenamefont
  {Scarani}}]{Branciard}%
  \BibitemOpen
  \bibfield  {author} {\bibinfo {author} {\bibfnamefont {C.}~\bibnamefont
  {Branciard}}, \bibinfo {author} {\bibfnamefont {H.}~\bibnamefont {Zhu}},
  \bibinfo {author} {\bibfnamefont {L.}~\bibnamefont {Chen}}, \ and\ \bibinfo
  {author} {\bibfnamefont {V.}~\bibnamefont {Scarani}},\ }\href {\doibase
  10.1103/PhysRevA.82.012327} {\bibfield  {journal} {\bibinfo  {journal} {Phys.
  Rev. A}\ }\textbf {\bibinfo {volume} {82}},\ \bibinfo {pages} {012327}
  (\bibinfo {year} {2010})}\BibitemShut {NoStop}%
\bibitem [{\citenamefont {Blasone}\ \emph {et~al.}(2008)\citenamefont
  {Blasone}, \citenamefont {Dell'Anno}, \citenamefont {De~Siena},\ and\
  \citenamefont {Illuminati}}]{Blasone2008}%
  \BibitemOpen
  \bibfield  {author} {\bibinfo {author} {\bibfnamefont {M.}~\bibnamefont
  {Blasone}}, \bibinfo {author} {\bibfnamefont {F.}~\bibnamefont {Dell'Anno}},
  \bibinfo {author} {\bibfnamefont {S.}~\bibnamefont {De~Siena}}, \ and\
  \bibinfo {author} {\bibfnamefont {F.}~\bibnamefont {Illuminati}},\ }\href
  {\doibase 10.1103/PhysRevA.77.062304} {\bibfield  {journal} {\bibinfo
  {journal} {Phys. Rev. A}\ }\textbf {\bibinfo {volume} {77}},\ \bibinfo
  {pages} {062304} (\bibinfo {year} {2008})}\BibitemShut {NoStop}%
\bibitem [{\citenamefont {Losonczi}(1992)}]{losonczi}%
  \BibitemOpen
  \bibfield  {author} {\bibinfo {author} {\bibfnamefont {L.}~\bibnamefont
  {Losonczi}},\ }\href@noop {} {\bibfield  {journal} {\bibinfo  {journal} {Acta
  Mathematica Hungarica}\ }\textbf {\bibinfo {volume} {60}},\ \bibinfo {pages}
  {309} (\bibinfo {year} {1992})}\BibitemShut {NoStop}%
\bibitem [{\citenamefont {Huang}(2014)}]{Huang-NP}%
  \BibitemOpen
  \bibfield  {author} {\bibinfo {author} {\bibfnamefont {Y.}~\bibnamefont
  {Huang}},\ }\href {\doibase 10.1088/1367-2630/16/3/033027} {\bibfield
  {journal} {\bibinfo  {journal} {New Journal of Physics}\ }\textbf {\bibinfo
  {volume} {16}},\ \bibinfo {pages} {033027} (\bibinfo {year}
  {2014})}\BibitemShut {NoStop}%
\bibitem [{\citenamefont {Zhang}\ \emph {et~al.}(2019)\citenamefont {Zhang},
  \citenamefont {Dai}, \citenamefont {Dong},\ and\ \citenamefont
  {Zhang}}]{GM-PPT-new}%
  \BibitemOpen
  \bibfield  {author} {\bibinfo {author} {\bibfnamefont {Z.}~\bibnamefont
  {Zhang}}, \bibinfo {author} {\bibfnamefont {Y.}~\bibnamefont {Dai}}, \bibinfo
  {author} {\bibfnamefont {Y.}~\bibnamefont {Dong}}, \ and\ \bibinfo {author}
  {\bibfnamefont {C.}~\bibnamefont {Zhang}},\ }\href
  {https://arxiv.org/pdf/1903.10944.pdf} {\bibfield  {journal} {\bibinfo
  {journal} {arXiv:1903.10944 [quant-ph]}\ } (\bibinfo {year}
  {2019})}\BibitemShut {NoStop}%
\bibitem [{\citenamefont {Aloy}\ \emph {et~al.}(2019)\citenamefont {Aloy},
  \citenamefont {Tura}, \citenamefont {Baccari}, \citenamefont {Acín},
  \citenamefont {Lewenstein},\ and\ \citenamefont {Augusiak}}]{ent-depth}%
  \BibitemOpen
  \bibfield  {author} {\bibinfo {author} {\bibfnamefont {A.}~\bibnamefont
  {Aloy}}, \bibinfo {author} {\bibfnamefont {J.}~\bibnamefont {Tura}}, \bibinfo
  {author} {\bibfnamefont {F.}~\bibnamefont {Baccari}}, \bibinfo {author}
  {\bibfnamefont {A.}~\bibnamefont {Acín}}, \bibinfo {author} {\bibfnamefont
  {M.}~\bibnamefont {Lewenstein}}, \ and\ \bibinfo {author} {\bibfnamefont
  {R.}~\bibnamefont {Augusiak}},\ }\href {https://arxiv.org/pdf/1807.06027.pdf}
  {\bibfield  {journal} {\bibinfo  {journal} {arXiv:1807.06027 [quant-ph]}\ }
  (\bibinfo {year} {2019})}\BibitemShut {NoStop}%
\bibitem [{\citenamefont {Doherty}\ \emph {et~al.}(2004)\citenamefont
  {Doherty}, \citenamefont {Parrilo},\ and\ \citenamefont {Spedalieri}}]{DPS}%
  \BibitemOpen
  \bibfield  {author} {\bibinfo {author} {\bibfnamefont {A.~C.}\ \bibnamefont
  {Doherty}}, \bibinfo {author} {\bibfnamefont {P.~A.}\ \bibnamefont
  {Parrilo}}, \ and\ \bibinfo {author} {\bibfnamefont {F.~M.}\ \bibnamefont
  {Spedalieri}},\ }\href {\doibase 10.1103/PhysRevA.69.022308} {\bibfield
  {journal} {\bibinfo  {journal} {Phys. Rev. A}\ }\textbf {\bibinfo {volume}
  {69}},\ \bibinfo {pages} {022308} (\bibinfo {year} {2004})}\BibitemShut
  {NoStop}%
\bibitem [{\citenamefont {Navascu{\'e}s}\ \emph {et~al.}(2009)\citenamefont
  {Navascu{\'e}s}, \citenamefont {Owari},\ and\ \citenamefont
  {Plenio}}]{power-of-PPT}%
  \BibitemOpen
  \bibfield  {author} {\bibinfo {author} {\bibfnamefont {M.}~\bibnamefont
  {Navascu{\'e}s}}, \bibinfo {author} {\bibfnamefont {M.}~\bibnamefont
  {Owari}}, \ and\ \bibinfo {author} {\bibfnamefont {M.~B.}\ \bibnamefont
  {Plenio}},\ }\enquote {\bibinfo {title} {On the power of the ppt constraint
  in the symmetric extensions test for separability},}\ in\ \href {\doibase
  10.1007/978-3-642-10698-9_10} {\emph {\bibinfo {booktitle} {Theory of Quantum
  Computation, Communication, and Cryptography: 4th Workshop,TQC 2009,
  Waterloo, Canada, May 11-13, 2009, Revised Selected Papers}}},\ \bibinfo
  {editor} {edited by\ \bibinfo {editor} {\bibfnamefont {A.}~\bibnamefont
  {Childs}}\ and\ \bibinfo {editor} {\bibfnamefont {M.}~\bibnamefont {Mosca}}}\
  (\bibinfo  {publisher} {Springer Berlin Heidelberg},\ \bibinfo {address}
  {Berlin, Heidelberg},\ \bibinfo {year} {2009})\ pp.\ \bibinfo {pages}
  {94--106}\BibitemShut {NoStop}%
\bibitem [{\citenamefont {Eisert}\ \emph {et~al.}(2004)\citenamefont {Eisert},
  \citenamefont {Hyllus}, \citenamefont {G\"uhne},\ and\ \citenamefont
  {Curty}}]{EntCriteria-hierarchy}%
  \BibitemOpen
  \bibfield  {author} {\bibinfo {author} {\bibfnamefont {J.}~\bibnamefont
  {Eisert}}, \bibinfo {author} {\bibfnamefont {P.}~\bibnamefont {Hyllus}},
  \bibinfo {author} {\bibfnamefont {O.}~\bibnamefont {G\"uhne}}, \ and\
  \bibinfo {author} {\bibfnamefont {M.}~\bibnamefont {Curty}},\ }\href
  {\doibase 10.1103/PhysRevA.70.062317} {\bibfield  {journal} {\bibinfo
  {journal} {Phys. Rev. A}\ }\textbf {\bibinfo {volume} {70}},\ \bibinfo
  {pages} {062317} (\bibinfo {year} {2004})}\BibitemShut {NoStop}%
\bibitem [{\citenamefont {Breuer}(2006)}]{Breuer}%
  \BibitemOpen
  \bibfield  {author} {\bibinfo {author} {\bibfnamefont {H.-P.}\ \bibnamefont
  {Breuer}},\ }\href {\doibase 10.1103/PhysRevLett.97.080501} {\bibfield
  {journal} {\bibinfo  {journal} {Phys. Rev. Lett.}\ }\textbf {\bibinfo
  {volume} {97}},\ \bibinfo {pages} {080501} (\bibinfo {year}
  {2006})}\BibitemShut {NoStop}%
\bibitem [{\citenamefont {Hall}(2006)}]{hall}%
  \BibitemOpen
  \bibfield  {author} {\bibinfo {author} {\bibfnamefont {W.}~\bibnamefont
  {Hall}},\ }\href {\doibase 10.1088/0305-4470/39/45/020} {\bibfield  {journal}
  {\bibinfo  {journal} {Journal of Physics A: Mathematical and General}\
  }\textbf {\bibinfo {volume} {39}},\ \bibinfo {pages} {14119} (\bibinfo {year}
  {2006})}\BibitemShut {NoStop}%
\bibitem [{\citenamefont {Lancien}\ \emph {et~al.}(2015)\citenamefont
  {Lancien}, \citenamefont {Gühne}, \citenamefont {Sengupta},\ and\
  \citenamefont {Huber}}]{Lancien_2015}%
  \BibitemOpen
  \bibfield  {author} {\bibinfo {author} {\bibfnamefont {C.}~\bibnamefont
  {Lancien}}, \bibinfo {author} {\bibfnamefont {O.}~\bibnamefont {Gühne}},
  \bibinfo {author} {\bibfnamefont {R.}~\bibnamefont {Sengupta}}, \ and\
  \bibinfo {author} {\bibfnamefont {M.}~\bibnamefont {Huber}},\ }\href
  {\doibase 10.1088/1751-8113/48/50/505302} {\bibfield  {journal} {\bibinfo
  {journal} {Journal of Physics A: Mathematical and Theoretical}\ }\textbf
  {\bibinfo {volume} {48}},\ \bibinfo {pages} {505302} (\bibinfo {year}
  {2015})}\BibitemShut {NoStop}%
\bibitem [{\citenamefont {Zhu}\ \emph {et~al.}(2010)\citenamefont {Zhu},
  \citenamefont {Chen},\ and\ \citenamefont {Hayashi}}]{geometric-hayashi}%
  \BibitemOpen
  \bibfield  {author} {\bibinfo {author} {\bibfnamefont {H.}~\bibnamefont
  {Zhu}}, \bibinfo {author} {\bibfnamefont {L.}~\bibnamefont {Chen}}, \ and\
  \bibinfo {author} {\bibfnamefont {M.}~\bibnamefont {Hayashi}},\ }\href
  {\doibase 10.1088/1367-2630/12/8/083002} {\bibfield  {journal} {\bibinfo
  {journal} {New Journal of Physics}\ }\textbf {\bibinfo {volume} {12}},\
  \bibinfo {pages} {083002} (\bibinfo {year} {2010})}\BibitemShut {NoStop}%
\bibitem [{\citenamefont {Bracken}(2004)}]{Bracken}%
  \BibitemOpen
  \bibfield  {author} {\bibinfo {author} {\bibfnamefont {A.~J.}\ \bibnamefont
  {Bracken}},\ }\href {\doibase 10.1103/PhysRevA.69.052331} {\bibfield
  {journal} {\bibinfo  {journal} {Phys. Rev. A}\ }\textbf {\bibinfo {volume}
  {69}},\ \bibinfo {pages} {052331} (\bibinfo {year} {2004})}\BibitemShut
  {NoStop}%
\bibitem [{\citenamefont {Lewenstein}\ \emph {et~al.}(2001)\citenamefont
  {Lewenstein}, \citenamefont {Kraus}, \citenamefont {Horodecki},\ and\
  \citenamefont {Cirac}}]{LewensteinWitnesses2001}%
  \BibitemOpen
  \bibfield  {author} {\bibinfo {author} {\bibfnamefont {M.}~\bibnamefont
  {Lewenstein}}, \bibinfo {author} {\bibfnamefont {B.}~\bibnamefont {Kraus}},
  \bibinfo {author} {\bibfnamefont {P.}~\bibnamefont {Horodecki}}, \ and\
  \bibinfo {author} {\bibfnamefont {J.~I.}\ \bibnamefont {Cirac}},\ }\href
  {\doibase 10.1103/PhysRevA.63.044304} {\bibfield  {journal} {\bibinfo
  {journal} {Phys. Rev. A}\ }\textbf {\bibinfo {volume} {63}},\ \bibinfo
  {pages} {044304} (\bibinfo {year} {2001})}\BibitemShut {NoStop}%
\bibitem [{\citenamefont {Jungnitsch}\ \emph {et~al.}(2011)\citenamefont
  {Jungnitsch}, \citenamefont {Moroder},\ and\ \citenamefont
  {G\"uhne}}]{taming}%
  \BibitemOpen
  \bibfield  {author} {\bibinfo {author} {\bibfnamefont {B.}~\bibnamefont
  {Jungnitsch}}, \bibinfo {author} {\bibfnamefont {T.}~\bibnamefont {Moroder}},
  \ and\ \bibinfo {author} {\bibfnamefont {O.}~\bibnamefont {G\"uhne}},\ }\href
  {\doibase 10.1103/PhysRevLett.106.190502} {\bibfield  {journal} {\bibinfo
  {journal} {Phys. Rev. Lett.}\ }\textbf {\bibinfo {volume} {106}},\ \bibinfo
  {pages} {190502} (\bibinfo {year} {2011})}\BibitemShut {NoStop}%
\bibitem [{\citenamefont {Streltsov}\ \emph {et~al.}(2010)\citenamefont
  {Streltsov}, \citenamefont {Kampermann},\ and\ \citenamefont
  {Bru{\ss}}}]{Streltsov-gm-fidelity}%
  \BibitemOpen
  \bibfield  {author} {\bibinfo {author} {\bibfnamefont {A.}~\bibnamefont
  {Streltsov}}, \bibinfo {author} {\bibfnamefont {H.}~\bibnamefont
  {Kampermann}}, \ and\ \bibinfo {author} {\bibfnamefont {D.}~\bibnamefont
  {Bru{\ss}}},\ }\href {\doibase 10.1088/1367-2630/12/12/123004} {\bibfield
  {journal} {\bibinfo  {journal} {New Journal of Physics}\ }\textbf {\bibinfo
  {volume} {12}},\ \bibinfo {pages} {123004} (\bibinfo {year}
  {2010})}\BibitemShut {NoStop}%
\bibitem [{\citenamefont {Regula}\ \emph {et~al.}(2018)\citenamefont {Regula},
  \citenamefont {Piani}, \citenamefont {Cianciaruso}, \citenamefont {Bromley},
  \citenamefont {Streltsov},\ and\ \citenamefont {Adesso}}]{Regula2018}%
  \BibitemOpen
  \bibfield  {author} {\bibinfo {author} {\bibfnamefont {B.}~\bibnamefont
  {Regula}}, \bibinfo {author} {\bibfnamefont {M.}~\bibnamefont {Piani}},
  \bibinfo {author} {\bibfnamefont {M.}~\bibnamefont {Cianciaruso}}, \bibinfo
  {author} {\bibfnamefont {T.~R.}\ \bibnamefont {Bromley}}, \bibinfo {author}
  {\bibfnamefont {A.}~\bibnamefont {Streltsov}}, \ and\ \bibinfo {author}
  {\bibfnamefont {G.}~\bibnamefont {Adesso}},\ }\href {\doibase
  10.1088/1367-2630/aaae9d} {\bibfield  {journal} {\bibinfo  {journal} {New
  Journal of Physics}\ }\textbf {\bibinfo {volume} {20}},\ \bibinfo {pages}
  {033012} (\bibinfo {year} {2018})}\BibitemShut {NoStop}%
\bibitem [{\citenamefont {Watrous}(2013)}]{Watrous}%
  \BibitemOpen
  \bibfield  {author} {\bibinfo {author} {\bibfnamefont {J.}~\bibnamefont
  {Watrous}},\ }\href@noop {} {\bibfield  {journal} {\bibinfo  {journal}
  {Chicago Journal of Theoretical Computer Science}\ }\textbf {\bibinfo
  {volume} {2013}},\ \bibinfo {pages} {8} (\bibinfo {year} {2013})}\BibitemShut
  {NoStop}%
\bibitem [{\citenamefont {Streltsov}\ \emph {et~al.}(2011)\citenamefont
  {Streltsov}, \citenamefont {Kampermann},\ and\ \citenamefont
  {Bru\ss{}}}]{algorithm-gm}%
  \BibitemOpen
  \bibfield  {author} {\bibinfo {author} {\bibfnamefont {A.}~\bibnamefont
  {Streltsov}}, \bibinfo {author} {\bibfnamefont {H.}~\bibnamefont
  {Kampermann}}, \ and\ \bibinfo {author} {\bibfnamefont {D.}~\bibnamefont
  {Bru\ss{}}},\ }\href {\doibase 10.1103/PhysRevA.84.022323} {\bibfield
  {journal} {\bibinfo  {journal} {Phys. Rev. A}\ }\textbf {\bibinfo {volume}
  {84}},\ \bibinfo {pages} {022323} (\bibinfo {year} {2011})}\BibitemShut
  {NoStop}%
\bibitem [{\citenamefont {Demianowicz}()}]{miniatura}%
  \BibitemOpen
  \bibfield  {author} {\bibinfo {author} {\bibfnamefont {M.}~\bibnamefont
  {Demianowicz}},\ }\href@noop {} {\enquote {\bibinfo {title} {Entanglement
  robustness to local noise of genuinely entangled subspaces},}\ }\bibinfo
  {note} {In preparation}\BibitemShut {NoStop}%
\bibitem [{\citenamefont {Grant}\ and\ \citenamefont {Boyd}(2014)}]{cvx}%
  \BibitemOpen
  \bibfield  {author} {\bibinfo {author} {\bibfnamefont {M.}~\bibnamefont
  {Grant}}\ and\ \bibinfo {author} {\bibfnamefont {S.}~\bibnamefont {Boyd}},\
  }\href@noop {} {\enquote {\bibinfo {title} {{CVX}: Matlab software for
  disciplined convex programming, version 2.1},}\ }\bibinfo {howpublished}
  {\url{http://cvxr.com/cvx}} (\bibinfo {year} {2014})\BibitemShut {NoStop}%
\bibitem [{\citenamefont {Grant}\ and\ \citenamefont
  {Boyd}(2008)}]{GrantBoyd08}%
  \BibitemOpen
  \bibfield  {author} {\bibinfo {author} {\bibfnamefont {M.}~\bibnamefont
  {Grant}}\ and\ \bibinfo {author} {\bibfnamefont {S.}~\bibnamefont {Boyd}},\
  }in\ \href@noop {} {\emph {\bibinfo {booktitle} {Recent Advances in Learning
  and Control}}},\ \bibinfo {series and number} {Lecture Notes in Control and
  Information Sciences},\ \bibinfo {editor} {edited by\ \bibinfo {editor}
  {\bibfnamefont {V.}~\bibnamefont {Blondel}}, \bibinfo {editor} {\bibfnamefont
  {S.}~\bibnamefont {Boyd}}, \ and\ \bibinfo {editor} {\bibfnamefont
  {H.}~\bibnamefont {Kimura}}}\ (\bibinfo  {publisher} {Springer-Verlag
  Limited},\ \bibinfo {year} {2008})\ pp.\ \bibinfo {pages} {95--110},\
  \bibinfo {note} {\url{http://stanford.edu/~boyd/graph_dcp.html}}\BibitemShut
  {NoStop}%
\bibitem [{\citenamefont {Johnston}(2016)}]{qetlab}%
  \BibitemOpen
  \bibfield  {author} {\bibinfo {author} {\bibfnamefont {N.}~\bibnamefont
  {Johnston}},\ }\href {\doibase 10.5281/zenodo.44637} {\enquote {\bibinfo
  {title} {{QETLAB}: A {MATLAB} toolbox for quantum entanglement, version
  0.9},}\ }\bibinfo {howpublished} {\url{http://qetlab.com}} (\bibinfo {year}
  {2016})\BibitemShut {NoStop}%
\end{thebibliography}%
\appendix
\section{Proof of Theorem \ref{ces-ent} (section \ref{ces-new})}\label{app-proof-1}

\begin{proof}
	We construct the matrix $\frakS_A$ from (\ref{S-matrix-A}) choosing the vectors on subsystem $A$ to be $\ket{x}=x_0 \ket{0}+x_1 \ket{1}$ with $|x_0|^2+|x_1|^2=1$. We obtain (we explicitly state the dependence on $x_0$ of this matrix):
	\beqn
	\frakS_A (x_0)&=&\sum_{i=0}^{d-2} \langle{x}\proj{\phi_i}x\rangle \\
	&=& \sum_{i=0}^{d-2} \Bigg( |a x_0|^2 \proj{\psi_i}+ |bx_1|^2 \proj{\psi_{i+1}} 
	\non 
	&&+
	 ax_1(bx_0)^* \ket{\psi_i} \bra{\psi_{i+1}} +(ax_1)^* b x_0  \ket{\psi_{i+1}} \bra{\psi_{i}} \Bigg). \nonumber
	\eeqn
	In the basis of $\ket{\psi_i}$'s it is a $d \times d$ tridiagonal matrix of the form:
	\beqn 
	\frakS_A (x_0)= \left( \begin{array}{cccccc}
		\alpha&g &0 & \cdots &0 &0\\
		g^* & 	\alpha+\beta&g& \cdots &0&0\\
		0& g^* &\alpha+\beta & \cdots &0&0\\
		\vdots & \vdots & \ddots & \ddots &\vdots & \vdots\\
		0& 0  &\ddots &\alpha+\beta&g &0\\
		0& 0 &\cdots & g^*  &\alpha+\beta &g\\
		0&0 &\cdots & 0&g^*  &\beta
	\end{array}     \right),\non
	\eeqn
	where $\alpha= |a x_0|^2$ , $\beta= |b x_1|^2$, $g=  ax_1 (bx_0)^*$.
	
	Crucially, $\alpha \beta= |g|^2$, and using the results of \cite{losonczi} we find the eigenvalues to be:
	\beqn \label{w-wlasne-app}
	\lambda_k \left(\frakS_A(x_0)\right)&=& \alpha+\beta+2|g| \cos \frac{k \pi}{d}\\
	&=&  |a x_0|^2 + |b x_1|^2 + 2 |x_0x_1ab| \cos \frac{k \pi}{d} \nonumber
	\eeqn
	for $k=1,2,\dots, d-1$ and $\lambda_d=0$. 	

	We now need to find 
	\beqn
	\lambda_{\mathrm{max}}(\frakS_A):=\max_{x_0,k} \lambda_k (\frakS_A (x_0)), \quad x_0 \in [0,1].
	\eeqn

	First, let us consider the values on the boundary of the interval for $x_0$. We have that $\lambda(\frakS(x_0=0))=|b|^2=\sin^2(\theta/2)$ and 
	$\lambda(\frakS(x_0=1))=|a|^2=\cos^2(\theta/2)$.
	
	For $x_0\in (0,1)$ it is clear that we can restrict ourselves to the eigenvalues (\ref{w-wlasne}) for which the cosine is larger than zero.
	For a given such $k$ we have (recall that  $|x_0|^2+|x_1|^2=1$):
	\beqn
	\frac{\mathrm{d} \lambda_k (\frakS_A (x_0))}{\mathrm{d} |x_0|} &=& \\ && \hspace{-2cm} 2 (|a|^2-|b|^2) |x_0|+2|ab|\cos \left(\frac{k \pi}{d}\right) \frac{1-2|x_0|^2}{\sqrt{1-|x_0|^2}}.  \nonumber
	\eeqn
	Denoting $p:=|a|^2-|b|^2$ and $q_k:=|ab|\cos \left(\frac{k \pi}{d}\right)>0$ we equate this to zero to obtain:
	\beqn \label{zaleznosc}
	p|x_0|(1-|x_0|^2)=q_k(2|x_0|^2-1).
	\eeqn
	It follows that 
	\beqn\label{p}
	&&p>0\; (|a|>|b|) \implies |x_0|>\frac{1}{\sqrt{2}},\\
	&&p<0\; (|a|<|b|) \implies |x_0|<\frac{1}{\sqrt{2}}. \label{p2}
	\eeqn
	When $p=0$, i.e., $|a|=|b|$ we have $|x_0|=1/\sqrt{2}$. Squaring (\ref{zaleznosc}) we obtain a biquadratic equation on $|x_0|$:
	\beqn
	(p^2+4q_k^2)|x_0|^4-(p^2+4q_k^2)|x_0|^2+q_k^2=0.
	\eeqn
	Having in mind (\ref{p}) and (\ref{p2}) we then conclude that the optimal $x_0=\tilde{x}_0$ is given by:
	\beqn \displaystyle
	\tilde{x}_0=\eksp^{\uroj \xi}\sqrt{\frac{1}{2} \left( 1 +  \frac{p}{\sqrt{p^2+4q_k^2}} \right)}, \quad \xi \in \mathbb{R}
	\eeqn
	with the denotation as above. 
	
	This results in:
	\beqn
	\lambda_k (\frakS_A(\tilde{x}_0))= \frac{1}{2} \left( 1+\sqrt{p^2+4q_k^2}  \right)
	\eeqn
	and in turn
	\beqn
	\lambda_{\mathrm{max}}(\frakS_A)&=&\lambda_1 (\frakS_A(\tilde{x}_0))\\
	&=&\frac{1}{2} \left( 1+\sqrt{(|a|^2-|b|^2)^2+4|ab|^2 \cos^2 \left(\frac{ \pi}{d}  \right) }\right) \non
	&=& \frac{1}{2} \left( 1+
	\sqrt{1-\sin^2\theta \sin^2 \left(\frac{ \pi}{d}\right) }   \right).
	\nonumber
	\eeqn
	This is larger than both $\sin^2(\theta/2)$ and $\cos^2(\theta/2)$ obtained on the boundaries and
	it follows that the entanglement of $\calS_{2\times d}^{\theta}$ is given by
	\beqn
	\hspace{-0.5cm}
	E_{GM}(\calS_{2\times d}^{\theta})=\frac{1}{2} \left( 1-
	\sqrt{1-\sin^2\theta \sin^2 \left(\frac{ \pi}{d}\right) }   \right).
	\eeqn
	The nonzero value of the entanglement signifies that the subspace is indeed a CES.
	
	in the case $\theta=\pi/2$, i.e., $|a|=|b|=1/\sqrt{2}$,  we obviously have:
	\beqn
	E_{GM}(\calS_{2\times d}^{\pi/2})=\frac{1}{2}\left(1-\cos \frac{\pi}{d}\right)=\sin^2  \frac{\pi}{2d}.
	\eeqn
\end{proof}

\section{The case of three parties: reduction of the GES from Theorem 1 of \cite{upb-to-ges} to $\calS_{2\times d^2}^{\pi/2}$}\label{app-th1}
Here we consider a class of GESs considered recently in Theorem 1 of \cite{upb-to-ges}. It is defined as being orthogonal to the nonorthogonal unextendible product basis given by the following vectors ($\alpha \in \mathbb{C}$):
\beqn\label{vec-upb}
&&(1,\alpha^{\tilde{d}},\alpha^{2\tilde{d}},\dots, \alpha^{(d-1)\tilde{d}})_{A_1}\otimes (1,\alpha,\alpha^2,\dots,\alpha^{d^{N-1}-1})_{\aaa{2}{k}}\non
&&\hspace{+0.5cm}= \left(\sum_{i=0}^{d-1} \alpha^{i\tilde{d}} \ket{i}_{A_1}\right)\otimes  \left(\sum_{k=0}^{d^{N-1}-1} \alpha^k \ket{k}_{\aaa{2}{k}} \right) \non
&&\hspace{+0.5cm}= \sum_{i=0}^{d-1}\sum_{k=0}^{d^{N-1}-1} \alpha^{i\tilde{d}+k} \ket{i}_{A_1}\ket{k}_{\aaa{2}{k}} 
\eeqn
with
%
$\tilde{d} = d^{N-1}-d+1$.
%
%

In \cite{upb-to-ges} we have not given an explicit basis for the resulting GES but only showed that its dimension is $(d-1)^2$. We now provide such basis and provide an explanation on why the dimension does not scale with the number of parties $N$. 

By looking at the repeating monomials of $\alpha$ in the coordinates of the vectors (\ref{vec-upb}) after performing tensor multiplication, one can easily verifies that, regardless of the number of parties $N$, the vectors orthogonal to the vectors (\ref{vec-upb}), i.e., the vectors spanning the corresponding GES are:
\beqn \label{th1-ges}
&&\frac{1}{\sqrt{2}}\left(\ket{i}_{A_1}\ket{\tylda+k}_{\aaa{2}{k}}-\ket{i+1}_{A_1}\ket{k}_{\aaa{2}{k}}\right), \non&& \hspace{+0.5cm} i=0,1,\dots,d-2, \quad
k=0,1,\dots,d-2.
\eeqn
One sees that they form an orthonormal set.

Importantly, on qudits $A_2,\dots, A_{N-1}$ one has in fact a qubit subspace. This is because:
\beqn
&&\ket{\tylda+k}_{\aaa{2}{k}}=\ket{d-1}_{A_2}\cdots \ket{d-1}_{A_{N-1}}\ket{k+1}_{A_N},\non
&&\ket{k}_{\aaa{2}{k}}=\ket{0}_{A_2}\ket{0}_{A_3}\cdots \ket{0}_{A_{N-1}}\ket{k}_{A_N}.
\eeqn
Denoting further
\beqn
&&\ket{\bar{1}}:=\ket{d-1}^{\otimes (N-2)},\quad \ket{\bar{0}}:=\ket{0}^{\otimes (N-2)}.
\eeqn
and substituting $A_1 \rightarrow A$, $A_2\dots A_{N-1} \rightarrow B$, $A_N\rightarrow C$, we can write down the vectors from (\ref{th1-ges}) as follows:
\beqn \label{ges-th1-vecs}
&&\frac{1}{\sqrt{2}}\left(\ket{i}_A \ket{\bar{1}}_B\ket{k+1}_C-\ket{i+1}_A\ket{\bar{0}}_B\ket{k}_C\right), \non 
&&\hspace{1cm}i=0,1,\dots,d-2, \quad
k=0,1,\dots,d-2.
\eeqn
This means that the setup is in fact $ d\otimes 2 \otimes d$. 
Applying local unitaries, $U_c=\sum_i \outerp{d-1-i}{i}$ on $C$ and $\uroj \sigma_y^B$ on $B$,  and then swapping $A$ and $B$ we get a tripartite subspace  $\calS_{2\times d^2}^{\pi/2} $ of the class considered in the main text in Theorem \ref{ges-ggm}. 

It is worth noting that the dimension of the GES displays the same behavior in the leading term as the maximal dimension, which here is $d^2-1$  \cite{upb-to-ges}.

\section{A class of $PPT$ states for which the value of the SDP bound is lower than  $E_{GM}(\calS_{2\times d^2}^{\pi/2})$}\label{exemplary-state}
The (unnormalized) spanning vectors  of  $\calS_{2\times 3^2}^{\pi/2}$ are: $\ket{000}+\ket{111}$, $\ket{001}+\ket{112}$, $\ket{010}+\ket{121}$, $\ket{011}+\ket{122}$. 
Let $\calP_{\calS}^{3,3}$ denote the projection onto the GES.

Consider the following state:
\begin{widetext}
\beqn
\rho=\frac{1}{N}
\left(
\begin{array}{cccccccccccccccccc}
	a \alpha ^2 & 0 & 0 & 0 & 0 & 0 & 0 & 0 & 0 & 0 & 0 & 0 & 0 & a \alpha  \beta  & 0 & 0 & 0 & 0 \\
	0 & \frac{b+c}{2} & 0 & 0 & 0 & 0 & 0 & 0 & 0 & 0 & 0 & 0 & 0 & 0 & \frac{b-c}{2} & 0 & 0 & 0 \\
	0 & 0 & x & 0 & 0 & 0 & 0 & 0 & 0 & 0 & 0 & 0 & 0 & 0 & 0 & 0 & 0 & 0 \\
	0 & 0 & 0 & \frac{b+c}{2} & 0 & 0 & 0 & 0 & 0 & 0 & 0 & 0 & 0 & 0 & 0 & 0 & \frac{b-c}{2} & 0 \\
	0 & 0 & 0 & 0 & a \beta ^2 & 0 & 0 & 0 & 0 & 0 & 0 & 0 & 0 & 0 & 0 & 0 & 0 & a \alpha  \beta  \\
	0 & 0 & 0 & 0 & 0 & z & 0 & 0 & 0 & 0 & 0 & 0 & 0 & 0 & 0 & 0 & 0 & 0 \\
	0 & 0 & 0 & 0 & 0 & 0 & x & 0 & 0 & 0 & 0 & 0 & 0 & 0 & 0 & 0 & 0 & 0 \\
	0 & 0 & 0 & 0 & 0 & 0 & 0 & z & 0 & 0 & 0 & 0 & 0 & 0 & 0 & 0 & 0 & 0 \\
	0 & 0 & 0 & 0 & 0 & 0 & 0 & 0 & y & 0 & 0 & 0 & 0 & 0 & 0 & 0 & 0 & 0 \\
	0 & 0 & 0 & 0 & 0 & 0 & 0 & 0 & 0 & y & 0 & 0 & 0 & 0 & 0 & 0 & 0 & 0 \\
	0 & 0 & 0 & 0 & 0 & 0 & 0 & 0 & 0 & 0 & z & 0 & 0 & 0 & 0 & 0 & 0 & 0 \\
	0 & 0 & 0 & 0 & 0 & 0 & 0 & 0 & 0 & 0 & 0 & x & 0 & 0 & 0 & 0 & 0 & 0 \\
	0 & 0 & 0 & 0 & 0 & 0 & 0 & 0 & 0 & 0 & 0 & 0 & z & 0 & 0 & 0 & 0 & 0 \\
	a \alpha  \beta  & 0 & 0 & 0 & 0 & 0 & 0 & 0 & 0 & 0 & 0 & 0 & 0 & a \beta ^2 & 0 & 0 & 0 & 0 \\
	0 & \frac{b-c}{2} & 0 & 0 & 0 & 0 & 0 & 0 & 0 & 0 & 0 & 0 & 0 & 0 & \frac{b+c}{2} & 0 & 0 & 0 \\
	0 & 0 & 0 & 0 & 0 & 0 & 0 & 0 & 0 & 0 & 0 & 0 & 0 & 0 & 0 & x & 0 & 0 \\
	0 & 0 & 0 & \frac{b-c}{2} & 0 & 0 & 0 & 0 & 0 & 0 & 0 & 0 & 0 & 0 & 0 & 0 & \frac{b+c}{2} & 0 \\
	0 & 0 & 0 & 0 & a \alpha  \beta  & 0 & 0 & 0 & 0 & 0 & 0 & 0 & 0 & 0 & 0 & 0 & 0 & a \alpha ^2 \\
\end{array}
\right), 
\eeqn
\end{widetext}
where the normalization factor is $N=2(a+b+c+2 x+y+2 z)$.
The quantity of interest is:
\beq\displaystyle \label{quantity}
v:=\tr \left(\calP_{\calS}^{3,3} \rho\right) = \frac{a-2a\alpha  \sqrt{1-\alpha ^2}+2c +4 x+2y+4 z}{2(a+b+c+2 x+y+2 z)}.
\eeq
One checks that for the following values of the parameters
\beqn
a=\frac{9}{10}, b=\frac{14}{25}, c=\frac{7}{25}, x=\frac{1}{30},y=\frac{1}{14},z=\frac{1}{7},\alpha=\frac{7}{25}, \non
\eeqn
it holds $\rho^{\Gamma_A} \ge 0$, $\rho^{\Gamma_B} \ge 0$, $\rho^{\Gamma_C} \ge 0$ and
 the value of $v$ is
\beq
v=\frac{239371}{568000}\simeq  0.4214 < 3/7.
\eeq
\section{A basis for $\calQ_1^{N,d}$ -- the GES from Theorem 2 of \cite{upb-to-ges}}\label{baza-th2}
Here we present a basis for the  GES considered in Theorem 2 of \cite{upb-to-ges}. In the current paper this subspace is named $\calQ_1^{N,d}$ (Section \ref{q1}).

By the definition, $\calQ_1^{N,d}$ is the subspace orthogonal to the set of the following product vectors ($\alpha \in \mathbb{C}$)
\beqn
&&\left(1,\alpha^{p_1},\dots, \alpha^{(d-1)p_1}\right)_{A_1}\otimes \left(1,\alpha,\alpha^2, \dots\alpha^{d^{N-1}-1}\right)_{\aaa{2}{k}} \non
&&\hspace{+0.5cm}= \left(\sum_{i=0}^{d-1} \alpha^{ip_1} \ket{i} _{A_1}\right) \otimes  \left(\sum_{k=0}^{d^{N-1}-1} \alpha^k \ket{k}_{\aaa{2}{k}} \right) \non
&&\hspace{+0.5cm}= \sum_{i=0}^{d-1}\sum_{k=0}^{d^{N-1}-1} \alpha^{ip_1+k} \ket{i}_{A_1}\ket{k}_{\aaa{2}{k}} 
\label{th2-nupb}
\eeqn
with
\beqn
p_1 &\equiv&\sum_{m=2}^Nd^{N-m}= \sum_{m=0}^{N-2} d^m.
\eeqn
Notice that it holds
\beq (d-1)p_1=d^{N-1}-1. 
\eeq
As shown in \cite{upb-to-ges}, the dimension of this GES is equal to $d^N-(2d^{N-1}-1)$.

From (\ref{th2-nupb}) we infer that we must identify possible realizations of $ip_1 +m$ for each value of this expression. Since there will be more than one realization for the latter we see that the vectors in the GES can be put into different groups within which they are not orthogonal, while the vectors from different groups are.

In what follows we omit the superscripts denoting parties -- it is assumed that the first ket corresponds to $A_1$ and the second one to $\aaa{2}{k}=A_2 \dots A_N $. For clarity of exposition we also omit the normalization factors $1/\sqrt{2}$ and denote for convenience:
\beqn
\quad p_i:=ip_1, \quad i=0, 1, \dots, d-1.
\eeqn

In the components of the vectors (\ref{th2-nupb}) there are no repeating monomials $\alpha^{i}$ with $i=0,1, \dots, p_1-1$. 
Consideration of the monomials $\alpha^{p_1+k}$ with $ k=0,1,\dots, p_1-1$ gives rise to the following $p_1$ vectors in the GES:
\beqn
&&\ket{0}\ket{p_1+k}-\ket{1}\ket{p_0+k},\\
&&\hspace{+2cm}   k=0,1,\dots, p_1-1. \nonumber
\eeqn
Monomials $\alpha^{p_2+k}$  give rise to $p_2=2 p_1$ vectors (we present them in a ''cyclic'' form):
\beqn
&&\ket{0}\ket{p_2+k}-\ket{1}\ket{p_1+k},\non
&&\ket{1}\ket{p_1+k}-\ket{2}\ket{p_0+k},\\
&&\hspace{+2cm}   k=0,1,\dots, p_1-1. \nonumber
\eeqn
The vectors are constructed in a similar manner for $\alpha^{p_m+k}$ and for $m=d-2$
 we have $p_{d-2}=(d-2)p_1$ vectors:
\beqn
&&\ket{0}\ket{p_{d-2}+k}-\ket{1}\ket{p_{d-3}+k},\non
&&\ket{1}\ket{p_{d-3}+k}-\ket{2}\ket{p_{d-4}+k},\\
&&\vdots\non
&&\ket{d-3}\ket{p_1+k}-\ket{d-2}\ket{p_0+k},\non
&&\hspace{+2.5cm}   k=0,1,\dots, p_1-1.\nonumber
\eeqn
Further, monomials $\alpha^{p_{d-1}}$ give rise to $d-1$ vectors:
\beqn
&&\ket{0}\ket{p_{d-1}}-\ket{1}\ket{p_{d-2}},\non
&&\ket{1}\ket{p_{d-2}}-\ket{2}\ket{p_{d-3}},\\
&&\vdots\non
&&\ket{d-2}\ket{p_1}-\ket{d-1}\ket{p_0}.\nonumber
\eeqn

Now, let us consider powers of $\alpha$ larger than  $\alpha_{p_{d-1}}$. 
Monomials $\alpha^{p_{d-1}+k+1}$ give rise to $(d-2)p_1$ vectors:
\beqn
&&\ket{1}\ket{p_{d-2}+k+1}-\ket{2}\ket{p_{d-3}+k+1},\non
&&\ket{2}\ket{p_{d-3}+k+1}-\ket{3}\ket{p_{d-4}+k+1},\\
&& \vdots\non
&& \ket{d-2}\ket{p_1+k+1}-\ket{d-1}\ket{p_0+k+1},\non
&&\hspace{+3.7cm}   k=0,1,\dots, p_1-1. \nonumber
\eeqn
%
Monomials $\alpha^{p_d+k+1}$ give rise to $(d-3)p_1$ vectors:
\beqn
&&\ket{2}\ket{p_{d-2}+k+1}-\ket{3}\ket{p_{d-3}+k+1},\non
&&\ket{3}\ket{p_{d-3}+k+1}-\ket{4}\ket{p_{d-4}+k+1},\\
&& \vdots\non
&&\ket{d-2}\ket{p_2+k+1}-\ket{d-1}\ket{p_1+k+1},\non
&&\hspace{+3.7cm}   k=0,1,\dots, p_1-1.\nonumber
\eeqn
This construction ends for monomials
 $\alpha^{(2d-4)p_1+k+1}=\alpha^{p_{d-2}+p_{d-2}+k+1}$ giving rise to $p_1$ vectors:
\beqn
&&\ket{d-2}\ket{p_{d-2}+k+1}-\ket{d-1}\ket{p_{d-3}+k+1},\\
&&\hspace{+3.7cm}   k=0,1,\dots, p_1-1. \nonumber
\eeqn

Succinctly, these (unnormalized) vectors may be written as follows  ($k=0,1,\dots,p_1-1$):
\beqn
&&\ket{i}\ket{p_{m-i}+k}-\ket{i+1}\ket{p_{m-i-1}+k}, \\
&& \quad m=1,2,\dots, d-2, i=0,1,\dots,m-1, \non
&&\ket{i}\ket{p_{d-i-1}}-\ket{i+1}\ket{p_{d-i-2}}, \\
 && \quad i=0,1,\dots,d-2,\non
&&\ket{i}\ket{p_{m-i}+k+1}-\ket{i+1}\ket{p_{m-i-1}+k+1}, \\
&& \quad m=d-1,d,\dots, 2(d-2), i=m-(d-2),\dots, d-2.\nonumber
\eeqn
%

%
%

In the cyclic form that we have used,  two neighboring vectors have an overlap of $1/2$, while the rest are orthogonal within each group. This  observation is useful for the Gram--Schmidt procedure and we have the following.

\begin{lem}
Let there be given a subspace $\calS$ spanned by the vectors:
\beqn
\ket{\psi_k}=\frac{1}{\sqrt{2}}\left(\ket{\gamma_{k-1}}-\ket{\gamma_k}\right),\quad k=1,\dots,S,
\eeqn
with orthonormal $\ket{\gamma_i}$'s.

The following vectors are an orthonormal basis for $\calS$:
\beqn \label{orto-baza}
\hspace{-0.5cm}
\ket{\varphi_m}&=&\frac{1}{\sqrt{m(m+1)}}\left(\sum_{i=0}^{m-1}\ket{\gamma_i}-m\ket{\gamma_m}\right)\\
&=& \frac{1}{\sqrt{m(m+1)}}\sum_{i=0}^{m}\lambda_i\ket{\gamma_i}, \;\; m=1,\dots, S,
\eeqn
where $\lambda_0=\lambda_1=\dots=\lambda_{m-1}=1$ and $\lambda_m=-m$.
\end{lem}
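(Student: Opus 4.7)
The strategy splits into three standard pieces: (a) verify that each $\ket{\varphi_m}$ lies in $\calS$, (b) check orthonormality of the family $\{\ket{\varphi_m}\}_{m=1}^{S}$, and (c) match the cardinality against $\dim\calS$.

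For (a), the plan is to write down a telescoping combination of the spanning vectors and match coefficients. Namely, I would compute
\begin{equation}
\sum_{k=1}^{m} c_k \ket{\psi_k} = \frac{1}{\sqrt{2}}\Big( c_1 \ket{\gamma_0} + \sum_{k=1}^{m-1}(c_{k+1}-c_k)\ket{\gamma_k} - c_m \ket{\gamma_m}\Big)
\end{equation}
and demand that the right-hand side be proportional to $\sum_{i=0}^{m-1}\ket{\gamma_i}-m\ket{\gamma_m}$. This forces $c_k=k$ for $k=1,\dots,m$ (the boundary condition $c_m=m$ is automatically consistent), which yields the explicit formula $\ket{\varphi_m}=\tfrac{\sqrt{2}}{\sqrt{m(m+1)}}\sum_{k=1}^{m}k\ket{\psi_k}$ and establishes $\ket{\varphi_m}\in\calS$.

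For (b), everything reduces to bookkeeping using $\langle\gamma_i|\gamma_j\rangle=\delta_{ij}$. The normalization is immediate: $\|\varphi_m\|^2=\tfrac{1}{m(m+1)}(m\cdot 1+m^2)=1$. For the cross terms with $m<n$, note that $\ket{\gamma_m}$ appears in the unnormalized sum $\sum_{j=0}^{n-1}\ket{\gamma_j}$ (since $m\leq n-1$), and one obtains
\begin{equation}
\langle\varphi_m|\varphi_n\rangle = \frac{1}{\sqrt{m(m+1)n(n+1)}}\bigl(m - n\cdot 0 - m\cdot 1 + mn\cdot 0\bigr)=0,
\end{equation}
where the four contributions correspond, in order, to the inner products $\langle\sum_{i<m}\gamma_i|\sum_{j<n}\gamma_j\rangle$, $\langle\sum_{i<m}\gamma_i|n\gamma_n\rangle$, $\langle m\gamma_m|\sum_{j<n}\gamma_j\rangle$, and $\langle m\gamma_m|n\gamma_n\rangle$. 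The only nontrivial cancellation to track is that the first term gives $m$ (there are $m$ common indices), and the third gives $-m$ (since $m\in\{0,\dots,n-1\}$).

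For (c), I would observe that the $\ket{\psi_k}$'s are linearly independent: any relation $\sum_k a_k \ket{\psi_k}=0$ produces, upon expansion in the orthonormal basis $\{\ket{\gamma_i}\}$, the conditions $a_1=0$, $a_{k+1}=a_k$, hence all $a_k=0$. Thus $\dim\calS=S$, and having exhibited $S$ orthonormal vectors in $\calS$ suffices to conclude that $\{\ket{\varphi_m}\}_{m=1}^{S}$ is an orthonormal basis. There is no real obstacle here — the only place where one must be a little careful is keeping track of which $\gamma$'s appear in both sums when computing $\langle\varphi_m|\varphi_n\rangle$, but this is dispatched by the observation $m\leq n-1$ used above.
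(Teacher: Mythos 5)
Your proof is correct and takes essentially the same route as the paper: your closed form $\ket{\varphi_m}\propto\sum_{k=1}^{m}k\ket{\psi_k}$ is exactly the unrolled version of the paper's recursion $\ket{\varphi_k}=\ket{\varphi_{k-1}}+k\ket{\psi_k}$. You simply carry out in full the membership, orthonormality, and dimension checks that the paper leaves as "easy to see."
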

\begin{proof}
We present the proof assuming the unnormalized case for simplicity.
Set 
\beq\ket{\varphi_1}=\ket{\psi_1}.\eeq
 The remaining vectors are formed  as follows
 \beqn
 &&\ket{\varphi_2}=\ket{\varphi_1}+2\ket{\psi_2},\non
&&\ket{\varphi_3}=\ket{\varphi_2}+3\ket{\psi_3},\\
&&\vdots \nonumber
\eeqn
that is
\beqn
\ket{\varphi_k}=\ket{\varphi_{k-1}}+k\ket{\psi_k}.
 \eeqn
 It is easy to see that the above procedure gives an orthonormal basis (\ref{orto-baza}).
\end{proof}

Concluding,  we observe that:
\beqn
\ket{p_i+k}_{\aaa{2}{N}}&=& \ket{i+k_{N-2}}_{A_2}\cdots\ket{i+k_1}_{A_{N-1}}\ket{i+k_0}_{A_N}\non
 &&\hspace{+0.5cm}= \bigotimes_{m=2}^N \ket{i+k_{N-m}}_{A_m}
\eeqn
with $k=\sum_{l=0}^{N-2}k_l d^l$ 
and the constraint that 
$k\le p_1-1$.

\section{A basis for $\calQ_2^{N,d}$ -- the GES from Theorem 2 of \cite{upb-to-ges}}\label{baza-th3}

In this appendix we consider the GES from  Theorem 3 of \cite{upb-to-ges} and present a non--orthogonal basis for it. In the main text this GES is named  $\calQ_2^{N,d}$ (Section \ref{q2}).  A basis for the  qubit case has been given in the cited paper as an example and we now solve for the general case. In fact, the basis we obtain is a simple generalization of Eq. (64) from \cite{upb-to-ges}.

The GES under inspection has been defined as the subspace orthogonal to the set of the following product vectors ($\alpha \in \mathbb{C}$):
\beqn
&&\left(1,P_1(\alpha),\dots, P_{d-1}(\alpha)\right)_{A_1}\otimes \left(1,\alpha,\alpha^2, \dots\alpha^{d^{N-1}-1}\right)_{\aaa{2}{k}}, \non
&&\hspace{+0.5cm} = \left(\sum_{k=0}^{d-1} P_k(\alpha) \ket{k} _{A_1} \right)\otimes  \left(\sum_{m=0}^{d^{N-1}-1} \alpha^m \ket{m}_{\aaa{2}{k}}  \right)\non
&&\hspace{+0.5cm} = \sum_{k=0}^{d-1}\sum_{m=0}^{d^{N-1}-1} P_k(\alpha)\alpha^{m} \ket{k}_{A_1}\ket{m})_{\aaa{2}{k}} ,
\label{th3-nupb}
\eeqn
where
\beqn
P_k(\alpha)=\sum_{m=2}^N \alpha^{kd^{N-m}}.
\eeqn

It has been observed in \cite{upb-to-ges} that only the polynomials $\alpha^m P_k(\alpha)$ with $m=0,1,\dots, d^{N-1}-d^{N-2}-1$ are linear combinations of other components of  (\ref{th3-nupb}), more precisely they are sums of the monomials which are the first $d^{N-1}-1$ components of these vectors. The number of these linearly dependent polynomials  gives us thus the dimension of the GES and we easily construct the vectors spanning the GES using the above observations:
\beqn
&&\ket{0}_{A_1}\left( \sum_{f=2}^{N} \ket{kd^{N-f}+m}_{\aaa{2}{k}}   \right)  - \ket{k}_{A_1}\ket{m}_{\aaa{2}{k}}, \\
&& \hspace{+0.5cm} k=1,\dots, d-1, \quad m=0,1,\dots, d^{N-1}-d^{N-2}-1.\nonumber
\eeqn

Clearly, these vectors do not form an orthogonal set, which is the main obstacle in analytical computation of the measure for this subspace. That is, after the orthogonalization their form is very involved and seem not to offer an easy insight into the structure of the GES.

\end{document}